\documentclass[]{rsos}

\newtheorem{theorem}{\bf Theorem}[section]

\newtheorem{definition}{\bf Definition}[section]
\newtheorem{assumption}{\bf Assumption}[section]

\usepackage[numbers]{natbib}
\usepackage{physics}
\usepackage{bbm}
\usepackage[ruled,vlined,linesnumbered]{algorithm2e}
\newcommand{\bs}[1]{\boldsymbol{#1}}
\newcommand\independent{\protect\mathpalette{\protect\independenT}{\perp}}
\def\independenT#1#2{\mathrel{\rlap{$#1#2$}\mkern2mu{#1#2}}}

\DeclareMathOperator*{\argmin}{arg\,min}
\SetKwComment{Comment}{// }{} 
\SetKw{KwDownTo}{downto}
\usepackage{multirow}
\usepackage{booktabs}
\usepackage{subcaption}
\usepackage[final,protrusion=true,expansion=true]{microtype} 

\begin{document}

\title{Data-driven sparse identification of governing PDEs via knockoff filters and multi-criteria trade-offs}

\author{
Pongpisit Thanasutives$^{1}$, Naichang Ke$^{2}$ and Yoshinobu Kawahara$^{1, 2}$}

\address{$^{1}$RIKEN Center for Advanced Intelligence Project (AIP)\\
$^{2}$The University of Osaka}

\subject{Computer science, Mathematics}

\keywords{sparse identification of
nonlinear dynamics, model selection, knockoff filters, multi-criteria decision-making}

\corres{Pongpisit Thanasutives\\
\email{pongpisit.thanasutives@riken.jp}}

\begin{abstract}
	We propose KO-PDE-IDENT, a data-driven framework for identifying parsimonious partial differential equations (PDEs) with false discovery rate (FDR) control. PDE discovery from noisy observations is often hindered by extreme multicollinearity among candidate terms, which causes typical sparse-regression methods to select spurious terms. To address this problem, KO-PDE-IDENT initially mines a support set of potential candidate terms via model-X knockoff filters with finite-sample FDR control, then refines and ranks the surviving PDE alternatives. The framework integrates three components. First, knockoff feature statistics are constructed by coupling $\ell_{0}$-constrained adaptive best-subset selection with SHapley Additive exPlanations (SHAP), yielding an effective and computationally efficient difference statistic. Second, a recursive feature elimination (RFE) procedure removes terms whose marginal contributions are dispensable and assesses statistical necessity through knockoff-perturbed hypothesis testing. Third, the final model selection is formulated as a multi-criteria decision-making (MCDM) problem, where the optimal governing equation is the alternative that best balances a wide range of criteria such as predictive accuracy, model complexity and coefficient uncertainty. We evaluate KO-PDE-IDENT on five canonical PDEs under severe noise corruption. Empirical results show that our framework can exactly recover the true PDE structure, eliminating false discoveries while retaining all true underlying terms, with low coefficient estimation error.
\end{abstract}

\begin{fmtext}
\end{fmtext}


\maketitle

\section{Introduction}

Sparse identification of nonlinear dynamics (SINDy) has emerged as a data-driven paradigm for discovering governing equations---commonly in the form of partial differential equations (PDEs) identified directly from observed data~\cite{brunton2016sindy,rudy2017data,schaeffer2017learning}, which is of interest in this paper. As an alternative to deriving physics solely from first principles, this approach harnesses machine learning alongside rich, high-fidelity observational data to uncover hidden physical laws in complex dynamical systems where traditional derivations are intractable. The paradigm has since been extended through coordinate-discovering autoencoders~\cite{champion2019datadriven}, parallel implicit variants~\cite{kaheman2020sindypi}, weak formulations that mitigate noise amplification~\cite{reinbold2020using,messenger2021weak}, and bootstrap-based ensemble methods~\cite{fasel2022ensemble}.

Despite remarkable advances, discovering parsimonious and physically accurate models from noisy, high-dimensional data remains challenging. Typical PDE identification frameworks construct large candidate libraries comprising polynomial interactions and higher-order spatial derivatives. These libraries inherently exhibit extreme multicollinearity, rendering sparse regression algorithms susceptible to observational noise and prone to selecting spurious terms. Heuristic thresholding methods such as sequential thresholded least-squares~\cite{rudy2017data} and relaxed-regularised regression~\cite{zheng2019sr3,champion2020unified}, together with cross-validation and information criteria~\cite{mangan2017model,chen2008extended}, are widely employed to promote sparsity, yet they lack guiding statistical guarantees against false discoveries. Bayesian formulations with sparsifying priors~\cite{hirsh2022sparsifying,north2022bayesian,galioto2020bayesian} provide posterior uncertainty over coefficients, and deep symbolic-numeric architectures~\cite{long2018pdenet,long2019pdenet2} learn flexible structural representations end-to-end; however, neither delivers finite-sample control of the false discovery rate (FDR). The recent application of stability selection to PDE learning~\cite{maddu2022stability} bounds the expected number of false positives but, as in classical stability selection~\cite{meinshausen2010stability}, does not control the FDR itself. Consequently, practitioners have no mathematically rigorous assurance that the discovered governing equation captures genuine physical mechanisms rather than artefacts of noise. Controlling FDR, on its own, cannot guarantee recovery of the true governing PDE, as this also depends on the candidate library's expressivity (whether we have an overcomplete library) and the signal-to-noise ratio (i.e., sufficient observational data quality). It would, however, provide the first explicit, finite-sample bound on the proportion of false discoveries, transforming PDE identification from a heuristic search/optimisation into a principled statistical inference procedure.

The fundamental limitation of existing approaches is that they attempt to identify the governing PDE mainly via optimisation lens: minimise some loss function over the full candidate library and hope that the selected terms are the correct ones. When the library is highly multicollinear---as it inevitably is for overcomplete polynomial-derivative bases---this strategy conflates weak true signals with correlated spurious terms, and no regularisation tuning can provide a formal guarantee on the rate of false discoveries. Inspired by this research gap, we argue that a more systematic approach is needed: one that first extracts an FDR-controlled set of potential candidate terms---analogous to mining gold from ore---before applying model selection to the refined set. This systematic approach would prevent the search from being lost in the vast space of possible candidate terms and make the subsequent model selection more efficient as well as more reliable.

To this end, we propose \textsc{KO-PDE-IDENT} (\underline{K}nock\underline{O}ff-based \underline{PDE} \underline{IDENT}ification), a PDE identification framework built on statistical inference. At its core is the application of model-X knockoff filters~\cite{candes2018modelx,barber2015controlling} to PDE discovery. The intuition is straightforward: for each candidate term in the library, we generate a synthetic ``knockoff'' copy that is statistically indistinguishable from the original in its correlation with every other term, yet conditionally independent of the response given the originals. A knockoff therefore serves as a negative control---a feature that, by construction, has no genuine relationship with the governing dynamics. By comparing the importance score of each original term against that of its knockoff, the filter ranks all candidates and chooses a data-driven threshold that bounds the expected proportion of false discoveries at any specified level, providing finite-sample FDR guarantees~\cite{benjamini1995controlling}. To overcome the randomness inherent in a single knockoff realisation, we aggregate multiple realisations via the e-value-based Benjamini--Hochberg (e-BH) procedure of Ren and Barber~\cite{ren2024derandomised,wang2022evalues}, which derandomises the filter while preserving FDR control. In this paper, we introduce the SHAP-based difference statistic (SHAP-DS), a new feature statistic that couples an $\ell_{0}$-constrained adaptive best-subset selection (\texttt{abess}) solver~\cite{zhu2020polynomial,zhu2022abess} with SHapley Additive exPlanations~\cite{lundberg2017unified,lundberg2020trees}. For constructing knockoffs, two covariance estimators---Graphical Lasso with cross-validated regularisation and the Ledoit--Wolf estimator---are run simultaneously, and the estimator yielding the lower information-criterion score is preferred.

In the presence of extreme multicollinearity, FDR control alone may permit a small fraction of highly correlated spurious terms to persist within the controlled set. To further eliminate these false discoveries, we introduce a recursive feature elimination (RFE) procedure operating in the feature-attribution space. Rather than relying on regression coefficients, our SHAP-based selection removes terms whose marginal contributions are dispensable, then validates the statistical necessity of each retained term via a knockoff-perturbed hypothesis test that compares its penalised least-squares residual against that obtained when its conditionally independent knockoff copy is substituted in. This hypothesis test exploits the conditional-independence properties of model-X knockoffs.

Even with a small, refined FDR-controlled support set, selecting the single ``best'' governing equation remains non-trivial. Different model selection methods~\cite{mangan2017model,bozdogan2000akaike,thanasutives2024ubic} impose different trade-offs between model complexity, predictive accuracy, and coefficient uncertainty, so relying on a single information criterion (e.g., AIC~\cite{akaike1974new,mangan2017model} or BIC~\cite{schwarz1978bic}) is insufficient. We therefore formulate the final selection as a multi-criteria decision-making (MCDM) problem, enabling an optimal balance across a wider range of criteria---an approach previously unexplored in the data-driven PDE discovery literature. More specifically, best-subset selection enumerates all PDE alternatives over the reduced support, and each is assessed against five distinct types of criteria, comprising six criteria altogether (since model complexity is captured by two complementary measures): distributional predictive accuracy via conformal prediction~\cite{vovk2005algorithmic,lei2018distribution,barber2021predictive,kim2020predictive}, model complexity (a structural-complexity score and the robust information complexity criterion RICOMP-MM~\cite{bozdogan2000akaike,guney2021robust}), PDE coefficient uncertainty via recentred influence function (RIF) regression~\cite{firpo2009unconditional}, miscalibration assessed through the Murphy--Ehm decomposition~\cite{murphy1973vector,ehm2016quantiles,gneiting2007strictly}, and a coverage-width criterion~\cite{khosravi2010cwc} that combines interval sharpness with empirical coverage. An ensemble of five MCDM methods---TOPSIS~\cite{hwang1981multiple}, VIKOR~\cite{opricovic2004compromise}, COMET~\cite{salabun2015characteristic}, PROMETHEE-II~\cite{brans1985preference} and CoCoSo~\cite{yazdani2019cocoso}---is aggregated, e.g., via mean normalised preferences, to identify the optimal governing PDE.

To the best of our knowledge, KO-PDE-IDENT is the first framework to apply knockoff filtering and MCDM to data-driven PDE discovery. We validate \textsc{KO-PDE-IDENT} through empirical experiments on canonical PDE systems of varying dimensionality, structural complexity and noise intensity: the 1D Burgers, Korteweg--de Vries (KdV), and Kuramoto--Sivashinsky (KS) equations under 50\% additive noise, a 2D Reaction--Diffusion (RD) system under 10\% noise, and a 3D Gray--Scott (GS) system under 0.1\% noise. Across all five benchmarks, the framework attains an empirical FDR (eFDR) of zero and an empirical power (ePOWER) of one, recovering the true PDE structure with mean coefficient errors below 5\% in 1D and below 1\% in the multi-component (multiple target variables) multi-dimensional systems, consistently outperforming the notable baselines such as STLSQ~\cite{brunton2016sindy}, SR3~\cite{zheng2019sr3} and SSR~\cite{boninsegna2018sparse}. The finite-sample FDR guarantee is provided by the knockoff/e-BH screening stage under the stated model-X assumptions; the subsequent RFE and MCDM serve as empirical refinement and multi-criteria model-selection stages, whose performance is verified through these benchmarks rather than by additional formal guarantees. Through the combination of a rigorous screening stage with the principled empirical refinement that follows, KO-PDE-IDENT advances data-driven PDE discovery from an exercise of loss optimisation and hyperparameter tuning to one in which the candidate set entering model selection carries an explicit, theoretically guaranteed bound on its false discovery proportion---bringing the rigour of high-dimensional inference to scientific discovery, a research domain where it has been conspicuously absent.

\paragraph{Contributions.} Our main contributions are summarised as follows.
\begin{itemize}
	\item \textbf{FDR-controlled PDE discovery via knockoff filters.} We provide, to our knowledge, the first application of the model-X knockoff filter to data-driven PDE identification, equipping the discovery process with a finite-sample FDR guarantee through e-BH-derandomised aggregation of multiple knockoff realisations.
	\item \textbf{Robust feature statistics via abess and SHAP.} We introduce the SHAP-DS feature statistic, which combines $\ell_{0}$-constrained best-subset regression with SHAP attributions to produce flip-sign-compliant statistics at lower computational cost than SWAP-type feature statistics~\cite{gimenez2019knockoffs} while remaining effective against spurious terms (false discoveries).
	\item \textbf{RFE with SHAP selection and knockoff perturbations.} We design an RFE procedure in feature-attribution space that removes terms whose marginal contributions are dispensable, and then validates each retained term via a one-sided Wilcoxon signed-rank test against its conditionally independent knockoff copy.
	\item \textbf{Multi-criteria decision-making for model selection.} We reformulate the final equation choice as an MCDM problem over a best-subset-enumerated candidate set, scoring each candidate on (distributional) conformalised predictive accuracy, structural and RICOMP-MM complexity, RIF-based coefficient uncertainty, Murphy--Ehm miscalibration and a coverage-width criterion, and aggregating the rankings of TOPSIS, VIKOR, COMET, PROMETHEE-II and CoCoSo into an ultimate robust decision on the optimal governing PDE.
\end{itemize}

\section{Methodology}

This section describes how \textsc{KO-PDE-IDENT} identifies the governing equation from noisy observations in three stages. Section~\ref{sec:sindy} casts PDE identification as sparse regression over a weak-form candidate library. Section~\ref{sec:ko} then constructs FDR-controlled support sets via model-X knockoff filters with a SHAP-based feature statistic, and refines them through RFE. Section~\ref{sec:mcdm} performs the final equation selection through a multi-criteria ranking of PDE alternatives, each of which is represented as the best subset. A summary of the principal notation used throughout the paper is provided in Appendix~\ref{app:notation}.

\subsection{Sparse identification of governing PDEs}\label{sec:sindy}
We aim to discover governing PDEs parameterised as follows:

\begin{equation}
    u_{t} = \mathcal{L}(u, \partial_{x}u, \partial_{xx}u, \dots; \bs{\mu}) = \sum_{k=1}^{c}\mathcal{L}_{k}(u, \partial_{x}u, \partial_{xx}u, \dots)\,\mu_{k}.
\label{eq:pde-formulation}
\end{equation}

\noindent We assume the governing equation is parsimonious, in that it is constructed from a small number of important terms whose coefficients are collected in $\bs{\mu}$. The small integer $c$ denotes the true support size. $\mathcal{L}$ is an unknown nonlinear operator that depends on spatial derivatives of the state variable $u$, e.g.,\ $\partial_{x}u$, $\partial_{xx}u$, each of which can be computed, on a spatiotemporal grid, through numerical differentiation.

We formulate the system of linear equations corresponding to~\eqref{eq:pde-formulation}, building an overcomplete candidate library based on the state-variable data $\bs{\mathrm{U}}$, its derivatives and their (nonlinear) interactions as follows:

\begin{equation}
    \bs{\mathrm{U}}_{t} = \bs{\Phi}\bs{\xi} = 
    \begin{pmatrix}
    \vline & \vline & \vline & \vline & \vline\\
    \bs{\phi_{1}} & \cdots & \bs{\phi_{j}} & \cdots & \bs{\phi_{p}}\\
    \vline & \vline & \vline & \vline & \vline\\
    \end{pmatrix}\bs{\xi}
    = \sum_{j=1}^{p} \xi_{j}\bs{\phi_{j}}.
\label{eq:linear-system}
\end{equation}

\noindent Here $\bs{\xi} \in \mathbb{R}^{p}$ is the sparse coefficient vector that extends $\bs{\mu}$ to the $p$-dimensional library: $\xi_{j} = \mu_{k}$ if $j$ corresponds to the $k$-th active term in~\eqref{eq:pde-formulation}, and $\xi_{j} = 0$ otherwise, so $\norm{\bs{\xi}}_{0} = c \ll p$. Recovering exactly the active support of $\bs{\xi}$ together with its nonzero entries thus constitutes successful identification. Throughout the knockoff-based selection that follows, we work under the linear-model assumption $\bs{\mathrm{U}}_{t} = \bs{\Phi}\bs{\xi} + \bs{\eta}$, where $\bs{\eta}$ is Gaussian noise. A common candidate library is, for example, given by

\begin{equation*}
    \bs{\Phi}(\bs{\mathrm{U}}) = 
    \begin{pmatrix}
    \vline & \vline & \vline & \vline & \vline & \vline & \vline\\
    \bs{1} & \bs{\mathrm{U}} & \bs{\mathrm{U}}^{2} & \cdots & \bs{\mathrm{U}}_{x} & \bs{\mathrm{U}}\bs{\mathrm{U}}_{x} & \cdots\\
    \vline & \vline & \vline & \vline & \vline & \vline & \vline\\
    \end{pmatrix},
\end{equation*}

\noindent which consists of candidate terms based on polynomial interactions between $\bs{\mathrm{U}}$ and its spatial derivatives. However, this naive polynomial library is known to yield noise-intolerant PDE discovery~\cite{rudy2017data} and also incurs computational inefficiency at a large number of samples (i.e., the number of rows in $\bs{\Phi} \in \mathbb{R}^{n \times p}$). In this work, we instead employ a weak (integral) formulation of the PDE~\cite{reinbold2020using} to obtain noise-robust representations of the candidate terms (and of the response $\bs{\mathrm{U}}_{t}$) as follows:

\begin{equation}
    \bs{\phi^{i}_{j}} = \int_{\Omega_{i}} wq_{j} d\Omega,\, i = 1, \dots, n_{\Omega};
\label{eq:weak-form}
\end{equation}

\noindent where $q_{j}$ denotes a candidate function, for example, the nonlinear advection $u\partial_{x}u$. $\Omega_{i}$ is the $i$-th subdomain, and $w$ is a smooth weight function (typically vanishing along the boundary $\partial\Omega_{i}$). Integration by parts is then used to transfer the computation of spatial derivatives onto the smooth, noiseless $w$ instead of the noisy field $u$, thereby suppressing the noise amplification associated with direct numerical differentiation. We use the weak formulation of Reinbold et al.~\cite{reinbold2020using} as implemented in the PySINDy library~\cite{kaptanoglu2022pysindy}. Note that denoising is also applied to the observed state data $\bs{\mathrm{U}}$ prior to library construction, which further enhances noise robustness; we apply the block-matching and 3D filtering (BM3D) algorithm of Dabov et al.~\cite{dabov2007bm3d} for one-dimensional PDEs, whose space--time fields are image-like and well suited to two-dimensional image denoising, and one-dimensional Savitzky--Golay filters otherwise.

\subsection{Knockoff filters}\label{sec:ko}
Supposing the overcomplete candidate library $\bs{\Phi}$ has been constructed, we turn our attention to identifying a small subset of predictive terms with the FDR controlled and, ideally, high statistical/selection power. We seek a compact yet overcomplete support---a subset of candidate terms that contains every genuinely relevant term of the unknown governing equation---through model selection procedures with FDR control such as knockoff filters and selective inference~\cite{lee2016exact}. Writing $[p] = \{1, \dots, p\}$ for the index set of the $p$ candidate terms and $\#\{\cdot\}$ for the cardinality of a set, and denoting by $\hat{\mathcal{S}} \subseteq [p]$ the selected support set, the FDR is defined as the expected proportion of falsely selected candidates,

\begin{equation}
    \textrm{FDR} = \mathbb{E}\left[\textrm{FDP}\right] = \mathbb{E}\left[ \frac{\#\left\{ j \in \hat{\mathcal{S}}: \xi_{j} = 0 \right\}}{\#\{j \in \hat{\mathcal{S}}\}} \right];
\label{eq:fdr}
\end{equation}

\noindent where $\hat{\mathcal{S}}$ is the set of filtered candidates, i.e., the support set. $j \in \hat{\mathcal{S}}$ and $\xi_{j} = 0$ together indicate that the $j$-th candidate term is selected but it does not appear in the true governing equation. For convenience, we use $\abs{\hat{\mathcal{S}}}$ and $\#\{j \in \hat{\mathcal{S}}\}$ interchangeably to express the cardinality of the set. The expectation in~\eqref{eq:fdr} is taken over the randomness of the knockoff construction and the data. To evaluate the experimental results, we report for each benchmark instance the \emph{empirical FDR} (eFDR): the realised false discovery proportion (FDP) of the single support set $\hat{\mathcal{S}}$ returned on that instance, that is, the FDP in~\eqref{eq:fdr} evaluated without the expectation. The eFDR is therefore a realised FDP for one instance, not an estimate of the population quantity~\eqref{eq:fdr}; in particular, a realised eFDR may exceed the target level $q$ even when the procedure controls $\textrm{FDR} \leqslant q$, because the guarantee bounds the expected, not realised, FDP. The quality of $\hat{\mathcal{S}}$ can also be evaluated by its statistical power:

\begin{equation}
\textrm{POWER} = \mathbb{E}\left[ \frac{\#\left\{ j \in \hat{\mathcal{S}}: \xi_{j} \neq 0 \right\}}{\#\{j \in [p]: \xi_{j} \neq 0\}} \right],
\label{eq:power}
\end{equation}

\noindent which represents the expected proportion of the $c$ truly relevant terms recovered in $\hat{\mathcal{S}}$ (the expected true-positive rate). As with the FDR, we report for each benchmark instance the \emph{empirical POWER} (ePOWER): the realised true-positive rate of the single support set $\hat{\mathcal{S}}$, that is, the bracketed ratio in~\eqref{eq:power} evaluated without the expectation. POWER in~\eqref{eq:power} is thus the expectation of ePOWER over the randomness of the knockoff construction and the data. Both denominators are well-defined: $c \geqslant 1$ by definition (the parsimonious governing equation~\eqref{eq:pde-formulation} contains at least one term), and $\abs{\hat{\mathcal{S}}} \geqslant 1$ (otherwise we would discover nothing) throughout this paper. We call $\hat{\mathcal{S}}$ overcomplete if it contains every genuinely relevant term, i.e., $\mathcal{H}^{*} \subseteq \hat{\mathcal{S}}$, where $\mathcal{H}^{*} = \{j \in [p] : \xi_{j} \neq 0\} = [p] \setminus \mathcal{H}_{0}$ is the true support set of cardinality $c$ (the complement of the null set $\mathcal{H}_{0}$). The overcompleteness property that we prefer to hold throughout this paper (Definition~\ref{def:overcomplete}) is that the selected support is overcomplete in this sense---intuitively, that we admit no false negatives, possibly at the cost of a few correlated false positives that subsequent stages will prune. For a single instance this is equivalent to $\textrm{ePOWER} = 1$. We stress that overcompleteness is \emph{not} required for the finite-sample FDR guarantee of the knockoff/e-BH screening stage: that guarantee holds under the model-X assumptions regardless of whether all true terms are retained. Overcompleteness is instead an empirical property that we prefer to hold across our benchmarks; nonetheless, in genuine discovery problems, where the governing equation is unknown, it cannot be confirmed in general.

\subsubsection{Constructing knockoffs for FDR control}
Because the candidate terms are represented through weak-form integrals over a collection of randomly sampled subdomains $\Omega_{i}$ (see~\eqref{eq:weak-form}), each row of the resulting candidate matrix $\bs{\Phi}$ corresponds to one such subdomain, i.e., $n$ becomes $n_{\Omega}$. This randomness justifies treating the candidate terms as random variables rather than fixed-X designs. We therefore use model-X knockoffs~\cite{candes2018modelx}---which assume a known (or estimable) conditional distribution of the features---to build the computationally efficient copies $\bs{\tilde{\Phi}}$ for the FDR-controlled selection procedure, as defined below.

\begin{definition}[Model-X knockoffs]
Given the family of random variables $\bs{\phi} = (\phi_{1}, \dots, \phi_{p})$, the MX knockoffs are built with the following properties.

\begin{enumerate}
    \item Pairwise exchangeability: $(\bs{\phi}, \bs{\tilde{\phi}})_{\mathrm{swap}(S)} \stackrel{\textrm{d}}{=} (\bs{\phi}, \bs{\tilde{\phi}})$. The joint distribution of $\bs{\phi}$ and its knockoffs $\bs{\tilde{\phi}}$ remains unchanged when any subset of variables is swapped with its corresponding knockoffs. For example, $(\bs{\phi}, \bs{\tilde{\phi}}) = (\phi_{1}, \phi_{2}, \phi_{3}, \tilde{\phi}_{1}, \tilde{\phi}_{2}, \tilde{\phi}_{3})$ follows the same distribution as $(\bs{\phi}, \bs{\tilde{\phi}})_{\mathrm{swap}(\{1, 3\})} = (\tilde{\phi}_{1}, \phi_{2}, \tilde{\phi}_{3}, \phi_{1}, \tilde{\phi}_{2}, \phi_{3})$. This property indicates that the knockoffs mimic the distribution of the original variables.
    \item Conditional independence: $\bs{\tilde{\phi}} \independent \bs{y} \mid \bs{\phi}$, which means that $\bs{\tilde{\phi}}$ is independent of (or irrelevant to) the response $\bs{y}$ given $\bs{\phi}$. This holds by construction in the MX setting: $\bs{\tilde{\phi}}$ is sampled from a known conditional distribution depending only on $\bs{\phi}$, so the response $\bs{y}$ does not enter the construction of the conditional distribution $\bs{\tilde{\phi}} \mid \bs{\phi}$.
\end{enumerate}

\noindent In this paper, $\bs{y}$ denotes the response vector on the left-hand side of the system of linear equations~\eqref{eq:linear-system}: typically the time-derivative vector $\bs{\mathrm{U}}_{t}$, or, in the weak formulation~\eqref{eq:weak-form} adopted in this paper, the weak-form integral of $u_{t}$ against the test function obtained by integration by parts. A variable $\phi_{j}$ is said to be \emph{null} if and only if $\bs{y}$ is independent of $\phi_{j}$ conditionally on the other variables $\bs{\phi}_{-j} = \{\phi_{1}, \dots, \phi_{p}\} \setminus \{\phi_{j}\}$~\cite{candes2018modelx}. The subset of null variables is denoted by $\mathcal{H}_{0} \subset [p]$, and a variable $\phi_{j}$ is called \emph{non-null} or relevant if $j \notin \mathcal{H}_{0}$; under the linear-model assumption $\bs{y} = \bs{\Phi}\bs{\xi} + \bs{\eta}$, this coincides with $\mathcal{H}_{0} = \{j \in [p]: \xi_{j} = 0\}$.
\end{definition}

\noindent Knockoffs are treated as null terms for the true model because they are constructed to be irrelevant to the response. The relevant terms in the true model should therefore exhibit significantly greater importance than their knockoff counterparts.

For clarity, we collect the assumptions underlying the theoretical guarantees, together with the overcompleteness property used in the empirical claims of this paper.

\begin{assumption}[Model-X feature distribution]\label{ass:mx}
The candidate features $\bs{\phi} = (\phi_{1}, \dots, \phi_{p})$ are random with a known or estimable joint distribution. In this paper, $\bs{\phi} \sim \mathcal{N}(\bs{0}, \bs{\Sigma})$, with $\bs{\Sigma}$ estimated from the data using the Graphical Lasso or Ledoit--Wolf estimator.
\end{assumption}

\begin{assumption}[Linear response model]\label{ass:linear}
The response follows the sparse linear model $\bs{y} = \bs{\Phi}\bs{\xi} + \bs{\eta}$, so that the conditional-independence nulls defined above coincide with $\mathcal{H}_{0} = \{j \in [p] : \xi_{j} = 0\}$.
\end{assumption}

\begin{definition}[Overcompleteness]\label{def:overcomplete}
The selected support $\hat{\mathcal{S}}$ is \emph{overcomplete} if it retains every truly relevant term, $\mathcal{H}^{*} \subseteq \hat{\mathcal{S}}$ (equivalently, $\textrm{ePOWER} = 1$ on a single instance). Overcompleteness is \emph{not} required for the finite-sample FDR guarantees, which hold under Assumptions~\ref{ass:mx} and~\ref{ass:linear} alone; it is a preferred property that we monitor through ePOWER across all benchmarks.
\end{definition}

\paragraph{Gaussian knockoffs.}Suppose $\bs{\phi} \sim \mathcal{N}(\bs{0}, \bs{\Sigma})$. An exemplar joint distribution of the original variables and their knockoffs that satisfies pairwise exchangeability is $(\phi, \tilde{\phi}) \sim \mathcal{N}(\bs{0}, \bs{G})$, where

\begin{equation}
    \bs{G} = 
    \begin{pmatrix}
    \bs{\Sigma} & \bs{\Sigma} - \mathrm{diag}(\bs{d})\\
    \bs{\Sigma} - \mathrm{diag}(\bs{d}) & \bs{\Sigma}
    \end{pmatrix}.
\end{equation}

\noindent $\mathrm{diag}(\bs{d})$ is any diagonal matrix chosen such that $\bs{G}$ is positive semidefinite, $\bs{G} \succeq 0$; this condition is necessary and sufficient for the existence of $\bs{\tilde{\Phi}}$. In the fixed-X paradigm of Barber and Cand{\`e}s~\cite{barber2015controlling}, the mimicked moments are constructed from the empirical Gram matrix: after column normalisation $\norm{\bs{\phi_{j}}}_{2} = 1$, one sets $\bs{\Sigma} = \bs{\Phi}^{\intercal}\bs{\Phi} = \bs{\tilde{\Phi}}^{\intercal}\bs{\tilde{\Phi}}$ and $\bs{\Sigma} - \mathrm{diag}(\bs{d}) = \bs{\Phi}^{\intercal}\bs{\tilde{\Phi}}$, so that $\begin{pmatrix} \bs{\Phi} & \bs{\tilde{\Phi}} \end{pmatrix}^{\intercal}\begin{pmatrix} \bs{\Phi} & \bs{\tilde{\Phi}} \end{pmatrix} = \bs{G}$. In the MX paradigm adopted here, $\bs{\Sigma}$ denotes instead the population covariance (to be estimated) and the knockoffs are drawn/sampled from the following conditional multivariate Gaussian:
\begin{equation}
    \bs{\tilde{\Phi}} \mid \bs{\Phi} \stackrel{\textrm{d}}{=} \mathcal{N}\!\left(\bs{\Phi} - \bs{\Phi}\bs{\Sigma}^{-1}\mathrm{diag}(\bs{d}),\; 2\mathrm{diag}(\bs{d}) - \mathrm{diag}(\bs{d})\bs{\Sigma}^{-1}\mathrm{diag}(\bs{d})\right),
\end{equation}
hence the term ``Gaussian knockoffs.'' For notational convenience, we henceforth write $\bs{X}$ for the augmented matrix $\begin{pmatrix} \bs{\Phi} & \bs{\tilde{\Phi}} \end{pmatrix}$, so that $\bs{X}_{1:p}$ refers to $\bs{\Phi}$, and $\bs{X}_{p+1:2p}$ refers to $\bs{\tilde{\Phi}}$.

In practice, a wide range of covariance estimators can be used; the Ledoit--Wolf and graphical lasso estimators are among the most common choices. Also, several alternative ways of obtaining $\mathrm{diag}(\bs{d})$ exist, giving rise to distinct knockoff sampling algorithms; semidefinite-programming (SDP), equicorrelated, conditional-independence (CI) and minimum-variance reconstructability (MVR) constructions are widely used variants, which we compare in Section~\ref{sec:exp_gen}. For a detailed treatment of these constructions and the broader theory of MX knockoffs, we refer the reader to~\cite{candes2018modelx,barber2020robust}.

\paragraph{FDR control with feature statistics.}The importance scores of candidate terms are used to identify the relevant terms of the governing equation. For each $j$th candidate, we compute a test statistic $W_{j} = W_{j}\left(\bs{X}, \bs{y}\right)$ satisfying the flip-sign property, so that an important term is expected to yield a large positive value of $W_{j}$, providing evidence against the null hypothesis. Exploiting the pairwise exchangeability, the flip-sign property is given by

\begin{equation}
    W_{j}\left(\bs{X}_{\mathrm{swap}(\mathcal{S})}, \bs{y}\right) = 
    \begin{cases}
    -W_{j}\left(\bs{X}, \bs{y}\right); & j \in \mathcal{S},\\
    W_{j}\left(\bs{X}, \bs{y}\right); & j \notin \mathcal{S}.
    \end{cases}
\label{eq:flipsign}
\end{equation}

\noindent We design $W_{j} = f(Z_{j}, \tilde{Z}_{j})$, where $Z_{j}$ and $\tilde{Z}_{j}$ are candidate-importance scores assigned to the $j$-th original feature and its knockoff copy, respectively. The combining function $f$ is taken to be antisymmetric, $f(a, b) = -f(b, a)$, so that swapping a feature with its knockoff flips the sign of $W_{j}$. Constructing the importance scores requires an underlying (sparse) regression estimator from which feature-importance values can be extracted. The classic statistic is the lasso coefficient difference, $W_{j} = Z_{j} - \tilde{Z}_{j} = \abs{\hat{\beta}_{j}(\lambda)} - \abs{\hat{\beta}_{p + j}(\lambda)}$, where $\bs{\hat{\beta}}(\lambda) = \argmin_{\bs{\beta}}\frac{1}{2}\norm{\bs{y} - \bs{X}\bs{\beta}}^{2}_{2} + \lambda\norm{\bs{\beta}}_{1}$ collects the lasso coefficients estimated under $\ell_{1}$-regularisation controlled by $\lambda$. By design, $W_{j}$ is equally likely to take positive or negative values under the null hypothesis: for any threshold $\tau$, the cardinalities $\#\{j \in \mathcal{H}_{0}: W_{j} \geqslant \tau\}$ and $\#\{j \in \mathcal{H}_{0}: W_{j} \leqslant -\tau\}$ are equal in distribution, where $\mathcal{H}_{0} = \{j \in [p]: \xi_{j} = 0\}$ collects the null candidates. This assumption is essential for deriving an approximate upper bound on the FDP below. Choosing the support set as $\hat{\mathcal{S}} = \{j \in [p]: W_{j} \geqslant \tau\}$ with $\abs{\hat{\mathcal{S}}} > 0$, the FDP at fixed threshold $\tau$ is approximated by

\begin{equation*}
    \textrm{FDP}(\tau) = \frac{\#\{j \in \mathcal{H}_{0}: W_{j} \geqslant \tau\}}{\#\{j \in [p]: W_{j} \geqslant \tau\}} \approx \frac{\#\{j \in \mathcal{H}_{0}: W_{j} \leqslant -\tau\}}{\#\{j \in [p]: W_{j} \geqslant \tau\}} \leqslant \frac{\#\{j \in [p]: W_{j} \leqslant -\tau\}}{\#\{j \in [p]: W_{j} \geqslant \tau\}} = \widehat{\textrm{FDP}}(\tau).
\end{equation*}

Let $q$ denote the target false discovery rate. The knockoff filter finds the smallest data-driven threshold $\tau$ that satisfies $\widehat{\textrm{FDP}}(\tau) \leqslant q$. Throughout this section, we use $\hat{\mathcal{S}}$ to denote the support set returned by the knockoff-filtering procedure of interest---whether from a single realisation or aggregated across multiple knockoffs (introduced later). The subsequent stages produce refined outputs, denoted $\hat{\mathcal{S}}_{\textrm{RFE}}$ for the recursive feature elimination stage and $\hat{\mathcal{S}}_{\textrm{MCDM}} = \mathcal{S}^{*}$ for the final equation chosen by the MCDM stage.

\begin{theorem}[False discovery rate control using knockoff filters]
Set the data-driven threshold as
\begin{equation}
    \hat{\tau} = \min\left\{\tau > 0: \frac{\textrm{offset} + \#\{j \in [p]: W_{j} \leqslant -\tau\}}{\#\{j \in [p]: W_{j} \geqslant \tau\}} \leqslant q\right\},\quad \textrm{offset} \in \{0, 1\}.
\label{eq:ko_threshold}
\end{equation}

\noindent Assume the support set $\hat{\mathcal{S}} = \{j \in [p]: W_{j} \geqslant \hat{\tau}\}$ contains at least one candidate. Under Assumptions~\ref{ass:mx} and~\ref{ass:linear}, the false discovery rates of $\hat{\mathcal{S}}$ are controlled as follows:

\begin{equation}
    q \geqslant \begin{cases}
    \textrm{mFDR} = \mathbb{E}\left[\frac{\abs{\hat{\mathcal{S}} \cap \mathcal{H}_{0}}}{\abs{\hat{\mathcal{S}}} + \frac{1}{q}}\right], & \textrm{offset} = 0\quad (\textrm{knockoff}),\\
    \textrm{FDR} = \mathbb{E}\left[\frac{\abs{\hat{\mathcal{S}} \cap \mathcal{H}_{0}}}{\abs{\hat{\mathcal{S}}}}\right], & \textrm{offset} = 1\quad (\textrm{knockoff+}).
    \end{cases}
\end{equation}

\noindent By adding $1$ to the inferred number of negatives, the knockoff+ filter controls the usual FDR rather than the modified FDR (mFDR) and is slightly more conservative than the vanilla knockoff filter. We adopt the knockoff+ filter as the default. The FDR guarantee for knockoff+ is proved in Appendix~\ref{app:ko_proof}.
\end{theorem}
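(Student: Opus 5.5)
The plan follows the standard Barber--Candès argument: first establish a coin-flip property for the signs of the null statistics, then prove FDR control through a supermartingale (optional-stopping) argument applied to the data-driven threshold $\hat{\tau}$.

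\emph{Step 1: i.i.d. signs of the null statistics.} Condition on the magnitudes $\abs{W_{1}}, \dots, \abs{W_{p}}$. I will show that the signs $\{\mathrm{sign}(W_{j})\}_{j \in \mathcal{H}_{0}}$ are i.i.d. fair coin flips, independent of the non-null signs and of all magnitudes. Fix any $\mathcal{S} \subseteq \mathcal{H}_{0}$. By pairwise exchangeability (property 1 of the model-X definition), $\bs{X}_{\mathrm{swap}(\mathcal{S})} \stackrel{\textrm{d}}{=} \bs{X}$ marginally. Combining this with conditional independence $\bs{\tilde{\phi}} \independent \bs{y} \mid \bs{\phi}$ and the nullity of $j \in \mathcal{S}$ (which holds under Assumption~\ref{ass:linear}), the standard lemma of Cand\`es et al. upgrades it to the joint statement $(\bs{X}_{\mathrm{swap}(\mathcal{S})}, \bs{y}) \stackrel{\textrm{d}}{=} (\bs{X}, \bs{y})$. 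I expect this lemma to need the most care, since it requires showing that the conditional law of $\bs{y}$ given the swapped features is unchanged. The flip-sign property~\eqref{eq:flipsign} then gives $(W_{1}, \dots, W_{p}) \stackrel{\textrm{d}}{=} (\epsilon_{1} W_{1}, \dots, \epsilon_{p} W_{p})$ for any sign vector with $\epsilon_{j} = -1$ on $\mathcal{S}$ and $+1$ elsewhere. Averaging over uniformly random $\mathcal{S}$ yields the coin-flip claim. Under Assumption~\ref{ass:mx} the Gaussian knockoff construction with $\bs{G} \succeq 0$ supplies exactly these two properties.

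\emph{Step 2: reduction to a stopping-time bound.} For knockoff+, the definition of $\hat{\tau}$ gives
\begin{equation*}
\frac{\abs{\hat{\mathcal{S}} \cap \mathcal{H}_{0}}}{\abs{\hat{\mathcal{S}}}} = \frac{\#\{j \in \mathcal{H}_{0}: W_{j} \geqslant \hat{\tau}\}}{1 + \#\{j \in \mathcal{H}_{0}: W_{j} \leqslant -\hat{\tau}\}} \cdot \frac{1 + \#\{j \in \mathcal{H}_{0}: W_{j} \leqslant -\hat{\tau}\}}{\#\{j: W_{j} \geqslant \hat{\tau}\} \vee 1} \leqslant q \cdot V^{+}(\hat{\tau}) / \bigl(1 + V^{-}(\hat{\tau})\bigr),
\end{equation*}
where $V^{\pm}(\tau)$ count nulls with $\pm W_{j} \geqslant \tau$. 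It therefore suffices to prove $\mathbb{E}[V^{+}(\hat{\tau})/(1+V^{-}(\hat{\tau}))] \leqslant 1$.

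\emph{Step 3: martingale and optional stopping.} Order the nulls by decreasing $\abs{W_{j}}$ and consider the process $M(\tau) = V^{+}(\tau)/(1+V^{-}(\tau))$ as $\tau$ increases through the values $\abs{W_{j}}$. With respect to the backward filtration that reveals the magnitudes, the non-null signs, and the counts $V^{+} + V^{-}$ above each level, $M$ is a supermartingale. The key computation: conditional on the total number $m$ of nulls at or above the current threshold, the $V^{+}$ among them is hypergeometric/binomial, and removing the smallest one decreases $M$ in conditional expectation.

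The threshold $\hat{\tau}$ is a stopping time for this filtration, because the ratio in~\eqref{eq:ko_threshold} is computable from these quantities. Optional stopping then gives $\mathbb{E}[M(\hat{\tau})] \leqslant \mathbb{E}[M(0)] = \mathbb{E}[Y/(1+m-Y)]$ with $Y \sim \mathrm{Bin}(m, 1/2)$, and a direct computation shows this equals $1 - 2^{-m} \leqslant 1$.

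For offset $=0$, the same argument applies with $V^{+}/(V^{-} + q^{-1})$: the relation $\abs{\hat{\mathcal{S}}} + q^{-1} \geqslant q^{-1}(1 + V^{-}(\hat{\tau}))$ yields the mFDR bound. The main obstacle is verifying the supermartingale step rigorously, specifically the conditional distribution of the newly revealed sign given the filtration.
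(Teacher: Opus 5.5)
Your proposal is correct and follows essentially the same route as the paper's proof in Appendix~\ref{app:ko_proof}. Both use the reduction $\mathrm{FDP}\leqslant q\,V^{+}(\hat{\tau})/(1+V^{-}(\hat{\tau}))$ from the definition of $\hat{\tau}$, then the coin-flip property and optional stopping for the reverse-time supermartingale. Your derivation of the coin-flip property from exchangeability, which the paper only cites, and your handling of the offset-$0$ mFDR case, which the paper's appendix does not prove, are welcome additions. One wording fix: the filtration must reveal $V^{-}$ at the current level together with the individual null signs below it, as in the paper's $\mathcal{F}_{j}=\{B_{1}+\cdots+B_{j},B_{j+1},\dots\}$. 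The sum $V^{+}+V^{-}$ alone is already fixed by the magnitudes, so revealing only it would leave $M$ non-adapted and $\hat{\tau}$ not a stopping time.
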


\paragraph{Sparse regression with SHAP feature importance.}In our framework, we adopt SHAP as the feature importance for the following reasons: it is applicable to any sparse-regression estimator without distributional or model-class assumptions, it provides a meaningful, game-theoretic measure of each feature's marginal contribution to the prediction (with the unique additivity, efficiency and symmetry guarantees of Shapley values), and it admits a closed-form expression for linear models that makes it especially fast to compute in our setting. As the underlying sparse-regression estimator we use \texttt{abess}~\cite{zhu2020polynomial,zhu2022abess}, an $\ell_{0}$-constrained best-subset selection method, which is well suited to our framework because it explicitly controls the maximum support size $s_{\max}$, ensuring that the candidate set entering the knockoff filter remains compact even when the candidate library is large. Given $s_{\max}$, \texttt{abess} performs a golden-section search over a plausible range of support sizes and selects the output support set under $3$-fold cross-validated error. Coupling \texttt{abess} with SHAP-based feature-importance scores within the knockoff filter therefore yields a small subset that controls the FDR while retaining high statistical power.

The objective function of the $\ell_{0}$-constrained sparse-regression problem solved by \texttt{abess} is
\begin{equation}
    \min_{\bs{\beta}}\frac{1}{2n}\norm{\bs{y} - \bs{X}\bs{\beta}}^{2}_{2} + \lambda_{\textrm{ridge}}\norm{\bs{\beta}}^{2}_{2}, \textrm{ subject to } 0 \leqslant \norm{\bs{\beta}}_{0} \leqslant s_{\max}.
\end{equation}

\noindent The default value of $\lambda_{\textrm{ridge}}$ is $10^{-5}$, unless stated otherwise. The maximum support size $s_{\max}$ is determined through an initial Elastic Net screening procedure. Specifically, we fit an elastic net (with an $\ell_{1}$ ratio of $0.5$) over a grid of scaling factors $\kappa \in \{0.5, 0.75, 1\}$. For each value of $\kappa$, the regularisation parameter is chosen using a plug-in estimator based on a theoretical bound proportional to $\sqrt{\log(p)/n}$ and further scaled by the empirical noise variance estimated from an ordinary least-squares (OLS) fit. We perform $3$-fold cross-validation to find the best scaling factor then define $s_{\max}$ (for the \texttt{abess} estimator) as the number of nonzero coefficients of the corresponding fitted elastic net. The multitask elastic net is employed instead when discovering a system of PDEs with more than one state variable of interest.

Rooted in coalitional game theory, SHAP~\cite{lundberg2017unified,lundberg2020trees} provides a unified measure of feature importance by distributing a model's prediction among its constituent features based on their marginal contributions. For each input $\bs{x}^{(i)}$ ($i = 1, \dots, n$) and each feature $j$ in the augmented matrix $\bs{X} = \begin{pmatrix} \bs{\Phi} & \bs{\tilde{\Phi}} \end{pmatrix}$, the (interventional) SHAP value $\varphi_{j}(i)$ is the weighted average of the $j$-th feature's marginal contributions to the prediction of $\bs{x}^{(i)}$ across all feature subsets $\mathcal{S}$ that exclude it,

$$
\varphi_{j}(i) = \sum_{\mathcal{S} \subseteq P \setminus \{j\}} \frac{\abs{\mathcal{S}}!\,\bigl(\abs{P} - \abs{\mathcal{S}} - 1\bigr)!}{\abs{P}!} \left[ v\!\left(\mathcal{S} \cup \{j\}; \bs{x}^{(i)}\right) - v\!\left(\mathcal{S}; \bs{x}^{(i)}\right) \right],
$$

\noindent where $P = [2p]$ indexes the $2p$ columns of the augmented matrix $\bs{X}$ (so that $\abs{P} = 2p$), and $v(\mathcal{S}; \bs{x}^{(i)})$ is the model's expected prediction at $\bs{x}^{(i)}$ when only the features in $\mathcal{S}$ are observed and the rest are marginalised under the (assumed independent) feature distribution. We aggregate the per-sample SHAP values into a single per-feature importance score $Z_{j} = \frac{1}{n}\sum_{i=1}^{n} \abs{\varphi_{j}(i)}$, the mean absolute value across the $n$ samples.

Because the \texttt{abess} fit is a linear model, with prediction function $g(\bs{x}) = \bs{x}^{\intercal}\hat{\bs{\beta}}$, the interventional SHAP value admits the closed form $\varphi_{j}(i) = \hat{\beta}_{j}\bigl(X_{ij} - \bar{X}_{j}\bigr)$, where $\bar{X}_{j}$ is the column mean of $\bs{X}_{:,j}$. In particular, $\varphi_{j}(i) = 0$ for every $j \notin \hat{\mathcal{S}}_{\texttt{abess}}$ (the abess-selected support, on which $\hat{\beta}_{j} \neq 0$), so only the abess-selected features carry nonzero SHAP contributions. We exploit this closed form for an efficient computation of $Z_{j}$. The computational complexity (per one knockoff realisation) of SHAP-DS is therefore $O(np)$, cheaper than SWAP-type feature statistics (see Appendix~\ref{app:complexity}).

This formulation naturally yields a valid knockoff feature statistic. Recall from the general construction $W_{j} = f(Z_{j}, \tilde{Z}_{j})$ that an antisymmetric combining function $f(a,b) = -f(b,a)$ is sufficient to guarantee the flip-sign property of $W_{j}$. SHAP-DS adopts the simplest such combiner---the difference, $f(a,b) = a - b$, which is antisymmetric by inspection---and applies it to non-negative SHAP-based importance scores $Z_{j}, \tilde{Z}_{j} \geqslant 0$. Concretely, because the augmented matrix $\bs{X}$ is constructed to satisfy pairwise exchangeability, the SHAP scores $Z_{j}$ (for the original candidate) and $\tilde{Z}_{j}$ (for its knockoff) exchange roles under $\mathrm{swap}(\mathcal{S})$ whenever $j \in \mathcal{S}$. Composing this swap with the antisymmetric difference yields $W_{j} = Z_{j} - \tilde{Z}_{j}$ satisfying the flip-sign property~\eqref{eq:flipsign}, $W_{j}(\bs{X}_{\mathrm{swap}(\mathcal{S})}, \bs{y}) = -W_{j}(\bs{X}, \bs{y})$ whenever $j \in \mathcal{S}$, so SHAP-DS is admissible as a feature statistic for FDR control within the knockoff framework. This argument implicitly requires the underlying fitting procedure to be equivariant under column relabelling and deterministic given the data, so that swapping a feature with its knockoff swaps the corresponding importance scores; both properties hold for \texttt{abess} with fixed initialisation.

\subsubsection{Aggregating multiple knockoff realisations with FDR control}
Because the support set $\hat{\mathcal{S}}$ may be sensitive to randomness in constructing one particular knockoff realisation $\bs{\tilde{\Phi}}$, we aggregate multiple knockoff statistics to produce a single, less randomised support set while preserving FDR control without unduly sacrificing statistical power.

\paragraph{Aggregation via e-values and the e-BH procedure.}Under the null hypothesis, a non-negative random variable $e_{j}$ is called an e-value if $\mathbb{E}\left[e_{j}\right] \leqslant 1$. Ren and Barber~\cite{ren2024derandomised} show that, under a relaxed definition of an e-value, multiple-hypothesis testing via the e-BH procedure of Wang and Ramdas~\cite{wang2022evalues} is equivalent to MX knockoff filtering, in the sense that both procedures yield the same support set. Building on the weaker but sufficient condition $\sum_{j \in \mathcal{H}_{0}}\mathbb{E}\left[e_{j}\right] \leqslant p$, we devise an e-BH procedure that combines averaged e-values, one per candidate term, to yield a single support set $\hat{\mathcal{S}}_{\textrm{ebh}}$ with FDR control. For the $k$-th realisation of the knockoff filter, the e-value is

\begin{equation}
    e^{k}_{j} = p \cdot \frac{\mathbbm{1}\{W^{k}_{j} \geqslant \hat{\tau}^{k}\}}{1 + \#\{j^{\prime} \in [p]: W^{k}_{j^{\prime}} \leqslant -\hat{\tau}^{k}\}},
\end{equation}

\noindent where $\mathbbm{1}\{\textrm{condition}\}$ denotes the indicator function, equal to $1$ when the condition holds and $0$ otherwise. In the numerator the condition is evaluated for the candidate $j$ that indexes $e^{k}_{j}$, whereas the denominator counts over a separate dummy index $j^{\prime}$; this follows the e-value construction of Ren and Barber~\cite{ren2024derandomised}. This definition is in fact inspired by the central assumption underlying the construction of the knockoff statistics: each null $j \in \mathcal{H}_{0}$ is equally likely to have $W_{j} \geqslant \hat{\tau}$ as to have $W_{j} \leqslant -\hat{\tau}$, which leads to the condition $\sum_{j \in \mathcal{H}_{0}}\mathbb{E}\left[e_{j}\right] \leqslant p$, because we have

\begin{align*}
    \sum_{j \in \mathcal{H}_{0}}\mathbb{E}\left[e_{j}\right] = p\sum_{j \in \mathcal{H}_{0}}\mathbb{E}\left[\frac{\mathbbm{1}\{W^{k}_{j} \geqslant \hat{\tau}^{k}\}}{1 + \#\{j^{\prime} \in [p]: W^{k}_{j^{\prime}} \leqslant -\hat{\tau}^{k}\}}\right]
    &\leqslant p\sum_{j \in \mathcal{H}_{0}}\mathbb{E}\left[\frac{\mathbbm{1}\{W^{k}_{j} \geqslant \hat{\tau}^{k}\}}{1 + \#\{j^{\prime} \in \mathcal{H}_{0}: W^{k}_{j^{\prime}} \leqslant -\hat{\tau}^{k}\}}\right]\\
    &= p \cdot \mathbb{E}\left[\frac{\#\{j \in \mathcal{H}_{0}: W^{k}_{j} \geqslant \hat{\tau}^{k}\}}{1 + \#\{j \in \mathcal{H}_{0}: W^{k}_{j} \leqslant -\hat{\tau}^{k}\}}\right] \leqslant p,
\end{align*}

\noindent where the final inequality follows from the Barber--Cand{\`e}s supermartingale lemma~\cite{barber2015controlling,candes2018modelx}: at any stopping time $\hat{\tau}$, we have
\begin{equation*}
    \mathbb{E}\!\left[\frac{\#\{j \in \mathcal{H}_{0}: W_{j} \geqslant \hat{\tau}\}}{1 + \#\{j \in \mathcal{H}_{0}: W_{j} \leqslant -\hat{\tau}\}}\right] \leqslant 1.
\end{equation*}

\noindent The e-BH procedure for derandomising knockoffs operates on the averaged e-value $e^{\textrm{avg}}_{j} = \frac{1}{K}\sum^{K}_{k=1}e^{k}_{j}$, selecting the set of discoveries as 

\begin{equation}
    \hat{\mathcal{S}}_{\textrm{ebh}} = \left\{j \in [p]: e^{\textrm{avg}}_{j} \geqslant \frac{p}{q_{\textrm{ebh}}\hat{\jmath}}\right\},\quad \textrm{where } \hat{\jmath} = \max\left\{j \in [p]: e^{\textrm{avg}}_{(j)} \geqslant \frac{p}{q_{\textrm{ebh}}j}\right\},
\end{equation}

\noindent and $e^{\textrm{avg}}_{(1)} \geqslant \dots \geqslant e^{\textrm{avg}}_{(p)}$ are the order statistics of $e^{\textrm{avg}}_{j}$. We refer the reader to~\cite{ren2024derandomised,wang2022evalues} for a detailed proof that the e-BH procedure applied to the averaged e-values satisfies $\textrm{FDR} \leqslant q_{\textrm{ebh}}$. As recommended by Ren and Barber, we initialise $q_{\textrm{ebh}} = q$ and use Algorithm~\ref{alg:adaptive_fdr} to tune $q_{\textrm{ebh}}$ adaptively, increasing it from $q_{0}$ in steps of $\Delta q$ (up to a maximum target $q_{\max}$), satisfying $s_{\min} \leqslant \abs{\hat{\mathcal{S}}} \leqslant s_{\max}$. For any fixed $q_{\textrm{ebh}}$, imposing the upper bound $s_{\max}$ on the support size within the e-BH rejection rule preserves the finite-sample FDR guarantee at level $q_{\textrm{ebh}}$ (Theorem~\ref{thm:smax_unified}, Appendix~\ref{app:smax_proof}). When $q_{\textrm{ebh}}$ is tuned adaptively during Algorithm~\ref{alg:adaptive_fdr}, the realised level $\hat{q}_{\textrm{ebh}} \leqslant q_{\max}$ (the data-dependent value of $q_{\textrm{ebh}}$ returned by the adaptive search) is itself stochastic, so the clean finite-sample bound is stated at the pre-specified ceiling $q_{\max}$ rather than at $\hat{q}_{\textrm{ebh}}$: because the e-values are fixed (decoupled) and the returned set is e-BH self-consistent at $\hat{q}_{\textrm{ebh}} \leqslant q_{\max}$, it is also self-consistent at the fixed level $q_{\max}$, and self-consistency at a fixed level is what the e-BH theorem requires. Decoupling the e-value construction from the adaptive FDR search also ensures monotonicity of the support sets across target levels (Appendix~\ref{app:smax_proof}, Theorem~\ref{thm:decoupled_ebh}). Starting from a generous $q_{0}$ avoids losing any active term, which is essential for preserving the overcompleteness property (equivalently, maximum power). We use $q_{0} = 0.5$, $q_{\max} = 1$ and $\Delta q = 0.01$ as defaults, which empirically suffice across all benchmarks, and wish to retain at least two active terms, i.e., $s_{\min} = 2$. The formal FDR guarantee applies to the knockoff/e-BH screening rule for each fixed covariance estimator satisfying the model-X assumptions (Assumptions~\ref{ass:mx} and~\ref{ass:linear}); the subsequent choice between the two covariance estimators in $\mathcal{C}$ is an empirical, information-criterion-based selection step (Algorithm~\ref{alg:adaptive_fdr}) and is not itself covered by the model-X/e-BH FDR theorem. When the feature distribution is itself estimated rather than known---as here, through $\bs{\Sigma}$---the robustness analysis of Barber, Cand{\`e}s and Samworth~\cite{barber2020robust} shows that the guarantee degrades gracefully: the FDR of knockoffs sampled from an estimated distribution exceeds the nominal level by at most a quantity controlled by the empirical KL divergence between the estimated and true conditional feature distributions, so accurate covariance estimation translates directly into near-nominal FDR control.

\begin{algorithm}[t]
\caption{$\ell_{0}$-constrained multiple knockoff filters with adaptive FDR control} \label{alg:adaptive_fdr}
\KwIn{
  Initial target FDR $q_{0}$, max target FDR $q_{\max}$, FDR step $\Delta q = 0.01$, candidate library $\bs{\Phi}$, response $\bs{y}$, min support size $s_{\min} = 2$, max support size $s_{\max}$, number of knockoff copies $K$, and collection of covariance estimators $\mathcal{C} = \{\textrm{GraphicalLasso}, \textrm{Ledoit-Wolf}\}$
}
\KwOut{Support set $\hat{\mathcal{S}}$}
\BlankLine
\Comment{Initialise a target FDR for each estimator $cov \in \mathcal{C}$}
$q_{cov} \gets q_{0},\;\forall\,cov \in \mathcal{C}$;\quad $l_{\min} \gets \infty$;\quad $\hat{\mathcal{S}} \gets \emptyset$\\
\Comment{Precompute feature statistics per estimator}
\For{each $cov \in \mathcal{C}$}{
  Generate $K$ MX knockoff copies using $cov$\\
  Compute SHAP-based importance statistics and e-values $\bs{e^{k}_{cov}}$ for each knockoff copy\\
  Aggregate e-values $\bs{e^{\textrm{avg}}_{cov}} \gets \frac{1}{K}\sum_{k=1}^{K}\bs{e^{k}_{cov}}$\\
}
Mark every estimator as \textit{active}\\
\BlankLine
\While{$\exists\,$ active $cov \in \mathcal{C}$}{
  \For{each active $cov \in \mathcal{C}$}{
	\Comment{Aggregating multiple knockoffs with FDR control}
    Apply the size-constrained e-BH procedure on $\bs{e^{\textrm{avg}}_{cov}}$ to select a subset $\mathcal{S}$ controlling the FDR at $q_{cov}$, subject to $\abs{\mathcal{S}} \leqslant s_{\max}$; take the union of all $\mathcal{S}$ for multiple responses\\
    \uIf{$\abs{\mathcal{S}} < s_{\min}$}{
      $q_{cov} \gets q_{cov} + \Delta q$ \Comment{Relax FDR to discover more terms}
      \If{$q_{cov} > q_{\max}$}{Mark $cov$ as \textit{inactive}\\}
    }
    \Else{
      Mark $cov$ as \textit{inactive}\\
      \Comment{May occur when there are multiple response variables}
      \If{$\abs{\mathcal{S}} > s_{\max}$}{
        Truncate $\mathcal{S}$ to at most $s_{\max}$ terms via mixed-integer-optimised sparse regression (MIOSR)\label{line:miosr}\\
      }
      \Comment{i.e., $l$ is an information-criterion-based score}
      Compute cross-validated score $l$ for the linear model fitted to $\left(\bs{\Phi}_{[:,\,\mathcal{S}]},\,\bs{y}\right)$\\
      \If{$l < l_{\min}$}{
        $l_{\min} \gets l$;\quad $\hat{\mathcal{S}} \gets \mathcal{S}$;\quad $q \gets q_{cov}$ \Comment{Track the adaptive FDR}
      }
    }
  }
}
\BlankLine
\Return{$\hat{\mathcal{S}},\, q$} \Comment{Here, $q$ is regarded as the tuned target FDR.}
\end{algorithm}

\noindent The finite-sample FDR guarantee of Theorem~\ref{thm:smax_unified} applies when the size constraint $\abs{\hat{\mathcal{S}}} \leqslant s_{\max}$ is imposed inside the e-BH rejection count, as analysed in Appendix~\ref{app:smax_proof}. The additional MIOSR truncation invoked at line~\ref{line:miosr} in Algorithm~\ref{alg:adaptive_fdr} is an empirical refinement step used only to reconcile the union of per-response support sets when multiple response variables are present; it is not part of the per-response e-BH rejection rule and is therefore not itself covered by the per-response FDR proof.

\subsubsection{Recursive feature elimination via SHAP selection and knockoff-perturbed hypothesis testing}
While the knockoff filter provides finite-sample FDR control, multicollinearity in the overcomplete candidate library can still allow a small fraction of dispensable candidate terms to persist. We therefore further refine the selected support set $\hat{\mathcal{S}}$ via the RFE procedure detailed in Algorithm~\ref{alg:shap_rfe}. Unlike traditional RFE methods that iteratively drop features based on regression coefficients---often unstable for collinear variables---our approach assesses the statistical necessity of each term using SHAP-based selection followed by knockoff-perturbed hypothesis testing.

The RFE procedure operates in two subprocedures. The first subprocedure adopts the SHAP-based subset-selection mechanism of Marc{\'\i}lio and Eler~\cite{marcilio2020shap}. The candidate terms in the support set obtained by aggregating multiple knockoffs (with FDR control) are first ranked in descending order of their mean absolute SHAP values, generating a sequence of nested subsets analogous to those produced by forward regression. To determine the truncation point, we follow the $1$-standard-deviation rule: we evaluate the cross-validated $R^{2}$ of each subset under $3$-fold CV and select the most parsimonious subset within one standard deviation of the peak performance.

Given the SHAP ranking of the remaining terms, the second subprocedure assesses the statistical necessity of each lower-ranked candidate while protecting the most dominant ones (indicated by left elbow detection; see \url{https://github.com/vlavorini/kneefinder} for the implementation) from elimination, thereby mitigating type I error inflation from multiple testing. For each lower-ranked term (processed from least to most important), we apply knockoff-perturbed hypothesis testing as detailed in Algorithm~\ref{alg:shap_rfe}. For a target vulnerable term $j$, we fit a covariance estimator on the current support set and generate $K$ knockoff realisations. For each realisation $k$, we replace the original $j$-th term with its knockoff counterpart and compute the penalised least-squares (Ridge-style) loss of the perturbed system. The distribution of perturbed scores, $\mathcal{E}_{\textrm{swap}}$, is then compared against the baseline score of the unperturbed system, $\mathcal{E}$, to determine whether the term should be removed. Formally, we test the per-feature hypotheses

\begin{equation*}
\begin{aligned}
H_0^{(j)}: & \quad \mathcal{E}_{\mathrm{swap}} \geqslant \mathcal{E}, 
& \text{the original feature performs no worse than its knockoff copy,} \\
H_1^{(j)}: & \quad \mathcal{E}_{\mathrm{swap}} < \mathcal{E}, 
& \text{the knockoff copy outperforms the original feature.}
\end{aligned}
\end{equation*}

\noindent We apply a one-sided Wilcoxon signed-rank test to the differences $\mathcal{E}_{\mathrm{swap}} - \mathcal{E}$. If the resulting $p$-value falls below the prescribed significance level $\alpha$ (set to $0.10$ by default), we reject $H_0^{(j)}$ based on the evidence that the knockoff-substituted system attains a significantly lower penalised residual than the original: substituting the conditionally independent knockoff for feature~$j$ does not impair, and may even improve, the fit. The original term therefore offers no greater predictive power than its synthetic counterpart and should be deleted from $\hat{\mathcal{S}}$, after which the RFE procedure restarts to account for the reduced support set.

\begin{algorithm}[t]
\caption{Recursive feature elimination via SHAP selection and knockoff perturbations} \label{alg:shap_rfe}
\KwIn{FDR-controlled library $\bs{\Phi}_{[:,\, \hat{\mathcal{S}}]}$ obtained by aggregating knockoffs, response $\bs{y}$, significance level $\alpha$, number of knockoff copies $K$, and covariance estimator $cov$}
\KwOut{Support set $\hat{\mathcal{S}}_{\mathrm{RFE}}$}
\BlankLine

\Comment{Subprocedure 1: SHAP selection}
Rank the terms in $\bs{\Phi}_{[:,\, \hat{\mathcal{S}}]}$ by their mean absolute SHAP values in descending order\\
Generate nested subsets $\mathcal{S}_1 \subset \mathcal{S}_2 \subset \dots \subset \hat{\mathcal{S}}$ based on the SHAP ranking\\
Evaluate the cross-validated $R^2$ scores for all subsets\\
Update $\hat{\mathcal{S}}$ to be the smallest subset within 1 standard deviation of the peak $R^2$ score\\
Sort the terms in $\hat{\mathcal{S}}$ in descending order of their SHAP importance\\
\BlankLine

\Comment{Subprocedure 2: Knockoff-perturbed hypothesis testing}
\While{True}{
    Fit models with nested subsets from $\hat{\mathcal{S}}$ then compute the corresponding extended Bayesian information criterion (EBIC)~\cite{chen2008extended} scores\\
    $j_{\mathrm{knee}} \gets \textrm{FindKnee(EBIC)}$;\quad $eliminated \gets \textrm{False}$ \Comment{Knee/Elbow detection}
    
    \Comment{Protect dominant features}
    \For{$j \gets \abs{\hat{\mathcal{S}}}$ \KwDownTo $j_{\mathrm{knee}} + 1$}{
        $\mathcal{E} \gets \textrm{RidgeLoss}(\bs{\Phi}_{[:,\, \hat{\mathcal{S}}]}, \bs{y})$;\quad $\mathcal{E}_{\mathrm{swap}} \gets \emptyset$\\
        
        Fit $cov$ on $\bs{\Phi}_{[:,\, \hat{\mathcal{S}}]}$ then generate $K$ knockoff realisations $\bs{\tilde{\Phi}^{1}}, \dots, \bs{\tilde{\Phi}^{K}}$\\
        
        \For{$k \gets 1$ \KwTo $K$}{
            $\bs{\Phi}_{\mathrm{swap}} \gets \bs{\Phi}_{[:,\, \hat{\mathcal{S}}]}$\\
            Replace the $j$th term in $\bs{\Phi}_{\mathrm{swap}}$ with its knockoff (the $j$-th term from $\bs{\tilde{\Phi}^{k}}$)\\
            Append $\textrm{RidgeLoss}(\bs{\Phi}_{\mathrm{swap}}, \bs{y})$ to $\mathcal{E}_{\mathrm{swap}}$\\
        }
        
        Compute the \textit{p-value} via a one-sided Wilcoxon signed-rank test: $H_0^{(j)}: \mathcal{E}_{\mathrm{swap}} - \mathcal{E} \geqslant \bs{0}$\\

        \Comment{In practice, the decision to remove the $j$th feature whenever $\textit{p-value} < \alpha$ is left to the user; in our implementation, we eliminate the $j$th feature if $\textit{p-value} < \alpha$ holds for at least one $cov$ in $\mathcal{C} = \{\textrm{GraphicalLasso}, \textrm{Ledoit-Wolf}\}$.}
        \If{$\textit{p-value} < \alpha$}{
            Remove the $j$th term from $\hat{\mathcal{S}}$;\quad $eliminated \gets \textrm{True}$ \Comment{Reject $H_0^{(j)}$}
            \textbf{break} \Comment{Continue the vulnerability check with $\hat{\mathcal{S}}$}
        }
    }
    \If{$\neg eliminated$}{
        \textbf{break} \Comment{No vulnerable candidate worth being eiliminated found}
    }
}
\Return{$\hat{\mathcal{S}}_{\mathrm{RFE}} \gets \hat{\mathcal{S}}$}
\end{algorithm}

\subsection{Multi-criteria trade-offs}\label{sec:mcdm}

\begin{algorithm}[t]
	\caption{Recursive multi-criteria ranking of PDE alternatives} \label{alg:mcdm_pde}
	\KwIn{
		Decision matrix $\bs{\mathcal{F}}$ comprising $M$ evaluation criteria for PDE alternatives,\\
		optimisation directions $\bs{o} \in \{-1, +1\}^M$ ($-1$ for $\min$ and $+1$ for $\max$)
	}
	\KwOut{Optimal governing equation support set $\mathcal{S}^{*}$}
	\BlankLine
	\Comment{Preprocessing PDE alternatives}
	Rank the PDE alternatives using an information criterion (EBIC)\\
	Initialise the active set $\mathcal{A}$ by retaining the top $N_{\mathcal{A}}$ alternatives ($N_{\mathcal{A}}=5$ by default)\\
	Construct $\bs{\tilde{\mathcal{F}}}$ by normalising each criterion in $\bs{\mathcal{F}}$ via relative minimum-shifts and/or logarithmic scaling; extract the sub-matrix $\bs{\tilde{\mathcal{F}}}_{\mathcal{A}}$ corresponding to active candidates in $\mathcal{A}$\\
	Compute criteria weights $\bs{w} \gets \text{VarianceWeights}(\bs{\tilde{\mathcal{F}}}_{\mathcal{A}}, \bs{o})$\\
	\BlankLine
	\Comment{Recursive MCDM and rank aggregation}
	Initialise the ranking $\mathcal{W} \gets \emptyset$\\
	\While{$|\mathcal{A}| > 0$}{
		\If{$|\mathcal{A}| = 1$}{
			Append the remaining alternative to $\mathcal{W}$ and \textbf{break}\\
		}
		\Comment{Multi-criteria trade-offs}
		Apply MCDM algorithms (TOPSIS, VIKOR, COMET, PROMETHEE-II, CoCoSo) to $\bs{\tilde{\mathcal{F}}}_{\mathcal{A}}$ using $\bs{w}$; extract the matrix of individual algorithmic rankings $\bs{\mathcal{R}}$\\
		
		\tcc{A simpler alternative to rank aggregation is mean normalised preference aggregation: normalise each method's preferences to $[0,1]$, flip dispreference methods, and select the alternative with the highest mean. Both approaches yield identical final rankings on every benchmark considered.}
		\Comment{Stabilise selection via consensus methods}
		Apply ranking aggregations (Borda count, Schulze, Kemeny--Young, Plurality, and Pairwise Greedy methods) to $\bs{\mathcal{R}}$\\
		Apply another Borda count over these five aggregated rankings, summing positional scores to obtain a composite score for each alternative in $\mathcal{A}$\\
		\Comment{Select winner with deterministic tiebreaker}
		Identify the alternative $a^{*}$ by lexicographic ordering: (i) most rank-1 wins across the five aggregation methods, (ii) highest composite Borda score, (iii) largest support size\\
		Append $a^{*}$ to $\mathcal{W}$ and remove it from $\mathcal{A}$; update $\bs{\tilde{\mathcal{F}}}_{\mathcal{A}}$ to reflect the reduced set\\
	}
	\BlankLine
	\Return{$\mathcal{W}, \mathcal{S}^{*} \gets \text{the first support set in } \mathcal{W}$}
\end{algorithm}

\subsubsection{Constructing PDE alternatives as best-subsets}
After the overcomplete candidate library $\bs{\Phi}$ is pruned by our knockoff filter and the RFE procedure, a reduced set of high-potential terms remains. The final task is then to identify the governing equation among the implied PDE alternatives. Because the reduced support set is sufficiently small, we generate alternatives via best-subset selection (i.e., a brute-force combinatorial search becomes tractable) that enumerates all subsets across support sizes ranging from one to the size of the reduced set. Each subset constitutes a competing PDE alternative, whose attributes, such as predictive accuracy, complexity and uncertainty, are then jointly assessed. We remark that, unlike in forward regression, allowing one more term into the best subset---that is, increasing the size of the best subset by one---does not necessarily correspond to adding one more term to that subset.

\subsubsection{Multi-criteria ranking of PDE alternatives}\label{sec:mcdm_rank}
Identifying the optimal governing equation among the generated alternatives requires navigating intrinsic trade-offs---primarily between model complexity and predictive power, as emphasised by conventional information criteria. To incorporate a broader set of attributes characterising the PDE alternatives, we frame this final model selection as an MCDM problem.

For each alternative, prediction intervals are constructed via conformal prediction~\cite{vovk2005algorithmic,lei2018distribution} using the jackknife+-after-bootstrap technique~\cite{barber2021predictive,kim2020predictive}, with a ridge estimator and an absolute residual conformity score, providing distribution-free coverage guarantees on unseen data. Conformalised quantile regression~\cite{romano2019conformalized} is a viable alternative that likewise retains distribution-free coverage. Standard quantile regression may also be used in place of conformal prediction, albeit without the coverage guarantee; the guarantee is not strictly required, but is convenient for setting the parameters of the coverage-width criterion below. The intervals and associated point estimates are evaluated under repeated $3$-fold cross-validation ($10$ repetitions with different randomisations), yielding the decision matrix whose criteria are defined as follows.

\begin{itemize}
    \item \textit{Distributional predictive accuracy.} Assessed via the relative pinball loss averaged across the $5$th, $50$th and $95$th percentiles, corresponding to $90\%$ nominal coverage. For each quantile, the pinball loss of the $\ell_{2}$-penalised quantile-regression forecast is divided by that of the corresponding null forecast (the unconditional sample quantile), yielding a ratio (similar to $1-\textrm{pseudo-}R^{2}$) in which lower values indicate greater predictive accuracy. Averaging over the three quantiles provides a joint measure of the fidelity of both point and interval predictions.
    \item \textit{Model complexity.} Assessed by two complementary measures: the \emph{structural complexity}
    $$
    \mathrm{SC}(\mathcal{S}) = \sum_{j \in \mathcal{S}}\bigl(\textrm{polynomial degree of term } j + \textrm{derivative order of term } j + 1\bigr),
    $$
    which penalises models not only for having more active terms but also for including high-order derivatives and high-degree polynomials; and the robust information complexity criterion (RICOMP-MM)~\cite{bozdogan2000akaike,guney2021robust}, which penalises ill-conditioned models through a complexity measure derived from the Fisher information matrix. For example, the structural complexity of $\left\{uu_{x}, u_{xx}\right\}$ would be $(2 + 1 + 1) + (1 + 2 + 1) = 8$.
    \item \textit{PDE uncertainty.} Following the uncertainty-penalised model selection of Thanasutives et al.~\cite{thanasutives2024ubic,thanasutives2025ubic}, we use the ratio $\norm{\bs{\hat{\sigma}}^{\textrm{RIF}}_{\textrm{train}}}_{1} / \norm{\bs{\hat{\beta}}^{\textrm{RIF}}_{\textrm{train}}}_{1}$ as a scale-invariant measure of coefficient uncertainty on split training data. This favours alternatives whose coefficients, obtained by heteroscedasticity-robust RIF regression~\cite{firpo2009unconditional}, exhibit lower dispersion relative to their magnitudes---i.e., smaller standard errors relative to the coefficient estimates.
    \item \textit{Miscalibration.} Extracted as the calibration-error component of the Murphy--Ehm decomposition of the pinball scoring function~\cite{murphy1973vector,ehm2016quantiles,gneiting2007strictly}, evaluated at the target quantile levels. This component captures the systematic bias between the predicted quantile and the empirical conditional quantile, isolated from the discrimination (resolution) component of the score. As with distributional predictive accuracy, miscalibration is averaged over the $5$th, $50$th and $95$th percentiles.
    \item \textit{Coverage-width criterion (CWC).} Adapted from Khosravi et al.~\cite{khosravi2010cwc},
    $$
    \textrm{CWC} = \frac{\textrm{NMPIL}}{g_{\sigma}\!\bigl(\eta_{\textrm{cwc}}\,(\textrm{PICP} - \mu_{\textrm{cwc}})\bigr)},
    $$
    which jointly evaluates the empirical coverage probability (PICP) and the normalised mean prediction-interval length (NMPIL) of the prediction intervals constructed from the conformal predictions. The numerator NMPIL is the mean interval width normalised by the range of the observed response, so that lower NMPIL indicates sharper intervals. The denominator activates a sigmoid penalty $g_{\sigma}$ when the empirical coverage falls below a threshold $\mu_{\textrm{cwc}} = 0.85$, slightly below the nominal coverage of $0.9$; the steepness $\eta_{\textrm{cwc}} = 200$ is large because the coverage guarantee is already supplied by conformal prediction. CWC is therefore minimised, and combines the requirements of sharpness and sustained coverage.
\end{itemize}

All six criteria are to be minimised (lower loss, lower structural complexity, lower informational complexity, lower PDE uncertainty, lower miscalibration and lower CWC are better), and the optimisation directions are encoded by the vector $\bs{o}$ in Algorithm~\ref{alg:mcdm_pde}. We normalise the decision matrix to a common scale and compute criterion weights using a variance-based weighting scheme, mitigating subjective bias towards any single metric.

\paragraph{Multi-criteria decision-making methods.}To assess the alternatives, we employ an ensemble of five established MCDM methods, all implemented through \texttt{pymcdm}~\cite{kizielewicz2023pymcdm}. The methods are complementary in how they aggregate criterion values into a single score: TOPSIS~\cite{hwang1981multiple} measures the relative distance of each alternative to an ideal best and an ideal worst point; VIKOR~\cite{opricovic2004compromise} balances overall utility against the worst individual-criterion regret to find a compromise solution; COMET~\cite{salabun2015characteristic} interpolates each alternative's score from a grid of human-interpretable characteristic objects using triangular fuzzy numbers, avoiding rank reversal under alternative changes; PROMETHEE-II~\cite{brans1985preference} performs pairwise outranking by integrating preference functions over criterion differences into a net flow; and CoCoSo~\cite{yazdani2019cocoso} combines a weighted sum, a weighted product, and a compromise mean into an aggregate strategy. Because these methods rest on distinct mathematical formulations of optimality and may occasionally yield conflicting rankings, we stabilise the final selection through rank aggregation. The individual algorithmic rankings are first fused via five consensus methods (the Borda count, Schulze, Kemeny--Young, Plurality and Pairwise Greedy methods), and a further Borda count is then applied over the five aggregated rankings, summing positional scores into a single composite score. The entire procedure is embedded in a recursive elimination loop: at each iteration, the winning alternative is identified by lexicographic ordering---most rank-$1$ wins across the five aggregation methods, then highest composite Borda score, then largest support size---and removed from the pool, after which the remaining alternatives are re-evaluated. The top-ranked alternative is ultimately selected as the optimal governing equation.

\section{Empirical experiments and results}\label{sec:exp}

We evaluate KO-PDE-IDENT through four main experiments on canonical nonlinear PDEs. Each experiment targets one component of the framework: sensitivity to the initial FDR level and the choice of knockoff feature statistic (Section~\ref{sec:exp_fdr}); the knockoff generation method and aggregation procedure (Section~\ref{sec:exp_gen}); structural recovery via RFE (Section~\ref{sec:exp_rfe}); and multi-criteria model selection together with coefficient accuracy (Section~\ref{sec:exp_mcdm}). The Python code will be available for public access at \url{https://github.com/Pongpisit-Thanasutives/KO-PDE-IDENT}.

\subsection{Experimental setup and benchmark PDE systems}\label{sec:exp_setup}

\begin{table}[t]
\centering
\caption{PDE dataset descriptions. The number of discretised spatial and temporal points is specified in the ``Spatial'' and ``Temporal'' columns, respectively.}
\label{tab:pde_datasets}
\resizebox{\textwidth}{!}{
\begin{tabular}{lccc}
\toprule
PDE & Governing equation(s) & Spatial $(x, y, z)$ & Temporal $(t)$ \\
\midrule
Burgers & $\partial_t u = 0.1\partial_x^2 u - u\partial_x u$ & 256 on $[-8, 8]$ & 101 on $[0, 10]$ \\
KdV & $\partial_t u = -\partial_x^3 u - 6u\partial_x u$ & 512 on $[-20, 20]$ & 501 on $[0, 40]$ \\
KS & $\partial_t u = -\partial_x^2 u - \partial_x^4 u - u\partial_x u$ & 1024 on $[0, 32\pi]$ & 251 on $[0, 100]$ \\
\multirow{2}{*}{RD} & $\partial_t u = u - u^3 + v^3 - uv^2 + u^2v + 0.1(\partial_x^2 u + \partial_y^2 u)$ & \multirow{2}{*}{$256 \times 256$ on $[-10, 10] \times [-10, 10]$} & \multirow{2}{*}{201 on $[0, 10]$} \\
 & $\partial_t v = v - u^3 - v^3 - uv^2 - u^2v + 0.1(\partial_x^2 v + \partial_y^2 v)$ & & \\
\multirow{2}{*}{GS} & $\partial_t u = 0.014 - 0.014u - uv^2 + 0.02(\partial_x^2 u + \partial_y^2 u + \partial_z^2 u)$ & \multirow{2}{*}{$128 \times 128 \times 128$ on $[-1.25, 1.25]^3$} & \multirow{2}{*}{100 on $[0, 10]$} \\
 & $\partial_t v = -0.067v + uv^2 + 0.01(\partial_x^2 v + \partial_y^2 v + \partial_z^2 v)$ & & \\
\bottomrule
\end{tabular}
}
\end{table}

\begin{figure}[t]
    \centering
    \begin{subfigure}[b]{0.325\textwidth}
        \centering
        \includegraphics[width=\textwidth]{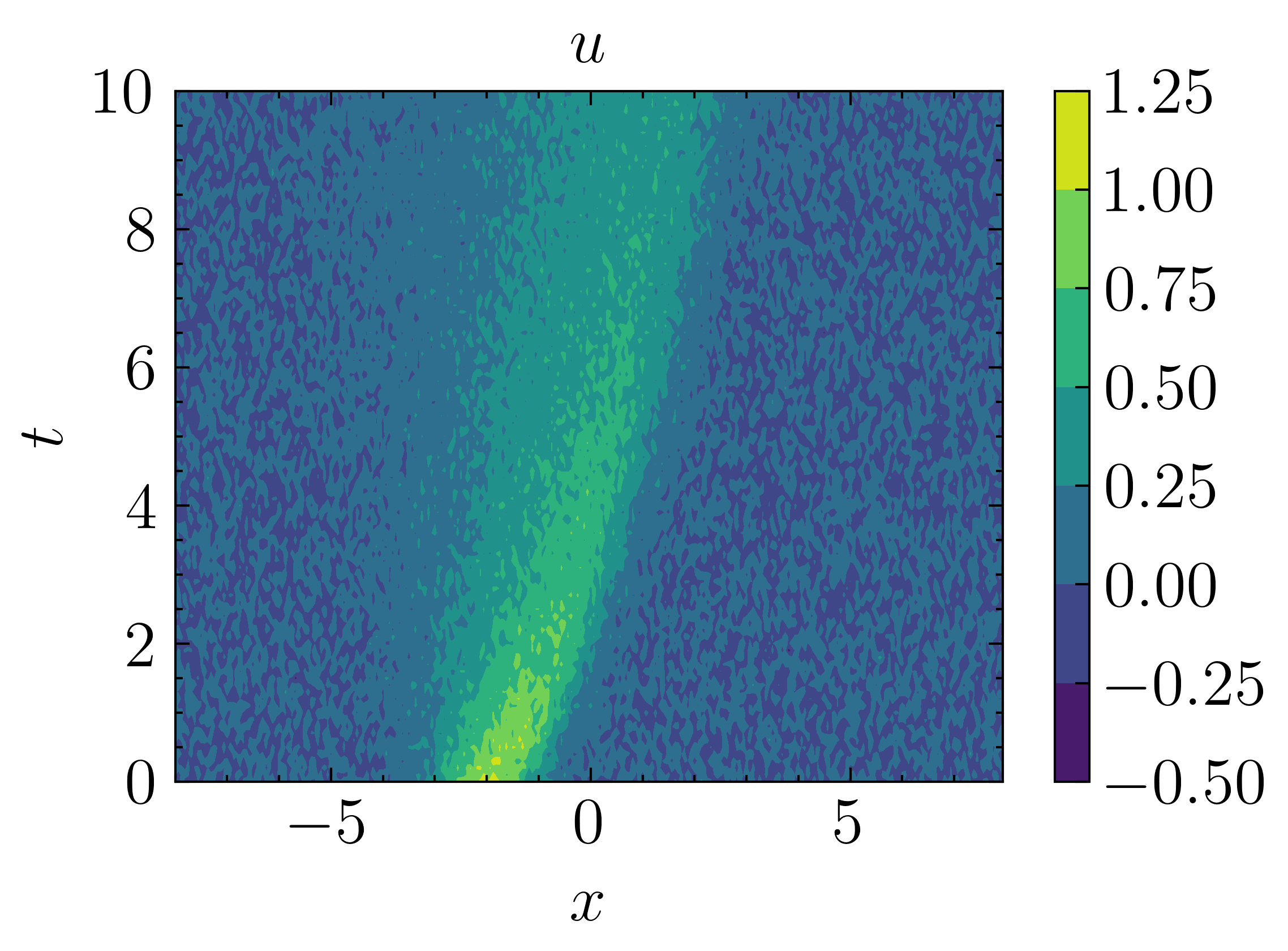}
        \caption{Burgers}
        \label{fig:burgers}
    \end{subfigure}
    \hfill
    \begin{subfigure}[b]{0.325\textwidth}
        \centering
        \includegraphics[width=\textwidth]{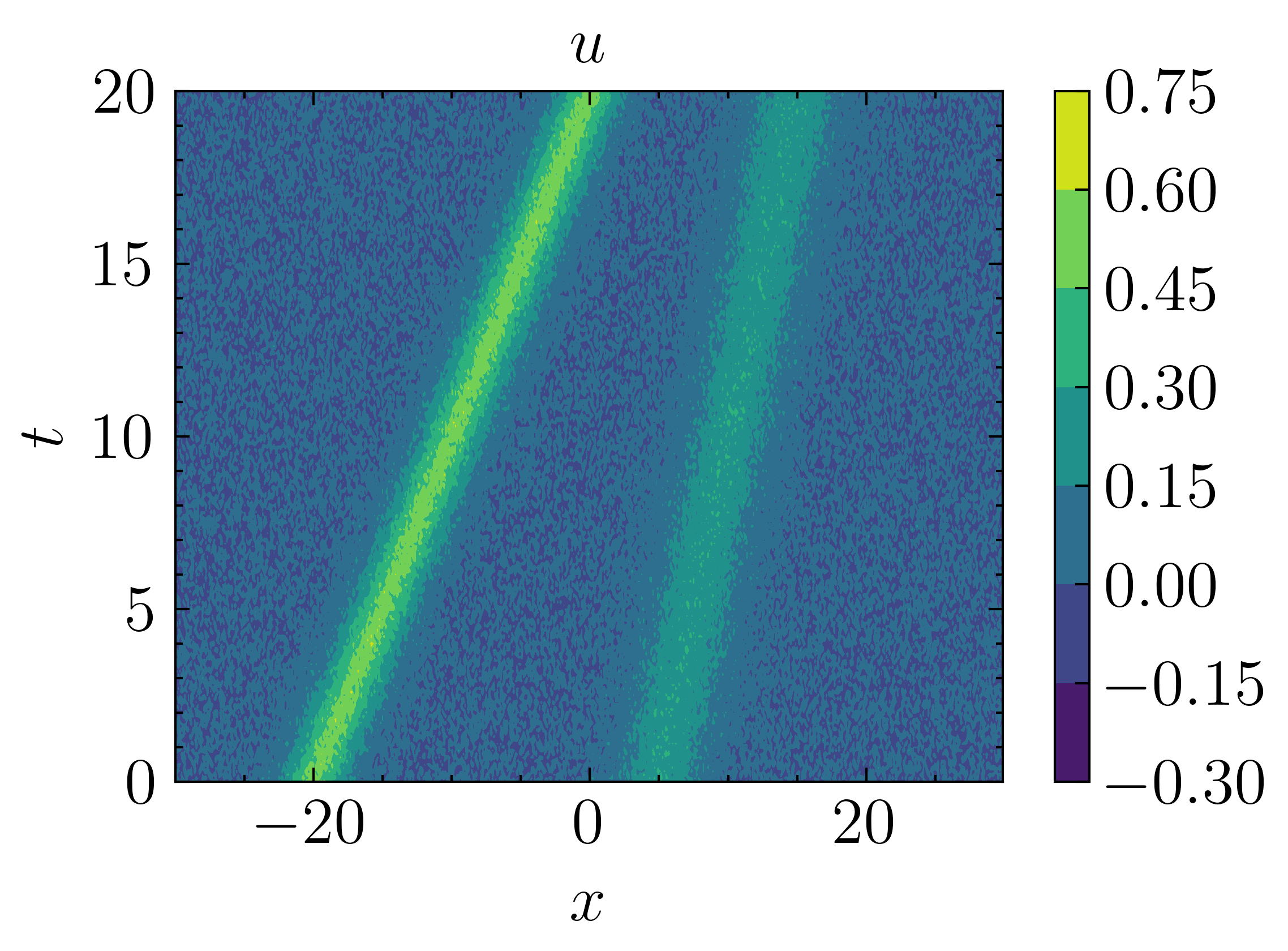}
        \caption{Korteweg--de Vries}
        \label{fig:kdv}
    \end{subfigure}
    \hfill
    \begin{subfigure}[b]{0.325\textwidth}
        \centering
        \includegraphics[width=\textwidth]{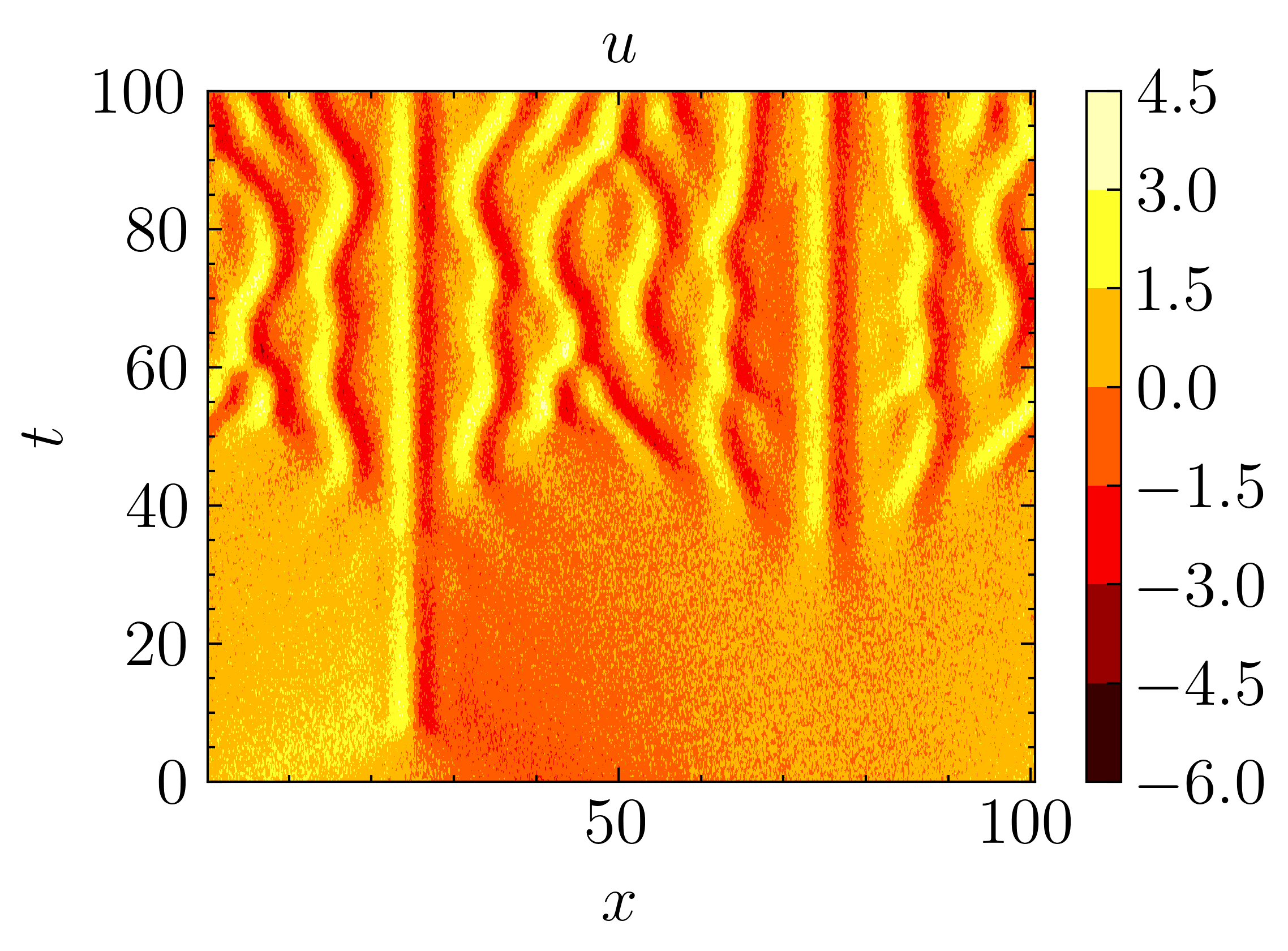}
        \caption{Kuramoto--Sivashinsky}
        \label{fig:ks}
    \end{subfigure}
    \hspace*{\fill}
    \begin{subfigure}[b]{0.39\textwidth}
        \centering
        \includegraphics[width=\textwidth]{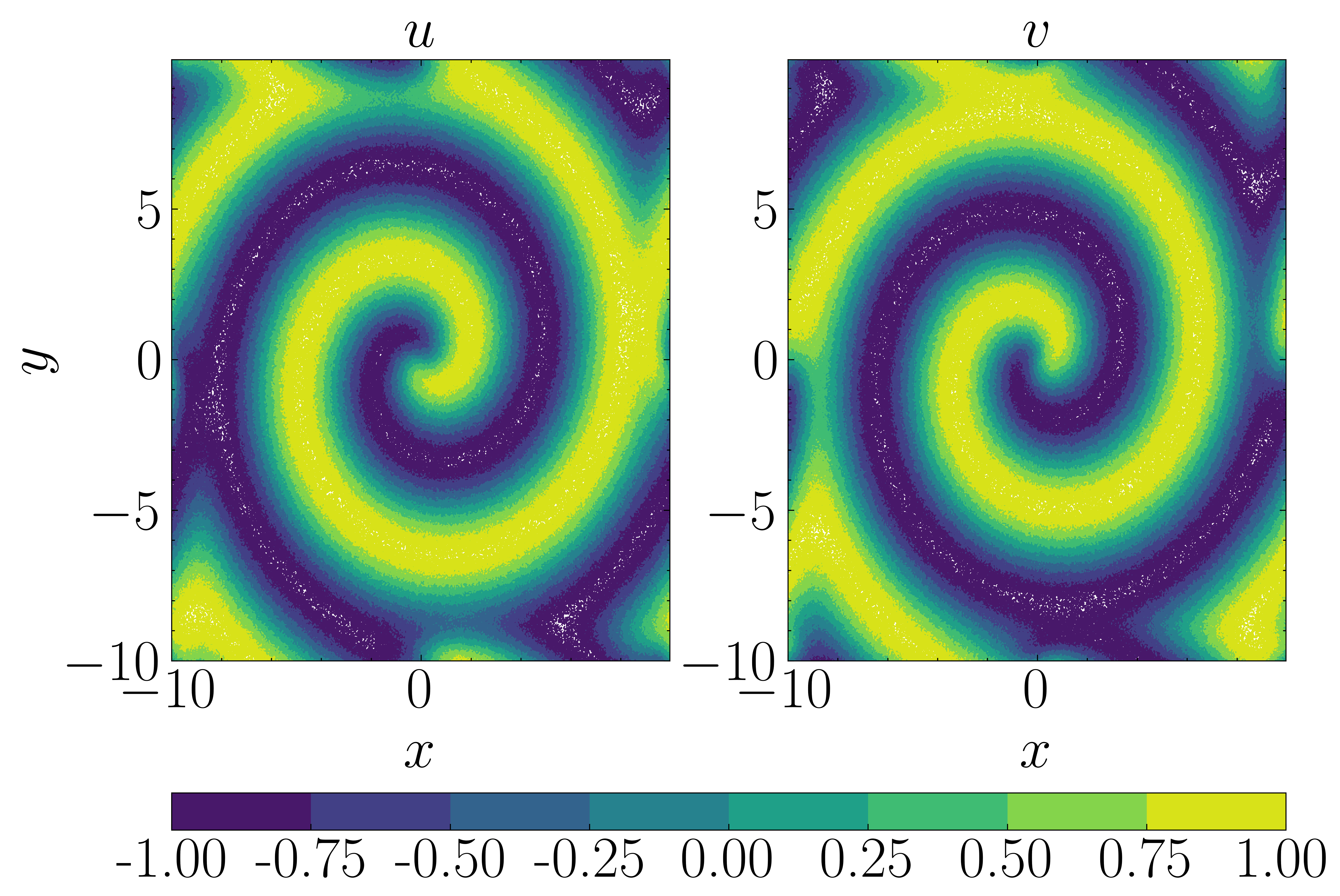}
        \caption{Reaction--Diffusion system}
        \label{fig:rd}
    \end{subfigure}
    \hfill
    \begin{subfigure}[b]{0.39\textwidth}
        \centering
        \includegraphics[width=\textwidth]{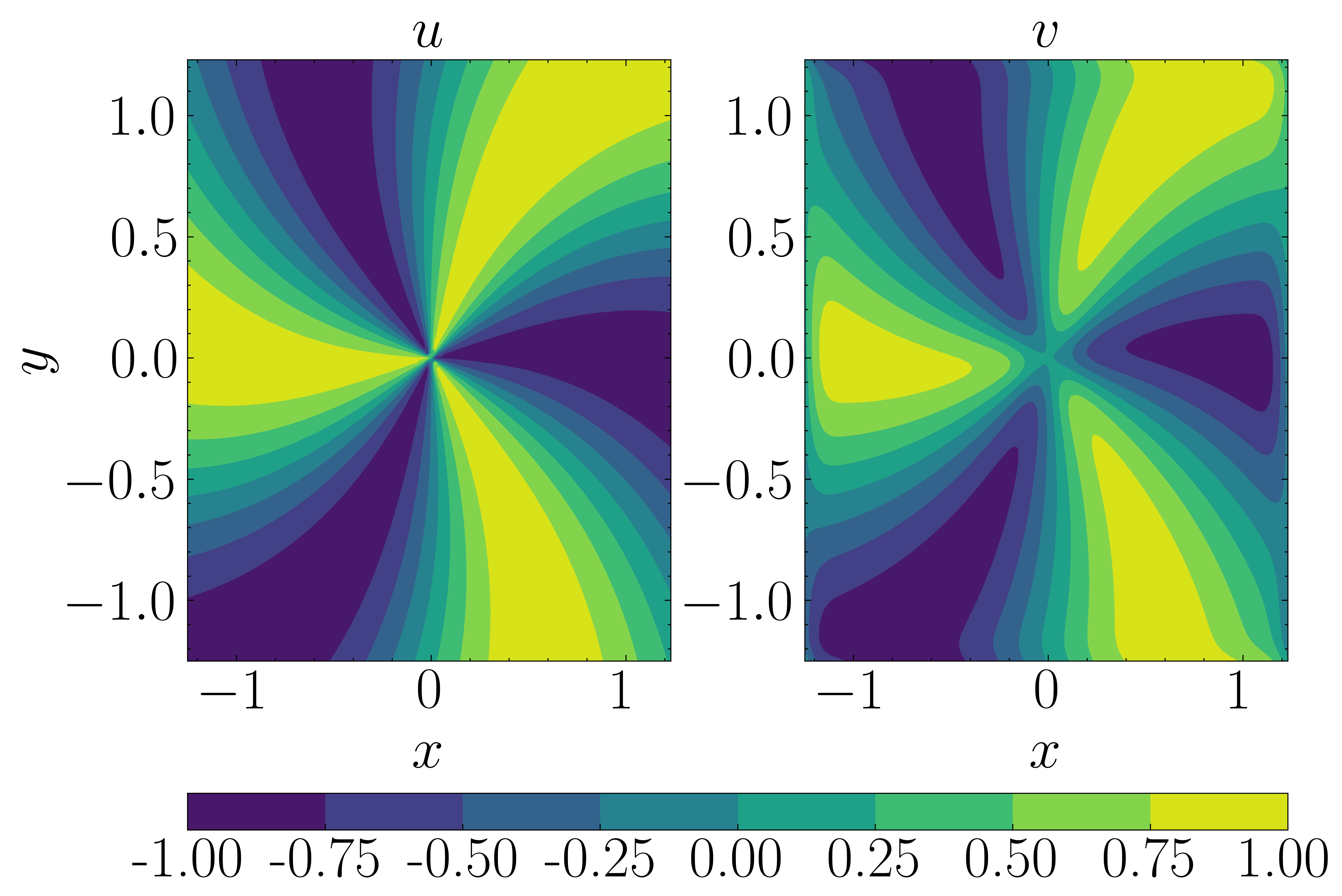}
        \caption{Gray--Scott system}
        \label{fig:dataset5}
    \end{subfigure}
    \hspace*{\fill}
    \caption{Two-dimensional visualisation of noisy state variables for each PDE dataset.}
    \label{fig:pde_datasets}
\end{figure}

We benchmarked the framework on five canonical PDEs spanning a range of dimensionalities and structural complexities: the 1D viscous Burgers' equation, the 1D Korteweg--de Vries (KdV) equation, the 1D chaotic Kuramoto--Sivashinsky (KS) equation, a 2D Reaction--Diffusion (RD) system (seven-term $u$- and $v$-equations with coupled polynomial reactions), and the 3D Gray--Scott (GS) system (five- and six-term coupled equations with cubic interactions). The dataset specifications are summarised in Table~\ref{tab:pde_datasets}. All experiments were run on an Apple MacBook Air M1 with 16~GB of memory.

To test the robustness of the KO-PDE-IDENT framework against noisy observation data, the numerical state variables were corrupted with additive Gaussian noise scaled proportionally to the noise level $\zeta$ and the standard deviation ($\mathrm{sd}$) of the clean data. Mathematically, $\textrm{U}_{ij} \gets \textrm{U}_{ij} + Z$, where $Z \sim \frac{\mathrm{sd}(\bs{\mathrm{U}})\zeta}{100}\mathcal{N}(0, 1)$. The noise levels are $50\%$ for Burgers, KdV and KS; $10\%$ for RD; and $0.1\%$ for GS. The noisy data were denoised using BM3D~\cite{dabov2007bm3d} for the 1D PDEs (whose space--time fields are image-like) and Savitzky--Golay filtering for the higher-dimensional RD and GS systems, prior to library construction. Overcomplete candidate libraries comprising polynomial terms up to degree $6$ and spatial derivatives up to order $6$ were assembled via the weak formulation of Reinbold et al.~\cite{reinbold2020using} as implemented in PySINDy~\cite{kaptanoglu2022pysindy}, with the number of domain centres set to $2000$ for Burgers, KdV and KS. For RD and GS, we considered up to second-order polynomial terms and spatial derivatives, with $10\,000$ domain centres. We employ the SHAP-based difference statistic as the knockoff feature importance, the \texttt{abess} solver as the default estimator, $100$ knockoff realisations for aggregation, and an initial FDR threshold $q_{0} = 0.50$ unless stated otherwise.

\subsection{Sensitivity to the FDR control level and knockoff feature statistics}\label{sec:exp_fdr}

Algorithm~\ref{alg:adaptive_fdr} requires an initial FDR level $q_{0}$ that governs the stringency of the knockoff selection threshold. We assessed the sensitivity of the resulting support set $\hat{\mathcal{S}}$ to the choice of $q_{0} \in \left\{\tfrac{1}{4}, \tfrac{1}{3}, \tfrac{1}{2}\right\}$, with corresponding upper bounds $q_{\max} \in \left\{\tfrac{1}{3}, \tfrac{1}{2}, 1\right\}$, while holding all other algorithmic hyperparameters fixed. Table~\ref{tab:fdr_sensitivity} reports the target FDR $q$ tuned by Algorithm~\ref{alg:adaptive_fdr}, together with the eFDR, ePOWER and support size. We observe that ePOWER equals $1.00$ in every configuration: the knockoff filter never discards a true governing term, regardless of $q_{0}$. Moreover, for four of the five benchmarks (KdV, KS, RD and GS), the selected support set is entirely invariant to $q_{0}$---the same candidates are retained at the same eFDR across all three settings. Only Burgers exhibits any sensitivity: a more generous initialisation ($q_{0} = \tfrac{1}{2}$) admits additional terms, increasing $\abs{\hat{\mathcal{S}}}$ from $4$ to $8$ (with eFDR rising from $0.5$ to $0.75$); the support size nevertheless remains small enough for Algorithms~\ref{alg:shap_rfe} and~\ref{alg:mcdm_pde} to eliminate the false discoveries and recover the sparse governing PDE. These results confirm that the $\ell_{0}$-constrained knockoff filter is largely insensitive to the initial FDR choice: a generous starting value $q_{0} = \tfrac{1}{2}$ safely maximises statistical power while producing a manageably small candidate set for subsequent refinement and model selection.

\begin{table}[htbp]
\centering
\caption{Sensitivity to the initial FDR control level $q_{0}$. We report the eFDR and ePOWER of the selected support set after applying the $\ell_{0}$-constrained multiple knockoff filtering, but before RFE is performed by Algorithm~\ref{alg:shap_rfe}.}\label{tab:fdr_sensitivity}
\small
\begin{tabular}{llcccc}
\hline
PDE & Initial FDRs: $q_{0}, q_{\max}$ & Target FDR $q$ & eFDR & ePOWER & $\abs{\hat{\mathcal{S}}}$ \\
\hline
\multirow{3}{*}{Burgers}
& 1/4, 1/3 & 0.27 & 0.50 & 1.00 & 4 \\
& 1/3, 1/2 & 0.33 & 0.50 & 1.00 & 4 \\
& 1/2, 1 & 0.50 & 0.75 & 1.00 & 8 \\
\hline
\multirow{3}{*}{KdV}
& 1/4, 1/3 & 0.25 & \multirow{3}{*}{0.71} & \multirow{3}{*}{1.00} & \multirow{3}{*}{7} \\
& 1/3, 1/2 & 0.33 &  &  &  \\
& 1/2, 1 & 0.50 &  &  &  \\
\hline
\multirow{3}{*}{KS}
& 1/4, 1/3 & 0.25 & \multirow{3}{*}{0.00} & \multirow{3}{*}{1.00} & \multirow{3}{*}{3} \\
& 1/3, 1/2 & 0.33 &  &  &  \\
& 1/2, 1 & 0.50 &  &  &  \\
\hline
\multirow{3}{*}{RD (u, v)}
& 1/4, 1/3 & (0.25, 0.25) & \multirow{3}{*}{(0.36, 0.36)} & \multirow{3}{*}{(1.00, 1.00)} & \multirow{3}{*}{(11, 11)} \\
& 1/3, 1/2 & (0.33, 0.33) &  &  &  \\
& 1/2, 1 & (0.50, 0.50) &  &  &  \\
\hline
\multirow{3}{*}{GS (u, v)}
& 1/4, 1/3 & (0.25, 0.25) & \multirow{3}{*}{(0.00, 0.00)} & \multirow{3}{*}{(1.00, 1.00)} & \multirow{3}{*}{(6, 5)} \\
& 1/3, 1/2 & (0.33, 0.33) &  &  &  \\
& 1/2, 1 & (0.50, 0.50) &  &  &  \\
\hline
\end{tabular}
\end{table}

The choice of feature-importance statistic $W_{j}$ in the knockoff filter determines how the original and knockoff features are contrasted and may affect eFDR and selection power. We therefore compared our SHAP-based difference statistic (SHAP-DS), which measures the marginal SHAP contribution of each original feature minus its knockoff counterpart from the \texttt{abess} fit, with the swap and swap-integral statistics of Gimenez et al.~\cite{gimenez2019knockoffs}, which are also generic to regression estimators. Recall that we use the $\ell_{0}$-penalised \texttt{abess} as the backbone sparse estimator in Algorithm~\ref{alg:adaptive_fdr}. All three statistics yield identical eFDR, power and support size $\abs{\hat{\mathcal{S}}}$ (see Table~\ref{tab:fdr_sensitivity} for the values at $q = 0.50$) on every PDE system. The choice among them does not affect the FDR guarantee---which holds for any valid flip-sign statistic. Empirically, we observe that they differ only in computational cost, as reported in Table~\ref{tab:stat_comparison}. SHAP-DS is therefore the preferred default: it achieves the same structural recovery at the fastest wall time, with favourable scaling in the number of candidate terms; this advantage becomes particularly pronounced for the higher-dimensional RD and GS systems. A computational complexity analysis of these feature statistics is provided in Appendix~\ref{app:complexity}.

\begin{table}[htbp]
	\centering
	\caption{Wall time (seconds) of three knockoff feature statistics---SHAP-DS, SWAP, and SWAP-INT---at $q = 0.50$. All three statistics yield identical eFDR, power, and $\abs{\hat{\mathcal{S}}}$ on every PDE (see Table~\ref{tab:fdr_sensitivity}). SHAP-DS is the only feature statistic that can be executed in under one minute on the specified computer machine.}\label{tab:stat_comparison}
	\small
	\begin{tabular}{lccc}
		\hline
		PDE & SHAP-DS & SWAP & SWAP-INT \\
		\hline
		Burgers    & 6  & 7  & 67  \\
		KdV        & 6  & 8  & 77  \\
		KS         & 6  & 7  & 74  \\
		RD (u, v)  & 43 & 74 & 452 \\
		GS (u, v)  & 20 & 54 & 380 \\
		\hline
	\end{tabular}
\end{table}

\subsection{Knockoff generators and aggregation methods}\label{sec:exp_gen}

Since the diagonal matrix $\mathrm{diag}(\bs{d})$ determines the joint distribution between the original variables and their knockoff copies, it consequently affects the discriminative power of the resulting feature statistics. We compared four algorithms listed below for constructing $\mathrm{diag}(\bs{d})$.

\begin{itemize}
	\item \textit{Equi} (Equicorrelated): minimises the mean absolute covariance (MAC) between $\bs{\Phi}$ and $\bs{\tilde{\Phi}}$ under the constraint that the correlation between each feature and its knockoff is uniform across all $p$ candidates~\cite{candes2018modelx}.
	\item \textit{SDP}: minimises the MAC via semidefinite programming, without the uniform-correlation constraint~\cite{candes2018modelx}.
	\item \textit{CI} (conditional independence): guarantees that each feature $\phi_{j}$ and its knockoff $\tilde{\phi}_{j}$ are conditionally independent given all the other features $\bs{\phi}_{-j}$; however, CI knockoffs do not always exist, so a heuristic defined in Ke et al.~\cite{ke2020power} is used in general.
	\item \textit{MVR} (minimum variance-based reconstructability): minimises the reconstructability of each original feature from the remaining original features and knockoffs---equivalently, maximises the conditional variance $\mathrm{var}(\phi_{j} \mid \bs{\phi}_{-j}, \bs{\tilde{\phi}})$ of each feature given all the others and the knockoffs---thereby making the knockoffs harder to distinguish from the originals~\cite{spector2022powerful}.
\end{itemize}

Each knockoff generation method was paired with three aggregation procedures for combining the $K = 100$ knockoff realisations: the (parameter-free) e-BH procedure, and the Benjamini--Hochberg procedure with quantile aggregation (BHq) of Nguyen et al.~\cite{nguyen2020aggregation} at quantile levels $\gamma \in \{0.50, 0.75\}$. All other settings---the \texttt{abess} estimator, the SHAP-DS feature statistic, the two covariance estimators in $\mathcal{C}$ and Algorithm~\ref{alg:adaptive_fdr}'s adaptive FDR scheme with $q_{0} = 0.50$---were held fixed. Table~\ref{tab:gen_comparison} reports the eFDR, ePOWER and support size $\abs{\hat{\mathcal{S}}}$ for each combination, evaluated before the RFE stage (Algorithm~\ref{alg:shap_rfe}).

\begin{table}[htbp]
	\centering
	\caption{Comparison of knockoff sampling algorithms and aggregation methods. Four algorithms for constructing the knockoff S-matrix $\mathrm{diag}(\bs{d})$ are paired with three aggregation procedures: the e-BH procedure and Benjamini--Hochberg with quantile aggregation at $\gamma \in \{0.50, 0.75\}$. All results are obtained with $K = 100$ realisations and $q_{0} = 0.50$. ePOWER equals $1.00$ throughout except where marked with $\dagger$. For RD, the $u$- and $v$-equations yield identical results. For GS, the $u$-equation is reported; the Equi entries for GS ($u$) and GS ($v$) reflect the exact support recovered after the MIOSR truncation step in Algorithm~\ref{alg:adaptive_fdr}.}\label{tab:gen_comparison}
	\small
	\setlength{\tabcolsep}{3.5pt}
	\begin{tabular}{ll cc cc cc}
		\hline
		& & \multicolumn{2}{c}{e-BH} & \multicolumn{2}{c}{BHq ($\gamma{=}0.50$)} & \multicolumn{2}{c}{BHq ($\gamma{=}0.75$)} \\
		\cmidrule(lr){3-4}\cmidrule(lr){5-6}\cmidrule(lr){7-8}
		PDE & Generator & eFDR & $\abs{\hat{\mathcal{S}}}$ & eFDR & $\abs{\hat{\mathcal{S}}}$ & eFDR & $\abs{\hat{\mathcal{S}}}$ \\
		\hline
		\multirow{4}{*}{Burgers}
		& Equi & 0.75 & 8  & 0.60 & 5 & 0.50 & 4 \\
		& SDP  & 0.75 & 8  & 0.50 & 4 & 0.50 & 4 \\
		& CI   & 0.50 & 4  & 0.60 & 5 & 0.50 & 4 \\
		& MVR  & 0.50 & 4  & 0.50 & 4 & 0.50 & 4 \\
		\hline
		\multirow{4}{*}{KdV}
		& Equi & 0.71 & 7  & 0.71 & 7 & 0.71 & 7 \\
		& SDP  & 0.71 & 7  & 0.67 & 6 & 0.67 & 6 \\
		& CI   & 0.71 & 7  & 0.71 & 7 & 0.67 & 6 \\
		& MVR  & 0.71 & 7  & 0.71 & 7 & 0.67 & 6 \\
		\hline
		\multirow{4}{*}{KS}
		& Equi & 0.00 & 3  & 0.00 & 3 & 0.00 & 3 \\
		& SDP  & 0.00 & 3  & 0.25 & 4 & 0.00 & 3 \\
		& CI   & 0.25 & 4  & 0.25 & 4 & 0.25 & 4 \\
		& MVR  & 0.25 & 4  & 0.25 & 4 & 0.25 & 4 \\
		\hline
		\multirow{4}{*}{RD ($u, v$)}
		& Equi & 0.36 & 11 & 0.36 & 11 & 0.36 & 11 \\
		& SDP  & 0.30 & 10 & 0.30 & 10 & 0.30 & 10 \\
		& CI   & 0.30 & 10 & 0.30 & 10 & 0.30 & 10 \\
		& MVR  & 0.30 & 10 & 0.30 & 10 & 0.30 & 10 \\
		\hline
		\multirow{4}{*}{GS ($u$)}
		& Equi & 0.00 & 6  & 0.00 & 6  & 0.00 & 6  \\
		& SDP  & 0.45 & 11 & 0.40 & 10 & 0.40 & 10 \\
		& CI   & 0.45 & 11 & 0.45 & 11 & 0.40 & 10 \\
		& MVR  & 0.45 & 11 & 0.40 & 10 & 0.40 & 10 \\
		\hline
		\multirow{4}{*}{GS ($v$)}
		& Equi & 0.00 & 5    & 0.00 & 5    & 0.76$^{\dagger}$ & 17 \\
		& SDP  & 0.55 & 11   & 0.50 & 10   & 0.50 & 10 \\
		& CI   & 0.55 & 11   & 0.55 & 11   & 0.50 & 10 \\
		& MVR  & 0.55 & 11   & 0.50 & 10   & 0.50 & 10 \\
		\hline
		\multicolumn{8}{l}{\footnotesize $^{\dagger}$\,ePOWER $= 0.80$; all other entries achieve ePOWER $= 1.00$.}
	\end{tabular}
\end{table}

The e-BH procedure consistently achieves ePOWER $= 1.00$. BHq at $\gamma = 0.50$ also maintains full power throughout, but BHq at $\gamma = 0.75$ incurs the only power failure in the entire table: on GS ($v$) with the equicorrelated generator, it over-selects $17$ features while dropping ePOWER to $0.80$. Since the e-BH procedure additionally requires no tuning parameter---unlike BHq with quantile aggregation, which depends on a choice of $\gamma$---we adopt it as the default aggregation method.

Although no single knockoff generation algorithm clearly dominates, the equicorrelated construction is a reliable choice overall. On KS, it recovers the exact true support ($\abs{\hat{\mathcal{S}}} = 3$, eFDR $= 0.00$) under both e-BH and BHq ($\gamma = 0.75$), whereas CI and MVR admit one false positive. On GS, the combination of the equicorrelated construction and e-BH achieves eFDR $= 0.00$ on both the $u$- and $v$-equations, because the selection set is large enough to trigger the MIOSR truncation step in Algorithm~\ref{alg:adaptive_fdr}, which trims the support to exactly $s_{\max}$ terms while preserving all true terms. On the other hand, SDP, CI and MVR produce support sets of size $10$--$11$ with eFDR $\geqslant 0.40$. On Burgers under e-BH, the equicorrelated generator is the least selective ($\abs{\hat{\mathcal{S}}} = 8$) while CI and MVR produce tighter sets ($\abs{\hat{\mathcal{S}}} = 4$); the additional false positives are nevertheless harmless, since Algorithms~\ref{alg:shap_rfe} and~\ref{alg:mcdm_pde} are designed to eliminate them. On RD, the $u$- and $v$-equations yield identical support sets for every configuration, and the results are invariant to the aggregation method within each generator, differing only between the equicorrelated construction ($\abs{\hat{\mathcal{S}}} = 11$) and the remaining three ($\abs{\hat{\mathcal{S}}} = 10$).

We adopt the equicorrelated construction with e-BH aggregation as the default throughout the remaining experiments because it achieves full power on every PDE, ensuring that no true governing term is lost. The equicorrelated construction is also the simplest, requiring only a scalar solve for $\mathrm{diag}(\bs{d})$ that satisfies the positive-semidefinite constraint. Its tendency to over-select rather than under-select is acceptable: the resulting support set remains small enough that false positives can be efficiently removed in subsequent stages, whereas lost true terms may never be recovered.

\subsection{Recursive feature elimination}\label{sec:exp_rfe}

The knockoff filter (Algorithm~\ref{alg:adaptive_fdr}) is designed to produce a small, overcomplete support set $\hat{\mathcal{S}}$ that retains all true governing terms at the cost of admitting a small number of false positives. Algorithm~\ref{alg:shap_rfe} subsequently refines this set via SHAP selection and then subjects each remaining vulnerable term to a knockoff-perturbed Wilcoxon signed-rank test at significance level $\alpha = 0.10$. The knockoff-perturbation subprocedure substitutes the vulnerable term with its conditionally independent knockoff copy. We then test whether this substitution yields a statistically significant improvement, or no impairment, in the penalised residual score. If it does ($p$-value $< \alpha$), this provides evidence that the vulnerable term is statistically unnecessary, and we are advised to trigger its removal. Table~\ref{tab:rfe_results} reports the support set $\hat{\mathcal{S}}_{\mathrm{RFE}}$ after RFE on all five benchmarks. We observe ePOWER $= 1.00$ in every case: the RFE stage never discards a true governing term, and the knockoff-perturbed Wilcoxon test never rejects the null hypothesis for a genuine mechanism. On KS, RD and GS, RFE additionally achieves eFDR $= 0.00$, recovering the exact true support set.

\begin{table}[htbp]
	\centering
	\caption{Identified support $\hat{\mathcal{S}}_{\mathrm{RFE}}$ after RFE (Algorithm~\ref{alg:shap_rfe}). We list the true support size $c$ defined in~\eqref{eq:pde-formulation}. $\abs{\hat{\mathcal{S}}_{\mathrm{RFE}}}$ is the discovered support size after RFE.}\label{tab:rfe_results}
	\small
	\setlength{\tabcolsep}{4pt}
	\begin{tabular}{lccccl}
		\hline
		PDE & True support size $c$ & $\abs{\hat{\mathcal{S}}_{\mathrm{RFE}}}$ & eFDR & ePOWER & $\hat{\mathcal{S}}_{\mathrm{RFE}}$ \\
		\hline \noalign{\vskip 1mm}
		Burgers  & 2 & 5 & 0.60 & 1.00 & $\{uu_{x},\; u_{xx},\; u_{xxxx},\; u^{5}u_{x},\; u^{6}u_{x}\}$ \\
		KdV      & 2 & 4 & 0.50 & 1.00 & $\{uu_{x},\; u_{xxx},\; u_{x},\; u_{xxxxx}\}$ \\
		KS       & 3 & 3 & 0.00 & 1.00 & $\{u_{xx},\; u_{xxxx},\; uu_{x}\}$ \\
		RD ($u$) & 7 & 7 & 0.00 & 1.00 & $\{u,\; v^{3},\; u^{3},\; u^{2}v,\; uv^{2},\; u_{xx},\; u_{yy}\}$ \\
		RD ($v$) & 7 & 7 & 0.00 & 1.00 & $\{v,\; u^{3},\; v^{3},\; uv^{2},\; u^{2}v,\; v_{yy},\; v_{xx}\}$ \\
		GS ($u$) & 6 & 6 & 0.00 & 1.00 & $\{1,\; u,\; uv^{2},\; u_{xx},\; u_{yy},\; u_{zz}\}$ \\
		GS ($v$) & 5 & 5 & 0.00 & 1.00 & $\{v,\; uv^{2},\; v_{xx},\; v_{yy},\; v_{zz}\}$ \\
		\hline
	\end{tabular}
\end{table}

The Burgers and KdV examples illustrate the complementary roles of RFE and MCDM. For neither equation does RFE drive the eFDR to zero; using the knockoff-perturbed Wilcoxon test, we cannot conclude that the surviving false positives---higher-order derivatives ($u_{xxxx}$, $u_{xxxxx}$) and high-degree nonlinear interactions ($u^{5}u_{x}$, $u^{6}u_{x}$)---are unnecessary. Nevertheless, the $\ell_{0}$-constrained knockoff filter combined with RFE narrows the candidate library to just $4$--$5$ high-potential alternatives. The true degree of parsimony is then resolved at the MCDM stage (Section~\ref{sec:exp_mcdm}), which leverages multiple criteria (e.g., distributional predictive accuracy, complexity and coefficient uncertainty) to identify the governing equation among the alternatives generated by best-subset selection.

\paragraph{Remark (Effectiveness of knockoff-perturbed hypothesis testing).}Although knockoff perturbations are not primarily designed to aggressively remove insignificant features, they nevertheless contribute usefully to the elimination of false positives. To illustrate, increasing $s_{\max}$ from $8$ to $16$ on the Burgers dataset yields, after Algorithm~\ref{alg:shap_rfe}, an eFDR of $0.82$ ($\abs{\hat{\mathcal{S}}_{\mathrm{RFE}}} = 11$) without hypothesis testing and $0.80$ ($\abs{\hat{\mathcal{S}}_{\mathrm{RFE}}} = 10$, one false positive removed) with hypothesis testing, while ePOWER remains $1$ in both cases.

\subsection{Multi-criteria model selection and coefficient accuracy}\label{sec:exp_mcdm}

\subsubsection{Ranking PDE alternatives}

Once Algorithm~\ref{alg:shap_rfe} has produced the refined support set $\hat{\mathcal{S}}_{\mathrm{RFE}}$, best-subset selection enumerates all subsets of sizes $1, \ldots, \abs{\hat{\mathcal{S}}_{\mathrm{RFE}}}$ to generate competing PDE alternatives, each of which is evaluated on the decision-matrix criteria of Algorithm~\ref{alg:mcdm_pde}: model complexity (structural and informational), distributional predictive accuracy (mean relative pinball loss at the $5$th, $50$th and $95$th percentiles), PDE uncertainty (RIF-based coefficient dispersion), mean miscalibration (Murphy--Ehm decomposition) and the coverage-width criterion (CWC). Figures~\ref{fig:criteria_1d} and~\ref{fig:criteria_multi} visualise these criteria for all benchmarks; Figure~\ref{fig:mcdm_all} shows the corresponding normalised MCDM preference scores.

A consistent pattern emerges across all datasets. Taking the Burgers equation (Figure~\ref{fig:criteria_1d}a) as an illustrative example: the $1$-term alternative exhibits high pinball loss, miscalibration, PDE uncertainty and CWC, reflecting severe under-specification. Allowing the second term (the true $u_{xx}$ term added) produces a sharp improvement across all criteria. Beyond $\abs{\mathcal{S}} = 2$, predictive accuracy and miscalibration plateau, while both complexity measures and PDE uncertainty monotonically rise---the clear hallmark of over-parameterisation. KdV (Figure~\ref{fig:criteria_1d}b) exhibits the same elbow at $\abs{\mathcal{S}} = 2$, with PDE uncertainty showing a spike at $\abs{\mathcal{S}} = 3$. For KS (Figure~\ref{fig:criteria_1d}c), all quality criteria improve monotonically up to the full $3$-term support, consistent with RFE having already recovered the exact governing equation.

The multi-component systems present a richer picture. On RD (Figure~\ref{fig:criteria_multi}a,\,b), pinball loss and CWC decrease steadily up to $\abs{\mathcal{S}} = 7$ for both the $u$- and $v$-equations, while informational complexity rises sharply at the full support size. The predictive gains from allowing additional terms are genuine, reflecting the seven-term reaction--diffusion dynamics, and the two equations display nearly identical criterion profiles. Mean miscalibration drops sharply at $\abs{\mathcal{S}} = 2$ and remains near zero thereafter, indicating that even a minimal two-term model captures the conditional quantile structure well. PDE uncertainty, by contrast, dips at $\abs{\mathcal{S}} = 2$, rises through $\abs{\mathcal{S}} = 6$, and drops back considerably only at the full $\abs{\mathcal{S}} = 7$ support---suggesting that intermediate subsets exhibit unstable coefficient estimates that are resolved only when the complete reaction--diffusion structure is assembled. On GS (Figure~\ref{fig:criteria_multi}c,\,d), pinball loss and CWC fall steadily as support sizes are increased, reaching their minima at $\abs{\mathcal{S}} = 6$ for $u$ and $\abs{\mathcal{S}} = 5$ for $v$. Mean miscalibration and PDE uncertainty exhibit a spike at $\abs{\mathcal{S}} = 2$ before overall dropping by orders of magnitude and levelling off.

\begin{figure}[htbp]
	\centering
	\begin{tabular}{@{}c@{}}
		\includegraphics[width=\textwidth]{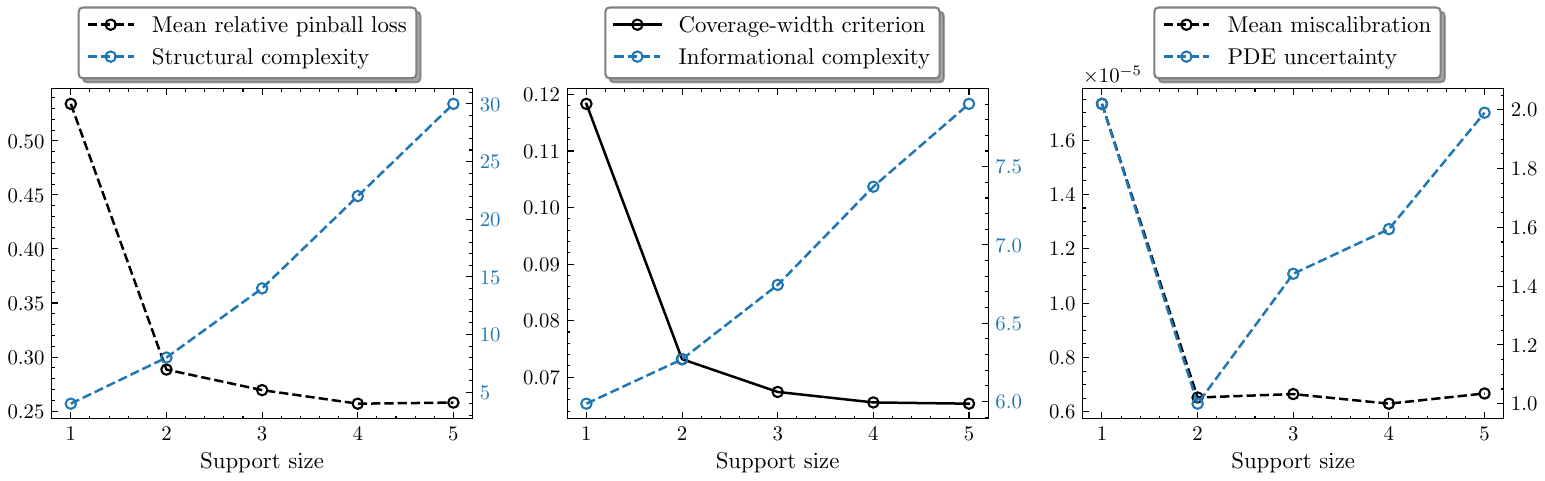} \\[-2pt]
		{\small (a) Burgers} \\[6pt]
		\includegraphics[width=\textwidth]{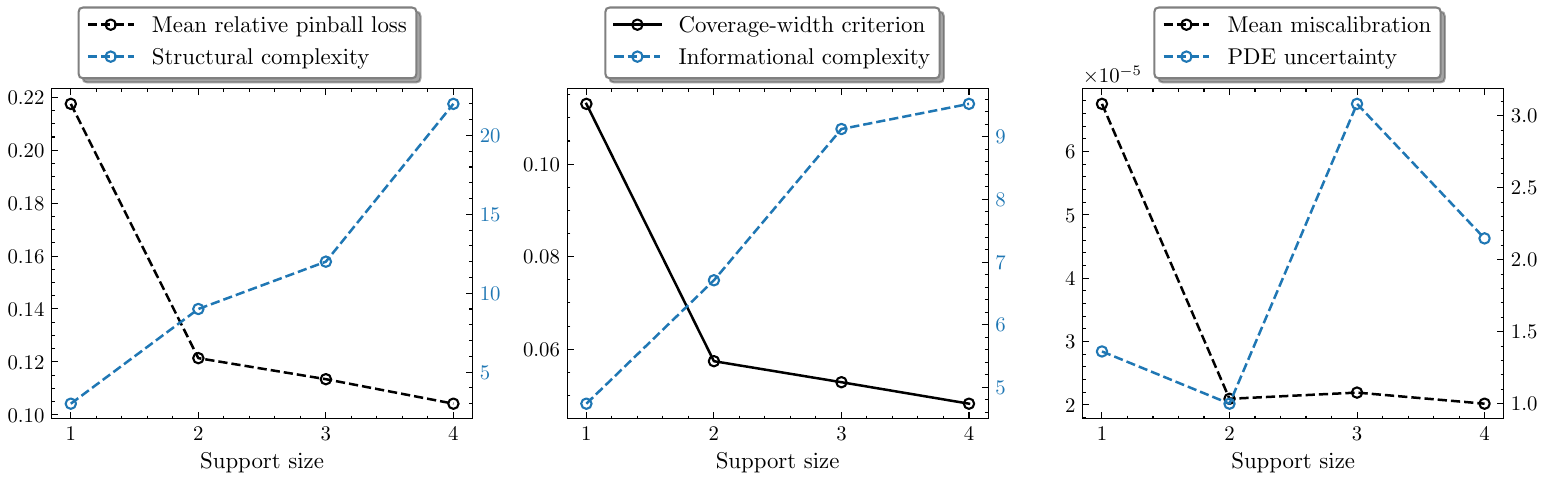} \\[-2pt]
		{\small (b) KdV} \\[6pt]
		\includegraphics[width=\textwidth]{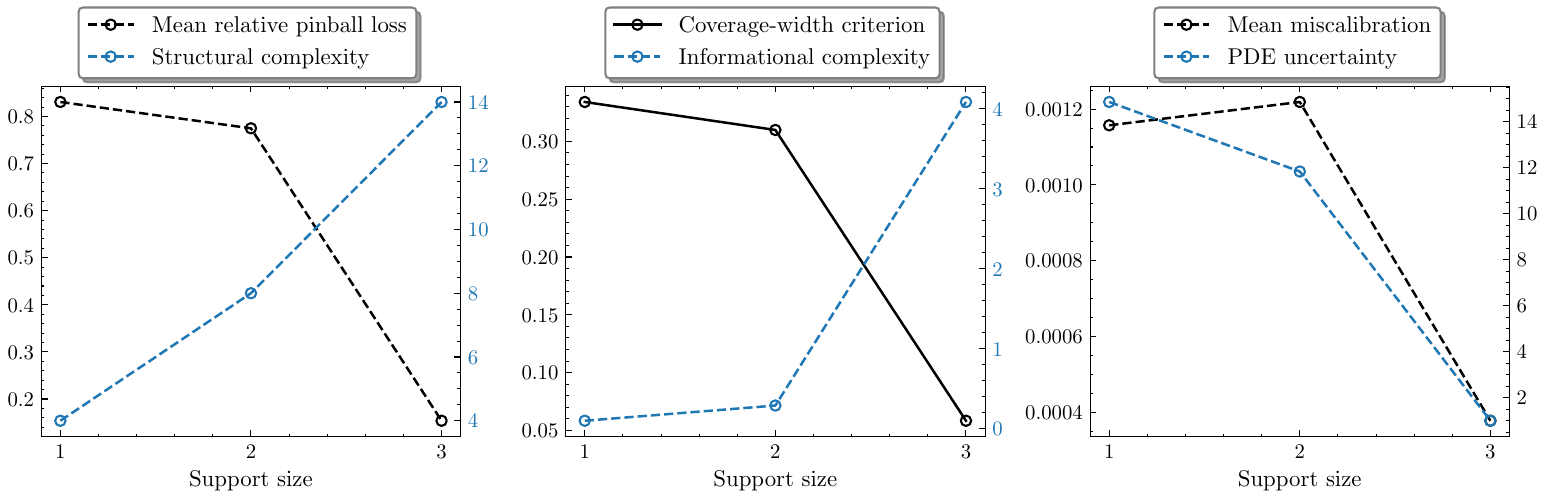} \\[-2pt]
		{\small (c) KS}
	\end{tabular}
	\caption{Decision-matrix criteria for the 1D benchmarks. Each row shows three panels: (left) mean relative pinball loss (black) and structural term complexity (blue); (centre) coverage-width criterion (black) and informational complexity via RICOMP-MM (blue); (right) mean miscalibration (black) and RIF-based PDE uncertainty (blue), all plotted against the support size $\abs{\mathcal{S}}$. The true support size occupies the optimal trade-off point in every case.}\label{fig:criteria_1d}
\end{figure}

\begin{figure}[htbp]
	\centering
	\begin{tabular}{@{}c@{}}
		\includegraphics[width=\textwidth]{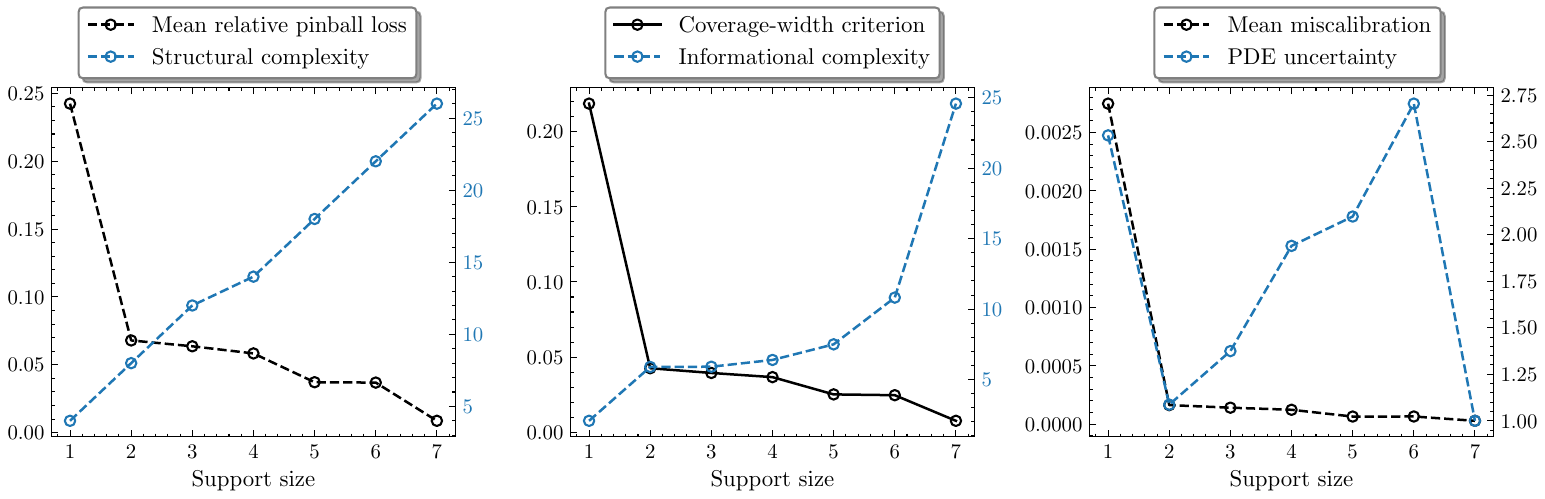} \\[-2pt]
		{\small (a) RD ($u$)} \\[6pt]
		\includegraphics[width=\textwidth]{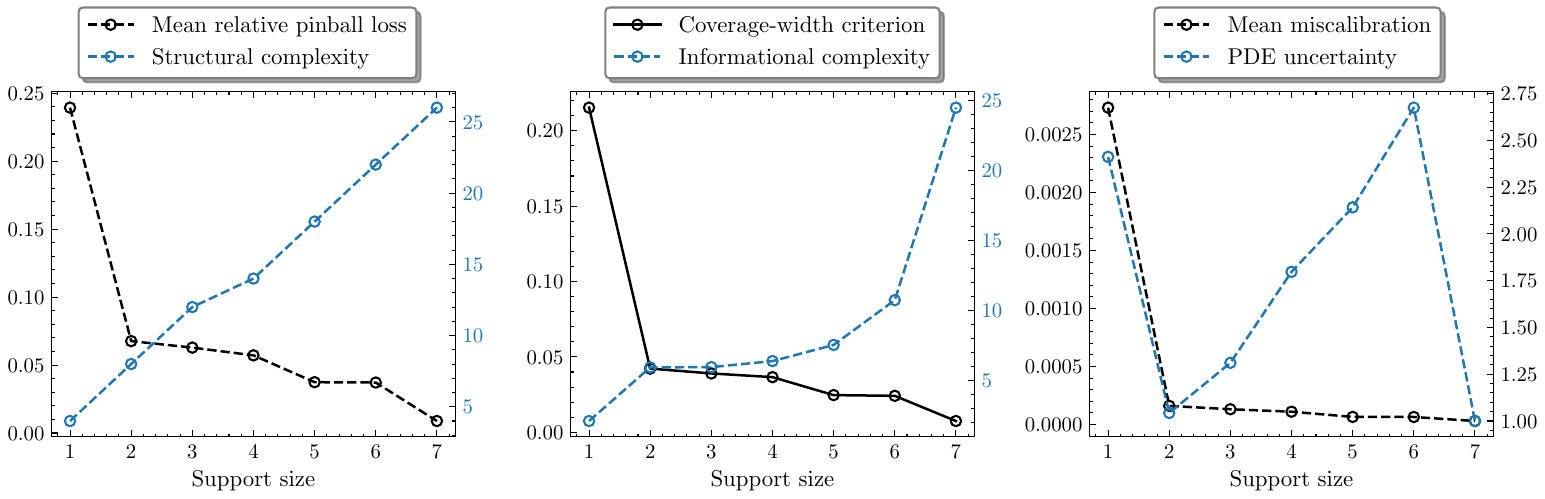} \\[-2pt]
		{\small (b) RD ($v$)} \\[6pt]
		\includegraphics[width=\textwidth]{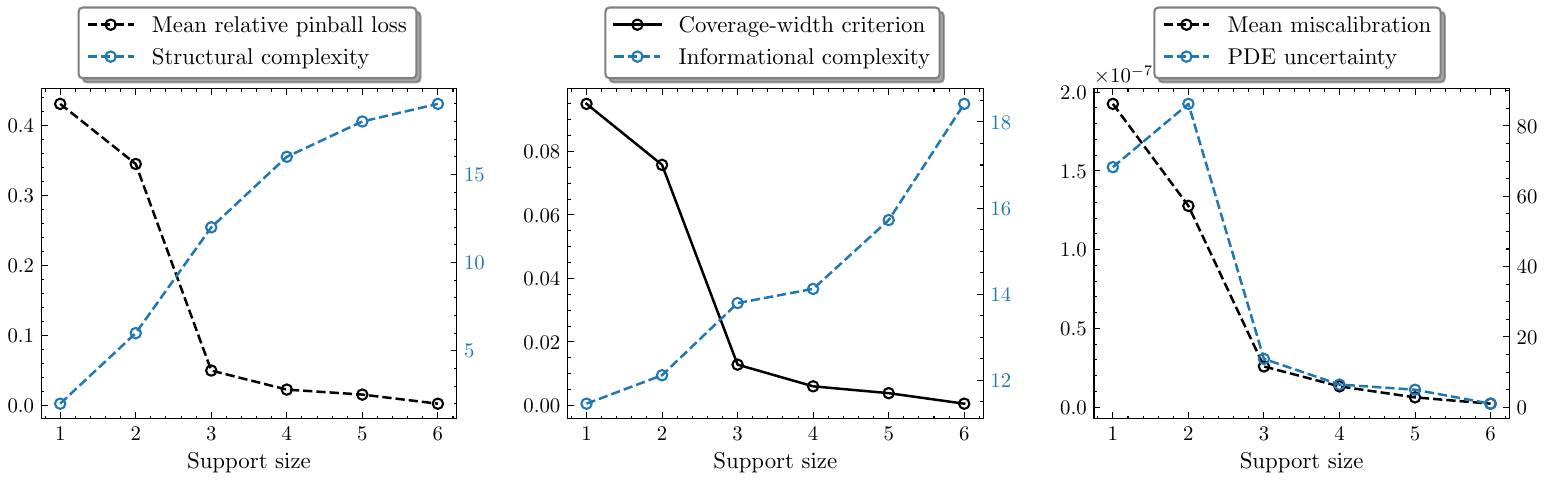} \\[-2pt]
		{\small (c) GS ($u$)} \\[6pt]
		\includegraphics[width=\textwidth]{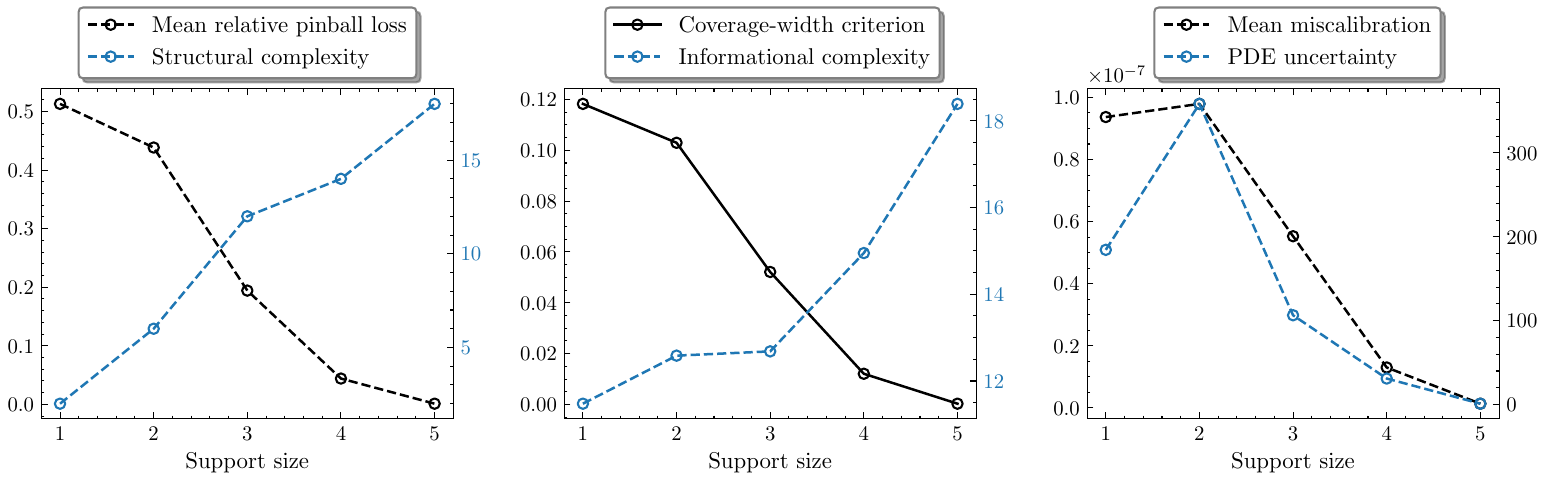} \\[-2pt]
		{\small (d) GS ($v$)}
	\end{tabular}
	\caption{Decision-matrix criteria for the multi-component benchmarks. The RD $u$- and $v$-equations display nearly identical criterion profiles. For GS, the true support size is $6$ for the $u$-equation and $5$ for the $v$-equation; both are correctly identified as the optimal trade-off points.}\label{fig:criteria_multi}
\end{figure}

The MCDM preference scores (Figure~\ref{fig:mcdm_all}) corroborate these observations. On Burgers (Figure~\ref{fig:mcdm_all}a), four of the five algorithms (TOPSIS, VIKOR, COMET, CoCoSo) assign peak preference to the 2-term alternative; PROMETHEE-II peaks at $\abs{\mathcal{S}} = 4$, but the consensus mean still ranks $\abs{\mathcal{S}} = 2$ first. On KdV (Figure~\ref{fig:mcdm_all}b), all five algorithms agree at $\abs{\mathcal{S}} = 2$, producing a sharper consensus. KS (Figure~\ref{fig:mcdm_all}c) is unanimous: the full 3-term support receives peak preference from every method. On RD (Figure~\ref{fig:mcdm_all}d,\,e), the individual methods are more dispersed---VIKOR appears to favour $\abs{\mathcal{S}} = 3$ initially---but the consensus correctly selects the full 7-term support for both the $u$- and $v$-equations. On GS (Figure~\ref{fig:mcdm_all}f,\,g), the consensus peaks at the true support size ($\abs{\mathcal{S}} = 6$ for $u$ and $\abs{\mathcal{S}} = 5$ for $v$); the $v$-equation additionally exhibits a transient dip at $\abs{\mathcal{S}} = 2$ before recovering.

\begin{figure}[htbp]
	\centering
	\begin{tabular}{@{}ccc@{}}
		\includegraphics[width=0.32\textwidth]{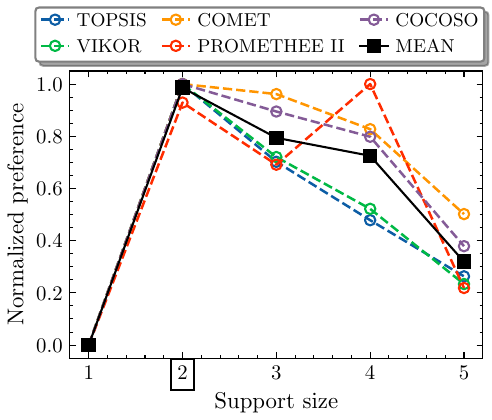} &
		\includegraphics[width=0.32\textwidth]{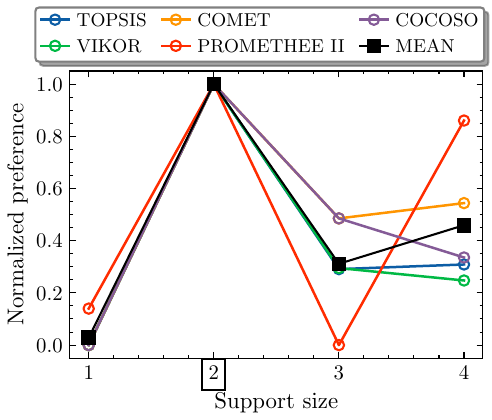} &
		\includegraphics[width=0.32\textwidth]{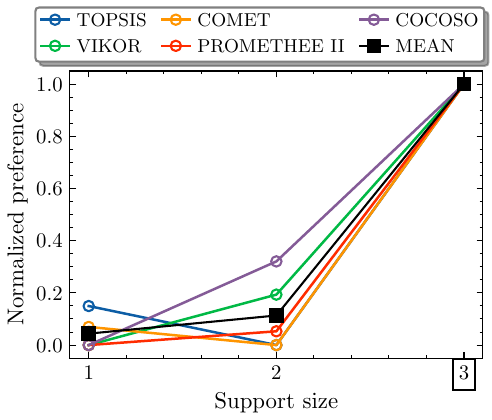} \\
		{\small (a) Burgers} & {\small (b) KdV} & {\small (c) KS} \\[6pt]
		\multicolumn{3}{c}{%
			\includegraphics[width=0.32\textwidth]{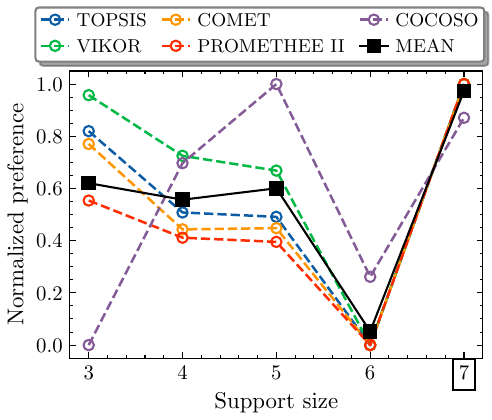}\hfill
			\includegraphics[width=0.32\textwidth]{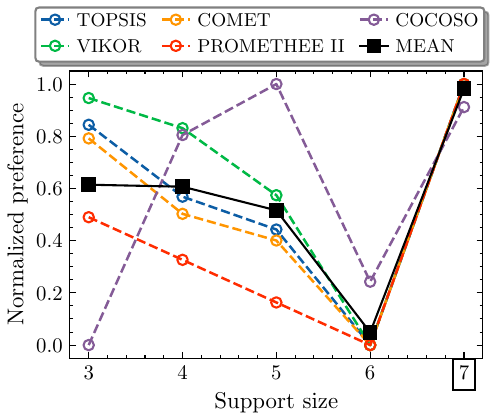}%
		} \\
		\multicolumn{3}{c}{%
			{\small (d) RD ($u$)}\hfill{\small (e) RD ($v$)}%
		} \\[6pt]
		\multicolumn{3}{c}{%
			\includegraphics[width=0.32\textwidth]{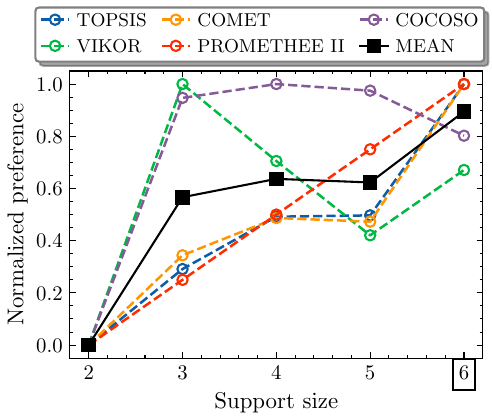}\hfill
			\includegraphics[width=0.32\textwidth]{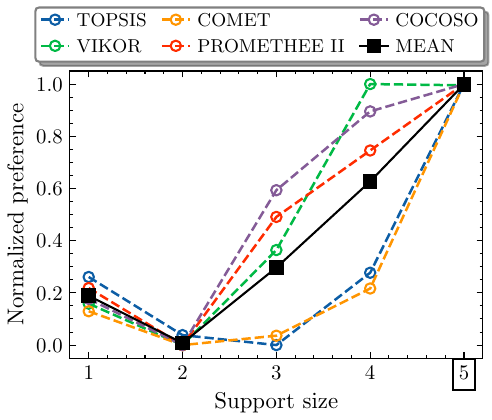}%
		} \\
		\multicolumn{3}{c}{%
			{\small (f) GS ($u$)}\hfill{\small (g) GS ($v$)}%
		}
	\end{tabular}
	\caption{Normalised MCDM preference scores for all benchmarks. Each panel shows the preference assigned by TOPSIS, VIKOR, COMET, PROMETHEE-II, and CoCoSo, together with the consensus mean (MEAN, solid black). The true support size (boxed on the horizontal axis) receives the highest consensus preference in every case.}\label{fig:mcdm_all}
\end{figure}

Across all five benchmarks, the MCDM ensemble consistently selected the correct governing equation as the top-ranked alternative. For Burgers and KdV---where RFE alone could not drive the eFDR to zero (Section~\ref{sec:exp_rfe})---the MCDM stage successfully discarded the remaining false positives ($u_{xxxx}$, $u^{5}u_{x}$, $u^{6}u_{x}$ for Burgers; $u_{x}$, $u_{xxxxx}$ for KdV), confirming that multi-criteria ranking can resolve the residual trade-offs across complexity, accuracy, uncertainty, calibration and coverage that single-criterion optimisation alone cannot. For KS, RD and GS---where RFE already achieved exact recovery---the MCDM consensus validated the RFE-selected support by ranking the full set first, providing an independent confirmation.

\subsubsection{Coefficient accuracy}

After structural recovery, we assessed the accuracy of the discovered equations using the percentage relative coefficient error ($\%$CE). For each truly relevant term from an exact recovery, we compute
\begin{equation*}
	\%\textrm{CE}_{j} = 100 \times \frac{\abs{\hat{\xi}^{\mathrm{OLS}}_{j} - \xi_{j}}}{\abs{\xi_{j}}},
\end{equation*}
where $\xi_{j}$ is the true coefficient and $\hat{\xi}^{\mathrm{OLS}}_{j}$ is obtained from an unbiased OLS re-estimation on the discovered support; we report the resulting mean and standard deviation (std) across the recovered terms. This re-estimation is unbiased because the MCDM stage selects the exact support set before it is performed on the original candidate library $\bs{\Phi}$.

Table~\ref{tab:coefficient_accuracy} reports the discovered equations and their coefficient errors. Under $50\%$ additive noise, the 1D equations achieve $\%$CE below $5\%$: Burgers and KS exhibit particularly low errors ($1.73\%$ and $2.57\%$, respectively), while KdV---whose governing terms involve higher-order derivatives that are more sensitive to noise amplification---reaches $4.51\%$. The multi-component RD system, observed at $10\%$ noise, achieves sub-$1\%$ coefficient errors on both equations. The GS system, observed at $0.1\%$ noise, attains errors below $0.03\%$, reflecting the near-exact governing equations. These results confirm that the weak-form integral formulation effectively attenuates measurement noise; the FDR-controlled knockoff screening produces a small set of candidate terms with finite-sample guarantees, and the RFE refinement and MCDM-based model selection that follow then preserve structural accuracy throughout the remainder of the data-driven discovery process.

\begin{table}[htbp]
	\centering
	\caption{Discovered governing equations and coefficient accuracy. The MCDM stage recovers the exact true support for every PDE. $\%$CE denotes the percentage relative coefficient error, computed from the full-precision coefficient estimates and reported as the mean and standard deviation across the recovered terms; the displayed coefficients are rounded to four decimal places, so $\%$CE values may not be exactly reproducible from them.}\label{tab:coefficient_accuracy}
	\small
	\resizebox{\textwidth}{!}{
		\begin{tabular}{lccc}
			\hline
			PDE & Discovered equation & $\%$CE (mean) & $\%$CE (std) \\
			\hline
			Burgers & $\partial_t u = 0.0991\partial_x^2 u - 0.9749u\partial_x u$ & 1.7302 & 0.7804 \\[3pt]
			KdV & $\partial_t u = -0.9352\partial_x^3 u - 5.8478u\partial_x u$ & 4.5107 & 1.9741 \\[3pt]
			KS & $\partial_t u = -0.9699\partial_x^2 u - 0.9749\partial_x^4 u - 0.9780u\partial_x u$ & 2.5740 & 0.3358 \\[3pt]
			\multirow{2}{*}{RD} & $\partial_t u = 0.9868u - 0.9858u^3 + 1.0004u^2v - 0.9855uv^2 + 0.9997v^3 + 0.0998\partial_y^2 u + 0.0996\partial_x^2 u$ & 0.7033 & 0.6130 \\
			& $\partial_t v = 0.9913v - 0.9999u^3 - 0.9902u^2v - 1.0001uv^2 - 0.9903v^3 + 0.1001\partial_y^2 v + 0.0999\partial_x^2 v$ & 0.4243 & 0.4456 \\[3pt]
			\multirow{2}{*}{GS} & $\partial_t u = 0.0140 - 0.0140u - 0.9999uv^2 + 0.0200\partial_z^2 u + 0.0200\partial_y^2 u + 0.0200\partial_x^2 u$ & 0.0281 & 0.0444 \\
			& $\partial_t v = -0.0670v + 1.0000uv^2 + 0.0100\partial_z^2 v + 0.0100\partial_y^2 v + 0.0100\partial_x^2 v$ & 0.0062 & 0.0042 \\
			\hline
		\end{tabular}
	}
\end{table}

\subsubsection{Comparison with sparse-regression baselines}\label{sec:exp_baselines}

A comparison with three PySINDy methods (STLSQ, SR3 and SSR) under a $3$-fold cross-validated tuning procedure confirms that no existing method offers formal FDR control across all benchmarks (Appendix~\ref{app:baselines}). On the high-noise 1D equations, all three baselines select $28$--$43$ terms against a true support of $2$, with eFDR exceeding $0.93$; KO-PDE-IDENT ultimately achieves eFDR $= 0$ and ePOWER $= 1$ (exact recovery) on every benchmark. Further empirical comparisons---against Lasso as the underlying sparse regressor, against selective inference, and against conventional information criteria (AIC and EBIC) for the final selection stage---are reported in Appendix~\ref{app:more_results}.

\section{Conclusion}

\paragraph{Summary.}We have introduced KO-PDE-IDENT, a framework that shifts data-driven PDE discovery from heuristic curve-fitting to the paradigm of statistical inference. The core idea is to first extract a statistically controlled set of potential candidate terms---with provable finite-sample FDR guarantees---using knockoff filtering before selecting the governing equation. This systematic approach prevents the search (the equation-mining process) from being lost in the vast space of candidate terms that plagues conventional loss-minimisation approaches, and thus makes the subsequent model selection more efficient and reliable.

The framework comprises three stages. First, weak-form integral representations attenuate observational noise, after which MX knockoff filters---equipped with an $\ell_{0}$-constrained \texttt{abess} solver and SHAP-based difference statistics---prune the overcomplete candidate library while controlling the FDR. An adaptive scheme evaluates both Graphical Lasso and Ledoit--Wolf covariance estimators, retaining the selection with the lower information-criterion score; the equicorrelated knockoff construction paired with the parameter-free e-BH aggregation (for derandomisation) is adopted as the default, achieving full power across all benchmarks. Second, an RFE procedure removes terms whose marginal contributions are dispensable, and subjects each vulnerable term to a knockoff-perturbed Wilcoxon signed-rank test to assess statistical necessity. Third, best-subset selection generates a discrete set of PDE alternatives, which are ranked by an ensemble of five MCDM algorithms aggregated through mean normalised preferences, using five distinct types of criteria (six in total): distributional predictive accuracy (via conformal prediction), model complexity (structural and informational), RIF-based coefficient uncertainty, miscalibration, and a coverage-width criterion.

We validated KO-PDE-IDENT on five canonical PDEs of varying dimensionality and noise severity: the 1D Burgers, KdV and KS equations at $50\%$ noise, a 2D Reaction--Diffusion system at $10\%$ noise, and a 3D Gray--Scott system at $0.1\%$ noise. The key findings are as follows.
\begin{itemize}
	\item \textit{Robustness to initial FDRs and feature statistics.} The combination of Algorithm~\ref{alg:adaptive_fdr} (knockoff filtering) and Algorithm~\ref{alg:shap_rfe} (refinement via RFE) decouples the initial FDR level $q_{0}$ from the final discovery outcome: structural recovery is identical for $q_{0} \in \{1/4, 1/3, 1/2\}$ on four of the five benchmarks (Section~\ref{sec:exp_fdr}). The SHAP-based difference statistic matches the structural recovery of the swap and swap-integral statistics at the lowest computational cost, with favourable asymptotic scaling in the library dimension (Appendix~\ref{app:complexity}).
	\item \textit{Reliable knockoff generation and aggregation.} The equicorrelated construction paired with e-BH achieves ePOWER $= 1.00$ across all $24$ generator--PDE combinations tested, and requires no tuning parameter (Section~\ref{sec:exp_gen}). Its tendency to over-select is conservative in the right direction: false positives are removed by subsequent stages, whereas lost true terms cannot be recovered.
	\item \textit{Exact structural recovery.} On KS, RD and GS, the knockoff filter combined with RFE recovers the exact true support even before the MCDM stage is applied. On Burgers and KdV, RFE narrows the candidate library to $4$--$5$ high-potential alternatives, and the MCDM consensus correctly selects the true two-term governing equation (Section~\ref{sec:exp_rfe}).
	\item \textit{Accurate coefficients and MCDM consensus.} The recursive MCDM procedure unanimously ranked the true governing equation first on every benchmark (Section~\ref{sec:exp_mcdm}). Unbiased OLS re-estimation on the discovered support yielded low mean coefficient errors at the noise levels considered (Table~\ref{tab:coefficient_accuracy}). A direct comparison with PySINDy's STLSQ, SR3 and SSR baselines (Section~\ref{sec:exp_baselines}) confirms that KO-PDE-IDENT is the only method achieving eFDR $= 0$ and ePOWER $= 1$ on every benchmark.
\end{itemize}

KO-PDE-IDENT establishes a new standard for constructing trustworthy data-driven PDE discovery pipelines: a rigorous screening stage with finite-sample FDR control delivers a small set of candidate terms, which empirical refinement and multi-criteria selection then distil into a parsimonious governing equation. KO-PDE-IDENT is designed to be flexible; for example, if the support set obtained from Algorithm~\ref{alg:adaptive_fdr} is already sufficiently small for exhaustive best-subset generation, one may consider ablating the RFE refinement in practice. By ensuring that the identified dynamics are both statistically screened under finite-sample FDR control at the initial stage and physically parsimonious after model selection, KO-PDE-IDENT lays the groundwork for the reliable data-driven discovery of governing physical laws in complex, noisy dynamical systems.

\paragraph{Future work.}Several promising directions remain for future work. First, the current framework treats each governing equation independently---FDR is controlled per response variable, with no joint guarantee across multiple targets. For systems of coupled PDEs (such as RD or GS), this leaves open the question of multi-task FDR control: extending the knockoff construction along the lines of Dai and Barber~\cite{dai2016knockoff} would allow simultaneous discovery of all governing equations under a single, joint FDR guarantee. Second, although the current framework assumes a single temporal derivative on the left-hand side, adapting the methodology to higher-order temporal operators or systems with implicit algebraic constraints would broaden the class of discoverable dynamics. Third, integrating the discovered equations with physics-informed neural-network solvers for forward validation and uncertainty propagation \cite{thanasutives2023nPIML} represents a natural downstream extension. Finally, applying KO-PDE-IDENT to real-world experimental data, where the governing equation is unknown, would test the framework's practicality and reveal its benefits beyond canonical benchmarks.

\section*{Acknowledgement}
Pongpisit Thanasutives is supported by the research fund from the Special Postdoctoral Researcher (SPDR) Program at RIKEN, Japan.

\bibliographystyle{unsrt}
\bibliography{reference}

@article{rudy2017data,
  title={{Data-driven discovery of partial differential equations}},
  author={Rudy, Samuel H and Brunton, Steven L and Proctor, Joshua L and Kutz, J Nathan},
  journal={Science Advances},
  volume={3},
  number={4},
  pages={e1602614},
  year={2017},
  publisher={American Association for the Advancement of Science}
}

@article{brunton2016sindy,
  author  = {Brunton, Steven L. and Proctor, Joshua L. and Kutz, J. Nathan},
  title   = {Discovering governing equations from data by sparse identification of nonlinear dynamical systems},
  journal = {Proceedings of the National Academy of Sciences},
  year    = {2016},
  volume  = {113},
  number  = {15},
  pages   = {3932--3937},
  doi     = {10.1073/pnas.1517384113}
}

@article{schaeffer2017learning,
  author  = {Schaeffer, Hayden},
  title   = {Learning partial differential equations via data discovery and sparse optimization},
  journal = {Proceedings of the Royal Society A: Mathematical, Physical and Engineering Sciences},
  year    = {2017},
  volume  = {473},
  number  = {2197},
  pages   = {20160446},
  doi     = {10.1098/rspa.2016.0446}
}

@article{champion2019datadriven,
  author  = {Champion, Kathleen and Lusch, Bethany and Kutz, J. Nathan and Brunton, Steven L.},
  title   = {Data-driven discovery of coordinates and governing equations},
  journal = {Proceedings of the National Academy of Sciences},
  year    = {2019},
  volume  = {116},
  number  = {45},
  pages   = {22445--22451},
  doi     = {10.1073/pnas.1906995116}
}

@article{kaheman2020sindypi,
  author  = {Kaheman, Kadierdan and Kutz, J. Nathan and Brunton, Steven L.},
  title   = {{SINDy-PI}: a robust algorithm for parallel implicit sparse identification of nonlinear dynamics},
  journal = {Proceedings of the Royal Society A},
  year    = {2020},
  volume  = {476},
  number  = {2242},
  pages   = {20200279},
  doi     = {10.1098/rspa.2020.0279}
}

@article{messenger2021weak,
  author  = {Messenger, Daniel A. and Bortz, David M.},
  title   = {Weak {SINDy} for partial differential equations},
  journal = {Journal of Computational Physics},
  year    = {2021},
  volume  = {443},
  pages   = {110525},
  doi     = {10.1016/j.jcp.2021.110525}
}

@article{fasel2022ensemble,
  author  = {Fasel, Urban and Kutz, J. Nathan and Brunton, Bingni W. and Brunton, Steven L.},
  title   = {Ensemble-{SINDy}: Robust sparse model discovery in the low-data, high-noise limit, with active learning and control},
  journal = {Proceedings of the Royal Society A},
  year    = {2022},
  volume  = {478},
  number  = {2260},
  pages   = {20210904},
  doi     = {10.1098/rspa.2021.0904}
}

@article{maddu2022stability,
  author  = {Maddu, Suryanarayana and Cheeseman, Bevan L. and Sbalzarini, Ivo F. and M{\"u}ller, Christian L.},
  title   = {Stability selection enables robust learning of differential equations from limited noisy data},
  journal = {Proceedings of the Royal Society A},
  year    = {2022},
  volume  = {478},
  number  = {2262},
  pages   = {20210916},
  doi     = {10.1098/rspa.2021.0916}
}

@article{deSilva2020pysindy,
  author  = {de Silva, Brian M. and Champion, Kathleen and Quade, Markus and Loiseau, Jean-Christophe and Kutz, J. Nathan and Brunton, Steven L.},
  title   = {{PySINDy}: A {Python} package for the sparse identification of nonlinear dynamical systems from data},
  journal = {Journal of Open Source Software},
  year    = {2020},
  volume  = {5},
  number  = {49},
  pages   = {2104},
  doi     = {10.21105/joss.02104}
}

@article{kaptanoglu2022pysindy,
  author  = {Kaptanoglu, Alan A. and de Silva, Brian M. and Fasel, Urban and Kaheman, Kadierdan and Goldschmidt, Andy J. and Callaham, Jared L. and Delahunt, Charles B. and Nicolaou, Zachary G. and Champion, Kathleen and Loiseau, Jean-Christophe and Kutz, J. Nathan and Brunton, Steven L.},
  title   = {{PySINDy}: A comprehensive {Python} package for robust sparse system identification},
  journal = {Journal of Open Source Software},
  year    = {2022},
  volume  = {7},
  number  = {69},
  pages   = {3994},
  doi     = {10.21105/joss.03994}
}

@article{zheng2019sr3,
  author  = {Zheng, Peng and Askham, Travis and Brunton, Steven L. and Kutz, J. Nathan and Aravkin, Aleksandr Y.},
  title   = {A unified framework for sparse relaxed regularized regression: {SR3}},
  journal = {IEEE Access},
  year    = {2019},
  volume  = {7},
  pages   = {1404--1423},
  doi     = {10.1109/ACCESS.2018.2886528}
}

@article{champion2020unified,
  author  = {Champion, Kathleen and Zheng, Peng and Aravkin, Aleksandr Y. and Brunton, Steven L. and Kutz, J. Nathan},
  title   = {A unified sparse optimization framework to learn parsimonious physics-informed models from data},
  journal = {IEEE Access},
  year    = {2020},
  volume  = {8},
  pages   = {169259--169271},
  doi     = {10.1109/ACCESS.2020.3023625}
}

@article{mangan2017model,
  author  = {Mangan, Niall M. and Kutz, J. Nathan and Brunton, Steven L. and Proctor, Joshua L.},
  title   = {Model selection for dynamical systems via sparse regression and information criteria},
  journal = {Proceedings of the Royal Society A},
  year    = {2017},
  volume  = {473},
  number  = {2204},
  pages   = {20170009},
  doi     = {10.1098/rspa.2017.0009}
}

@article{chen2008extended,
  author  = {Chen, Jiahua and Chen, Zehua},
  title   = {Extended {Bayesian} information criteria for model selection with large model spaces},
  journal = {Biometrika},
  year    = {2008},
  volume  = {95},
  number  = {3},
  pages   = {759--771},
  doi     = {10.1093/biomet/asn034}
}

@article{hirsh2022sparsifying,
  author  = {Hirsh, Seth M. and Barajas-Solano, David A. and Kutz, J. Nathan},
  title   = {Sparsifying priors for {Bayesian} uncertainty quantification in model discovery},
  journal = {Royal Society Open Science},
  year    = {2022},
  volume  = {9},
  number  = {2},
  pages   = {211823},
  doi     = {10.1098/rsos.211823}
}

@article{north2022bayesian,
  author  = {North, Joshua S. and Wikle, Christopher K. and Schliep, Erin M.},
  title   = {A {Bayesian} approach for data-driven dynamic equation discovery},
  journal = {Journal of Agricultural, Biological and Environmental Statistics},
  year    = {2022},
  volume  = {27},
  number  = {4},
  pages   = {728--747},
  doi     = {10.1007/s13253-022-00514-1}
}

@article{galioto2020bayesian,
  author  = {Galioto, Nicholas and Gorodetsky, Alex A.},
  title   = {{Bayesian} system {ID}: optimal management of parameter, model, and measurement uncertainty},
  journal = {Nonlinear Dynamics},
  year    = {2020},
  volume  = {102},
  number  = {1},
  pages   = {241--267},
  doi     = {10.1007/s11071-020-05925-8}
}

@inproceedings{long2018pdenet,
  author    = {Long, Zichao and Lu, Yiping and Ma, Xianzhong and Dong, Bin},
  title     = {{PDE-Net}: Learning {PDEs} from data},
  booktitle = {Proceedings of the 35th International Conference on Machine Learning (ICML)},
  series    = {Proceedings of Machine Learning Research},
  volume    = {80},
  pages     = {3208--3216},
  year      = {2018}
}

@article{long2019pdenet2,
  author  = {Long, Zichao and Lu, Yiping and Dong, Bin},
  title   = {{PDE-Net 2.0}: Learning {PDEs} from data with a numeric-symbolic hybrid deep network},
  journal = {Journal of Computational Physics},
  year    = {2019},
  volume  = {399},
  pages   = {108925},
  doi     = {10.1016/j.jcp.2019.108925}
}

@article{barber2015controlling,
  author  = {Barber, Rina Foygel and Cand{\`e}s, Emmanuel J.},
  title   = {Controlling the false discovery rate via knockoffs},
  journal = {The Annals of Statistics},
  year    = {2015},
  volume  = {43},
  number  = {5},
  pages   = {2055--2085},
  doi     = {10.1214/15-AOS1337}
}

@article{candes2018modelx,
  author  = {Cand{\`e}s, Emmanuel and Fan, Yingying and Janson, Lucas and Lv, Jinchi},
  title   = {Panning for gold: `model-{X}' knockoffs for high dimensional controlled variable selection},
  journal = {Journal of the Royal Statistical Society: Series B (Statistical Methodology)},
  year    = {2018},
  volume  = {80},
  number  = {3},
  pages   = {551--577},
  doi     = {10.1111/rssb.12265}
}

@article{barber2020robust,
  author  = {Barber, Rina Foygel and Cand{\`e}s, Emmanuel J. and Samworth, Richard J.},
  title   = {Robust inference with knockoffs},
  journal = {The Annals of Statistics},
  year    = {2020},
  volume  = {48},
  number  = {3},
  pages   = {1409--1431},
  doi     = {10.1214/19-AOS1852}
}

@article{ren2024derandomised,
  author  = {Ren, Zhimei and Barber, Rina Foygel},
  title   = {Derandomised knockoffs: leveraging e-values for false discovery rate control},
  journal = {Journal of the Royal Statistical Society Series B: Statistical Methodology},
  year    = {2024},
  volume  = {86},
  number  = {1},
  pages   = {122--154},
  doi     = {10.1093/jrsssb/qkad085}
}

@article{wang2022evalues,
  author  = {Wang, Ruodu and Ramdas, Aaditya},
  title   = {False discovery rate control with e-values},
  journal = {Journal of the Royal Statistical Society Series B: Statistical Methodology},
  year    = {2022},
  volume  = {84},
  number  = {3},
  pages   = {822--852},
  doi     = {10.1111/rssb.12489}
}

@article{benjamini1995controlling,
  author  = {Benjamini, Yoav and Hochberg, Yosef},
  title   = {Controlling the false discovery rate: a practical and powerful approach to multiple testing},
  journal = {Journal of the Royal Statistical Society: Series B (Methodological)},
  year    = {1995},
  volume  = {57},
  number  = {1},
  pages   = {289--300},
  doi     = {10.1111/j.2517-6161.1995.tb02031.x}
}

@article{meinshausen2010stability,
  author  = {Meinshausen, Nicolai and B{\"u}hlmann, Peter},
  title   = {Stability selection},
  journal = {Journal of the Royal Statistical Society: Series B (Statistical Methodology)},
  year    = {2010},
  volume  = {72},
  number  = {4},
  pages   = {417--473},
  doi     = {10.1111/j.1467-9868.2010.00740.x}
}

@article{lee2016exact,
  author  = {Lee, Jason D. and Sun, Dennis L. and Sun, Yuekai and Taylor, Jonathan E.},
  title   = {Exact post-selection inference, with application to the lasso},
  journal = {The Annals of Statistics},
  year    = {2016},
  volume  = {44},
  number  = {3},
  pages   = {907--927},
  doi     = {10.1214/15-AOS1371}
}

@article{zhu2020polynomial,
  author  = {Zhu, Junxian and Wen, Canhong and Zhu, Jin and Zhang, Heping and Wang, Xueqin},
  title   = {A polynomial algorithm for best-subset selection problem},
  journal = {Proceedings of the National Academy of Sciences},
  year    = {2020},
  volume  = {117},
  number  = {52},
  pages   = {33117--33123},
  doi     = {10.1073/pnas.2014241117}
}

@article{zhu2022abess,
  author  = {Zhu, Jin and Wang, Xueqin and Hu, Liyuan and Huang, Junhao and Jiang, Kangkang and Zhang, Yanhang and Lin, Shiyun and Zhu, Junxian},
  title   = {abess: A fast best-subset selection library in {Python} and {R}},
  journal = {Journal of Machine Learning Research},
  year    = {2022},
  volume  = {23},
  number  = {202},
  pages   = {1--7}
}

@inproceedings{lundberg2017unified,
  author    = {Lundberg, Scott M. and Lee, Su-In},
  title     = {A unified approach to interpreting model predictions},
  booktitle = {Advances in Neural Information Processing Systems 30 (NeurIPS 2017)},
  pages     = {4765--4774},
  year      = {2017}
}

@article{lundberg2020trees,
  author  = {Lundberg, Scott M. and Erion, Gabriel and Chen, Hugh and DeGrave, Alex and Prutkin, Jordan M. and Nair, Bala and Katz, Ronit and Himmelfarb, Jonathan and Bansal, Nisha and Lee, Su-In},
  title   = {From local explanations to global understanding with explainable {AI} for trees},
  journal = {Nature Machine Intelligence},
  year    = {2020},
  volume  = {2},
  number  = {1},
  pages   = {56--67},
  doi     = {10.1038/s42256-019-0138-9}
}

@book{vovk2005algorithmic,
  author    = {Vovk, Vladimir and Gammerman, Alexander and Shafer, Glenn},
  title     = {Algorithmic Learning in a Random World},
  publisher = {Springer},
  address   = {New York},
  year      = {2005},
  doi       = {10.1007/b106715}
}

@article{lei2018distribution,
  author  = {Lei, Jing and G'Sell, Max and Rinaldo, Alessandro and Tibshirani, Ryan J. and Wasserman, Larry},
  title   = {Distribution-free predictive inference for regression},
  journal = {Journal of the American Statistical Association},
  year    = {2018},
  volume  = {113},
  number  = {523},
  pages   = {1094--1111},
  doi     = {10.1080/01621459.2017.1307116}
}

@inproceedings{romano2019conformalized,
  author    = {Romano, Yaniv and Patterson, Evan and Cand{\`e}s, Emmanuel J.},
  title     = {Conformalized quantile regression},
  booktitle = {Advances in Neural Information Processing Systems 32 (NeurIPS 2019)},
  pages     = {3538--3548},
  year      = {2019}
}

@article{barber2021predictive,
  author = {Rina Foygel Barber and Emmanuel J. Cand{\`e}s and Aaditya Ramdas and Ryan J. Tibshirani},
  title = {{Predictive inference with the jackknife+}},
  volume = {49},
  journal = {The Annals of Statistics},
  number = {1},
  publisher = {Institute of Mathematical Statistics},
  pages = {486--507},
  year = {2021},
  doi = {10.1214/20-AOS1965},
  URL = {https://doi.org/10.1214/20-AOS1965}
}

@article{firpo2009unconditional,
  author  = {Firpo, Sergio and Fortin, Nicole M. and Lemieux, Thomas},
  title   = {Unconditional quantile regressions},
  journal = {Econometrica},
  year    = {2009},
  volume  = {77},
  number  = {3},
  pages   = {953--973},
  doi     = {10.3982/ECTA6822}
}

@article{bozdogan2000akaike,
  author  = {Bozdogan, Hamparsum},
  title   = {{Akaike's} information criterion and recent developments in information complexity},
  journal = {Journal of Mathematical Psychology},
  year    = {2000},
  volume  = {44},
  number  = {1},
  pages   = {62--91},
  doi     = {10.1006/jmps.1999.1277}
}

@article{murphy1973vector,
  author  = {Murphy, Allan H.},
  title   = {A new vector partition of the probability score},
  journal = {Journal of Applied Meteorology},
  year    = {1973},
  volume  = {12},
  number  = {4},
  pages   = {595--600},
  doi     = {10.1175/1520-0450(1973)012<0595:ANVPOT>2.0.CO;2}
}

@article{ehm2016quantiles,
  author  = {Ehm, Werner and Gneiting, Tilmann and Jordan, Alexander and Kr{\"u}ger, Fabian},
  title   = {Of quantiles and expectiles: consistent scoring functions, {Choquet} representations and forecast rankings},
  journal = {Journal of the Royal Statistical Society: Series B (Statistical Methodology)},
  year    = {2016},
  volume  = {78},
  number  = {3},
  pages   = {505--562},
  doi     = {10.1111/rssb.12154}
}

@article{gneiting2007strictly,
  author  = {Gneiting, Tilmann and Raftery, Adrian E.},
  title   = {Strictly proper scoring rules, prediction, and estimation},
  journal = {Journal of the American Statistical Association},
  year    = {2007},
  volume  = {102},
  number  = {477},
  pages   = {359--378},
  doi     = {10.1198/016214506000001437}
}

@book{hwang1981multiple,
  author    = {Hwang, Ching-Lai and Yoon, Kwangsun},
  title     = {Multiple Attribute Decision Making: Methods and Applications -- A State-of-the-Art Survey},
  series    = {Lecture Notes in Economics and Mathematical Systems},
  volume    = {186},
  publisher = {Springer-Verlag},
  address   = {Berlin},
  year      = {1981},
  doi       = {10.1007/978-3-642-48318-9}
}

@article{opricovic2004compromise,
  author  = {Opricovic, Serafim and Tzeng, Gwo-Hshiung},
  title   = {Compromise solution by {MCDM} methods: A comparative analysis of {VIKOR} and {TOPSIS}},
  journal = {European Journal of Operational Research},
  year    = {2004},
  volume  = {156},
  number  = {2},
  pages   = {445--455},
  doi     = {10.1016/S0377-2217(03)00020-1}
}

@article{brans1985preference,
  author  = {Brans, Jean-Pierre and Vincke, Philippe},
  title   = {A preference ranking organisation method (the {PROMETHEE} method for multiple criteria decision-making)},
  journal = {Management Science},
  year    = {1985},
  volume  = {31},
  number  = {6},
  pages   = {647--656},
  doi     = {10.1287/mnsc.31.6.647}
}

@article{yazdani2019cocoso,
  author  = {Yazdani, Morteza and Zarate, Pascale and Zavadskas, Edmundas Kazimieras and Turskis, Zenonas},
  title   = {A combined compromise solution ({CoCoSo}) method for multi-criteria decision-making problems},
  journal = {Management Decision},
  year    = {2019},
  volume  = {57},
  number  = {9},
  pages   = {2501--2519},
  doi     = {10.1108/MD-05-2017-0458}
}

@article{salabun2015characteristic,
  author  = {Sa{\l}abun, Wojciech},
  title   = {The characteristic objects method: A new distance-based approach to multicriteria decision-making problems},
  journal = {Journal of Multi-Criteria Decision Analysis},
  year    = {2015},
  volume  = {22},
  number  = {1--2},
  pages   = {37--50},
  doi     = {10.1002/mcda.1525}
}

@article{kizielewicz2023pymcdm,
  author  = {Kizielewicz, Bart{\l}omiej and Shekhovtsov, Andrii and Sa{\l}abun, Wojciech},
  title   = {{pymcdm} -- The universal library for solving multi-criteria decision-making problems},
  journal = {SoftwareX},
  year    = {2023},
  volume  = {22},
  pages   = {101368},
  doi     = {10.1016/j.softx.2023.101368}
}

@article{barber2019knockoff,
  author  = {Barber, Rina Foygel and Cand{\`e}s, Emmanuel J.},
  title   = {A knockoff filter for high-dimensional selective inference},
  journal = {Annals of Statistics},
  year    = {2019},
  volume  = {47},
  number  = {5},
  pages   = {2504--2537},
  doi     = {10.1214/18-AOS1755}
}

@inproceedings{dai2016knockoff,
  author    = {Dai, Ran and Barber, Rina Foygel},
  title     = {The knockoff filter for {FDR} control in group-sparse and multitask regression},
  booktitle = {Proceedings of the 33rd International Conference on Machine Learning},
  series    = {Proceedings of Machine Learning Research},
  volume    = {48},
  pages     = {1851--1859},
  year      = {2016},
  publisher = {PMLR}
}

@inproceedings{gimenez2019knockoffs,
  author    = {Gimenez, Jaime Roquero and Zou, James},
  title     = {Improving the stability of the knockoff procedure: Multiple simultaneous knockoffs and entropy maximization},
  booktitle = {Proceedings of the 22nd International Conference on Artificial Intelligence and Statistics},
  series    = {Proceedings of Machine Learning Research},
  volume    = {89},
  pages     = {2184--2192},
  year      = {2019},
  publisher = {PMLR}
}

@article{khosravi2010cwc,
  author  = {Khosravi, Abbas and Nahavandi, Saeid and Creighton, Doug and Atiya, Amir F.},
  title   = {Lower upper bound estimation method for construction of neural network-based prediction intervals},
  journal = {IEEE Transactions on Neural Networks},
  year    = {2011},
  volume  = {22},
  number  = {3},
  pages   = {337--346},
  doi     = {10.1109/TNN.2010.2096824}
}

@inproceedings{marcilio2020shap,
  author    = {Marc{\'\i}lio, Wilson E. and Eler, Danilo M.},
  title     = {From explanations to feature selection: assessing {SHAP} values as feature selection mechanism},
  booktitle = {2020 33rd SIBGRAPI Conference on Graphics, Patterns and Images (SIBGRAPI)},
  pages     = {340--347},
  year      = {2020},
  publisher = {IEEE},
  doi       = {10.1109/SIBGRAPI51738.2020.00053}
}

@inproceedings{nguyen2020aggregation,
  author    = {Nguyen, Tuan-Binh and Chevalier, Jerome-Alexis and Thirion, Bertrand and Arlot, Sylvain},
  title     = {Aggregation of multiple knockoffs},
  booktitle = {Proceedings of the 37th International Conference on Machine Learning},
  series    = {Proceedings of Machine Learning Research},
  volume    = {119},
  pages     = {7283--7293},
  year      = {2020},
  publisher = {PMLR}
}

@article{reinbold2020using,
  author  = {Reinbold, Patrick A. K. and Gurevich, Daniel R. and Grigoriev, Roman O.},
  title   = {Using noisy or incomplete data to discover models of spatiotemporal dynamics},
  journal = {Physical Review E},
  year    = {2020},
  volume  = {101},
  number  = {1},
  pages   = {010203},
  doi     = {10.1103/PhysRevE.101.010203}
}

@article{spector2022powerful,
  author  = {Spector, Asher and Janson, Lucas},
  title   = {Powerful knockoffs via minimizing reconstructability},
  journal = {Annals of Statistics},
  year    = {2022},
  volume  = {50},
  number  = {1},
  pages   = {252--276},
  doi     = {10.1214/21-AOS2104}
}

@article{thanasutives2024ubic,
  author  = {Thanasutives, Pongpisit and Morita, Takashi and Numao, Masayuki and Fukui, Ken-ichi},
  title   = {Adaptive uncertainty-penalized model selection for data-driven {PDE} discovery},
  journal = {IEEE Access},
  year    = {2024},
  volume  = {12},
  pages   = {13165--13182},
  doi     = {10.1109/ACCESS.2024.3354819}
}

@article{thanasutives2025ubic,
  title = {Bayesian model selection for variable-coefficient partial differential equation discovery},
  journal = {Results in Engineering},
  volume = {27},
  pages = {106930},
  year = {2025},
  issn = {2590-1230},
  doi = {10.1016/j.rineng.2025.106930},
  author = {Pongpisit Thanasutives and Yoshinobu Kawahara and Ken{-}ichi Fukui}
}

@article{dabov2007bm3d,
  author  = {Dabov, Kostadin and Foi, Alessandro and Katkovnik, Vladimir and Egiazarian, Karen},
  title   = {Image denoising by sparse {3-D} transform-domain collaborative filtering},
  journal = {IEEE Transactions on Image Processing},
  year    = {2007},
  volume  = {16},
  number  = {8},
  pages   = {2080--2095},
  doi     = {10.1109/TIP.2007.901238}
}

@article{ke2020power,
  author  = {Ke, Zheng Tracy and Liu, Jun S. and Ma, Yucong},
  title   = {Power of knockoff: the impact of ranking algorithm, augmented design, and symmetric statistic},
  journal = {Journal of Machine Learning Research},
  year    = {2024},
  volume  = {25},
  number  = {3},
  pages   = {1--67}
}

@article{akaike1974new,
  author  = {Akaike, Hirotugu},
  title   = {A new look at the statistical model identification},
  journal = {IEEE Transactions on Automatic Control},
  year    = {1974},
  volume  = {19},
  number  = {6},
  pages   = {716--723},
  doi     = {10.1109/TAC.1974.1100705}
}

@article{schwarz1978bic,
  author = {Schwarz, Gideon},
  title = {{Estimating the Dimension of a Model}},
  volume = {6},
  journal = {The Annals of Statistics},
  number = {2},
  publisher = {Institute of Mathematical Statistics},
  pages = {461 -- 464},
  year = {1978},
  doi = {10.1214/aos/1176344136}
}

@article{gohain2023ebicr,
  author  = {Gohain, Prakash Borpatra and Jansson, Magnus},
  title   = {Robust information criterion for model selection in sparse high-dimensional linear regression models},
  journal = {IEEE Transactions on Signal Processing},
  year    = {2023},
  volume  = {71},
  pages   = {2251--2266},
  doi     = {10.1109/TSP.2023.3284365}
}

@article{tibshirani1996regression,
  author  = {Tibshirani, Robert},
  title   = {Regression shrinkage and selection via the lasso},
  journal = {Journal of the Royal Statistical Society: Series B (Methodological)},
  year    = {1996},
  volume  = {58},
  number  = {1},
  pages   = {267--288},
  doi     = {10.1111/j.2517-6161.1996.tb02080.x}
}

@article{tibshirani2016exact,
  author  = {Tibshirani, Ryan J. and Taylor, Jonathan and Lockhart, Richard and Tibshirani, Robert},
  title   = {Exact post-selection inference for sequential regression procedures},
  journal = {Journal of the American Statistical Association},
  year    = {2016},
  volume  = {111},
  number  = {514},
  pages   = {600--620},
  doi     = {10.1080/01621459.2015.1108848}
}

@article{taylor2015statistical,
  author  = {Taylor, Jonathan and Tibshirani, Robert J.},
  title   = {Statistical learning and selective inference},
  journal = {Proceedings of the National Academy of Sciences},
  year    = {2015},
  volume  = {112},
  number  = {25},
  pages   = {7629--7634},
  doi     = {10.1073/pnas.1507583112}
}

@article{boninsegna2018sparse,
  author  = {Boninsegna, Lorenzo and N{\"u}ske, Feliks and Clementi, Cecilia},
  title   = {Sparse learning of stochastic dynamical equations},
  journal = {The Journal of Chemical Physics},
  year    = {2018},
  volume  = {148},
  number  = {24},
  pages   = {241723},
  doi     = {10.1063/1.5018409}
}

@article{thanasutives2023nPIML,
  doi = {10.1088/2632-2153/acb1f0},
  url = {https://doi.org/10.1088/2632-2153/acb1f0},
  year = {2023},
  month = {feb},
  publisher = {IOP Publishing},
  volume = {4},
  number = {1},
  pages = {015009},
  author = {Thanasutives, Pongpisit and Morita, Takashi and Numao, Masayuki and Fukui, Ken-ichi},
  title = {{Noise-aware physics-informed machine learning for robust PDE discovery}},
  journal = {Machine Learning: Science and Technology}
}

@inproceedings{kim2020predictive,
  author    = {Kim, Byol and Xu, Chen and Barber, Rina Foygel},
  title     = {Predictive inference is free with the jackknife+-after-bootstrap},
  booktitle = {Advances in Neural Information Processing Systems 33 (NeurIPS 2020)},
  pages     = {4138--4149},
  year      = {2020}
}

@article{blanchard2008two,
  author  = {Blanchard, Gilles and Roquain, {\'E}tienne},
  title   = {Two simple sufficient conditions for {FDR} control},
  journal = {Electronic Journal of Statistics},
  year    = {2008},
  volume  = {2},
  pages   = {963--992},
  doi     = {10.1214/08-EJS180}
}

@article{guney2021robust,
  author  = {G{\"u}ney, Y{\i}lmaz and Bozdogan, Hamparsum and Arslan, Olcay},
  title   = {Robust model selection in linear regression models using information complexity},
  journal = {Journal of Computational and Applied Mathematics},
  year    = {2021},
  volume  = {398},
  pages   = {113679},
  doi     = {10.1016/j.cam.2021.113679}
}

\appendix

\section{Proof of FDR control for the knockoff+ filter}\label{app:ko_proof}

Let $\hat{\tau}$ be the data-dependent threshold of the knockoff+ filter, as defined in~\eqref{eq:ko_threshold} with $\textrm{offset} = 1$. We write $\mathcal{H}_{0} = \{j \in [p] : \xi_{j} = 0\}$ for the set of null candidates. For any threshold $\tau > 0$, define

\begin{equation*}
	M^{+}(\tau) = \#\{j \in \mathcal{H}_{0} : W_{j} \geqslant \tau\}, \qquad M^{-}(\tau) = \#\{j \in \mathcal{H}_{0} : W_{j} \leqslant -\tau\}.
\end{equation*}

\noindent The FDP at the threshold $\hat{\tau}$ is

\begin{equation*}
	\textrm{FDP}(\hat{\tau}) = \frac{\#\{j \in \mathcal{H}_{0} : W_{j} \geqslant \hat{\tau}\}}{\#\{j \in [p] : W_{j} \geqslant \hat{\tau}\}} = \frac{M^{+}(\hat{\tau})}{\abs{\hat{\mathcal{S}}}},
\end{equation*}

\noindent where $\hat{\mathcal{S}} = \{j \in [p] : W_{j} \geqslant \hat{\tau}\}$. This is bounded from above by

\begin{align*}
	\textrm{FDP}(\hat{\tau})
	&\leqslant \textrm{FDP}(\hat{\tau}) \cdot \frac{1 + \#\{j \in [p] : W_{j} \leqslant -\hat{\tau}\}}{1 + M^{-}(\hat{\tau})} \\[4pt]
	&= \frac{M^{+}(\hat{\tau})}{\abs{\hat{\mathcal{S}}}} \cdot \frac{1 + \#\{j \in [p] : W_{j} \leqslant -\hat{\tau}\}}{1 + M^{-}(\hat{\tau})} \\[4pt]
	&\leqslant q \cdot \frac{M^{+}(\hat{\tau})}{1 + M^{-}(\hat{\tau})},
\end{align*}

\noindent where the final inequality uses the definition of $\hat{\tau}$ in~\eqref{eq:ko_threshold}:
\begin{equation*}
	\frac{1 + \#\{j \in [p] : W_{j} \leqslant -\hat{\tau}\}}{\#\{j \in [p] : W_{j} \geqslant \hat{\tau}\}} \leqslant q.
\end{equation*}

\noindent The coin-flip property of the knockoff statistics for null candidates---namely that, conditional on the unordered values $\{\abs{W_{j}}\}_{j \in [p]}$, the signs $\mathrm{sign}(W_{j})$ for $j \in \mathcal{H}_{0}$ are i.i.d.\ uniform on $\{-1, +1\}$~\cite{candes2018modelx}---implies that $M^{+}(\tau) / (1 + M^{-}(\tau))$ forms a supermartingale in $\tau$ (run in reverse time as $\tau$ decreases). Order the null indices so that $\abs{W_{(1)}} \geqslant \abs{W_{(2)}} \geqslant \cdots$ and let $B_{j} = \mathbbm{1}\{W_{(j)} < 0\}$; then $\hat{\tau}$ is a stopping time in reverse time with respect to the filtration $\{\mathcal{F}_{j}\}$ defined by $\mathcal{F}_{j} = \{B_{1} + \cdots + B_{j},\, B_{j+1}, \dots, B_{\abs{\mathcal{H}_{0}}}\}$~\cite{barber2019knockoff}, and the optional stopping theorem gives $\mathbb{E}\!\left[M^{+}(\hat{\tau}) / (1 + M^{-}(\hat{\tau}))\right] \leqslant 1$. Taking expectations of the bound above yields

\begin{equation*}
	\textrm{FDR}(\hat{\mathcal{S}}) = \mathbb{E}\!\left[\textrm{FDP}(\hat{\tau})\right] \leqslant q \cdot \mathbb{E}\!\left[\frac{M^{+}(\hat{\tau})}{1 + M^{-}(\hat{\tau})}\right] \leqslant q.
\end{equation*}

\noindent This establishes that the knockoff+ filter controls the FDR at the target level, $\textrm{FDR}(\hat{\mathcal{S}}) \leqslant q$. For a comprehensive treatment, we refer the reader to~\cite{candes2018modelx,ren2024derandomised}.


\section{Size-constrained selection preserves FDR control}\label{app:smax_proof}

Algorithm~\ref{alg:adaptive_fdr} (the $\ell_{0}$-constrained knockoff filter with adaptive FDR) imposes an explicit upper bound $\abs{\hat{\mathcal{S}}} \leqslant s_{\max}$ on the size of the support set returned by the e-BH procedure (the rejection rule reads $\hat{\jmath}^{*} = \max\{j : j \leqslant s_{\max}\,\text{and}\,e^{\textrm{avg}}_{(j)} \geqslant p/(q_{\textrm{ebh}}j)\}$). Bounding $\abs{\hat{\mathcal{S}}}$ at $s_{\max}$ is essential in our pipeline because it ensures that the candidate set entering the downstream RFE and MCDM stages remains tractable even in high-dimensional settings where typically $p \gg n$. The question that this appendix answers is: \emph{does adding the cardinality constraint $\abs{\hat{\mathcal{S}}} \leqslant s_{\max}$ to the e-BH procedure preserve the finite-sample FDR guarantee at the target level $q_{\textrm{ebh}}$?}

We prove this in three steps, organised by procedure: (i) the size-constrained e-BH procedure on averaged e-values (Theorem~\ref{thm:smax_ebh}), which is the principal procedure deployed by Algorithm~\ref{alg:adaptive_fdr}; (ii) the size-constrained knockoff+ filter (Theorem~\ref{thm:smax_knockoff}), a corollary that covers the single-run ($K = 1$) case; (iii) the size-constrained BH procedure on aggregated p-values (Theorem~\ref{thm:smax_bh}), which establishes the same guarantee for the BHq aggregation method of Nguyen et al.~\cite{nguyen2020aggregation} reported in Table~\ref{tab:gen_comparison}. The three theorems are subsumed by a single unified statement (Theorem~\ref{thm:smax_unified}). The order matches the order in which the procedures appear in the main text: the e-BH proof is the most general; the knockoff+ and BH proofs follow as corollaries that reuse the same Markov-inequality / super-uniformity argument.

\subsection*{Notation}

We adopt the notation established in Section~2. The overcomplete candidate library $\bs{\Phi} \in \mathbb{R}^{n \times p}$ has columns $\bs{\phi}_{1}, \dots, \bs{\phi}_{p}$, with response $\bs{y}$. The set of null candidates is $\mathcal{H}_{0} = \{j \in [p] : \xi_{j} = 0\}$, and the FDR of any support set $\hat{\mathcal{S}} \subseteq [p]$ is defined in~\eqref{eq:fdr}. For $K$ knockoff realisations, the per-run e-value of the $j$th candidate in the $k$th realisation is

\begin{equation*}
	e^{k}_{j} = p \cdot \frac{\mathbbm{1}\{W^{k}_{j} \geqslant \hat{\tau}^{k}\}}{1 + \#\{j^{\prime} \in [p] : W^{k}_{j^{\prime}} \leqslant -\hat{\tau}^{k}\}},
\end{equation*}

\noindent where $\mathbbm{1}\{\cdot\}$ is the indicator function defined in Section~\ref{sec:ko}, and the averaged e-value is $e^{\textrm{avg}}_{j} = \frac{1}{K}\sum_{k=1}^{K} e^{k}_{j}$. The unconstrained e-BH support set is

\begin{equation*}
	\hat{\mathcal{S}}_{\textrm{ebh}} = \left\{j \in [p] : e^{\textrm{avg}}_{j} \geqslant \frac{p}{q_{\textrm{ebh}}\,\hat{\jmath}}\right\}, \quad \hat{\jmath} = \max\left\{j \in [p] : e^{\textrm{avg}}_{(j)} \geqslant \frac{p}{q_{\textrm{ebh}}\,j}\right\},
\end{equation*}

\noindent where $e^{\textrm{avg}}_{(1)} \geqslant \cdots \geqslant e^{\textrm{avg}}_{(p)}$ are the order statistics and $q_{\textrm{ebh}}$ is the target FDR level. The critical condition underpinning FDR control is

\begin{equation}
	\sum_{j \in \mathcal{H}_{0}} \mathbb{E}\!\left[e^{\textrm{avg}}_{j}\right] \leqslant p.
	\label{eq:eval_condition}
\end{equation}

\subsection*{Size-constrained e-BH procedure}

\begin{theorem}[Size-constrained e-BH controls FDR]\label{thm:smax_ebh}
	Let the averaged e-values $e^{\textrm{avg}}_{1}, \dots, e^{\textrm{avg}}_{p}$ satisfy~\eqref{eq:eval_condition}. For any fixed $s_{\max} \in [p]$, define the size-constrained e-BH procedure through the rejection count
	\begin{equation*}
		\hat{\jmath}^{*} = \max\left\{j \in [p] : j \leqslant s_{\max} \;\text{and}\; e^{\textrm{avg}}_{(j)} \geqslant \frac{p}{q_{\textrm{ebh}}\,j}\right\},
	\end{equation*}
	with $\hat{\jmath}^{*} = 0$ if this set is empty, in which case $\hat{\mathcal{S}}^{*}_{\textrm{ebh}} = \emptyset$ and $\textrm{FDP} := 0$ by convention. The procedure rejects the $\hat{\jmath}^{*}$ candidates attaining the largest averaged e-values, so that
	\begin{equation*}
		\hat{\mathcal{S}}^{*}_{\textrm{ebh}} = \left\{j \in [p] : e^{\textrm{avg}}_{j} \text{ is among the } \hat{\jmath}^{*} \text{ largest}\right\}, \qquad \abs{\hat{\mathcal{S}}^{*}_{\textrm{ebh}}} = \hat{\jmath}^{*}.
	\end{equation*}
	Ties in the averaged e-value are broken by the same deterministic rule throughout; in the practical implementation we rank ties by the averaged knockoff statistic $\bar{W}_{j} = \frac{1}{K}\sum_{k} W^{k}_{j}$ (descending) and then by index, though the proof below does not depend on the particular rule. This is the standard e-BH rejection set, defined through the rejection \emph{count} $\hat{\jmath}^{*}$ rather than through the threshold $p/(q_{\textrm{ebh}}\hat{\jmath}^{*})$; the distinction matters only in the boundary event of ties or when the cap $s_{\max}$ binds, where it ensures $\abs{\hat{\mathcal{S}}^{*}_{\textrm{ebh}}} = \hat{\jmath}^{*}$ exactly. Then $\abs{\hat{\mathcal{S}}^{*}_{\textrm{ebh}}} \leqslant s_{\max}$ and $\textrm{FDR}(\hat{\mathcal{S}}^{*}_{\textrm{ebh}}) \leqslant q_{\textrm{ebh}}$.
\end{theorem}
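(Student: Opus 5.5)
The plan is the standard "self-consistency plus Markov" argument for e-BH (Wang--Ramdas), checked to survive the cap $s_{\max}$. The size claim is immediate: $\hat{\jmath}^{*} \leqslant s_{\max}$ by construction and $\abs{\hat{\mathcal{S}}^{*}_{\textrm{ebh}}} = \hat{\jmath}^{*}$. For the FDR claim, the key observation is that the capped rejection set is still \emph{self-consistent}. If $\hat{\jmath}^{*} \geqslant 1$, every rejected $j$ has $e^{\textrm{avg}}_{j} \geqslant e^{\textrm{avg}}_{(\hat{\jmath}^{*})} \geqslant p/(q_{\textrm{ebh}}\hat{\jmath}^{*})$. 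This uses that the rejected candidates are exactly those attaining the $\hat{\jmath}^{*}$ largest averaged e-values, and that $\hat{\jmath}^{*}$ itself satisfies the defining inequality. Equivalently, for every $j \in \hat{\mathcal{S}}^{*}_{\textrm{ebh}}$,
\begin{equation*}
\frac{1}{\abs{\hat{\mathcal{S}}^{*}_{\textrm{ebh}}}} \leqslant \frac{q_{\textrm{ebh}}\, e^{\textrm{avg}}_{j}}{p}.
\end{equation*}
The tie-breaking rule plays no role: whichever of the tied candidates are chosen, each chosen one still has e-value at least $e^{\textrm{avg}}_{(\hat{\jmath}^{*})}$.

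Next, write the FDP as a sum over nulls and apply this bound termwise:
\begin{equation*}
\textrm{FDP} = \sum_{j \in \mathcal{H}_{0}} \frac{\mathbbm{1}\{j \in \hat{\mathcal{S}}^{*}_{\textrm{ebh}}\}}{\max(\abs{\hat{\mathcal{S}}^{*}_{\textrm{ebh}}},1)} \leqslant \sum_{j \in \mathcal{H}_{0}} \frac{q_{\textrm{ebh}}}{p}\, e^{\textrm{avg}}_{j}\,\mathbbm{1}\{j \in \hat{\mathcal{S}}^{*}_{\textrm{ebh}}\} \leqslant \frac{q_{\textrm{ebh}}}{p}\sum_{j \in \mathcal{H}_{0}} e^{\textrm{avg}}_{j}.
\end{equation*}
The last step drops the indicator, which is valid because e-values are non-negative. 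On the empty event $\hat{\jmath}^{*}=0$ the FDP is $0$ by convention, so the bound holds trivially there. Taking expectations and invoking~\eqref{eq:eval_condition} gives $\textrm{FDR} \leqslant q_{\textrm{ebh}}$.

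Condition~\eqref{eq:eval_condition} follows from linearity of expectation over $k$ together with the per-realisation bound $\sum_{j\in\mathcal{H}_0}\mathbb{E}[e^{k}_{j}] \leqslant p$. That per-realisation bound was already derived in Section~\ref{sec:ko} from the supermartingale/optional-stopping lemma used in Appendix~\ref{app:ko_proof}. Crucially, no independence between the runs $k$ is needed, since only expectations are summed.

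The only point needing care is that the selection is defined through a data-dependent, capped count rather than a fixed threshold. The argument above uses no property of $\hat{\jmath}^{*}$ beyond self-consistency, however. Because the e-BH/Markov bound holds under arbitrary dependence, capping (any rule choosing a self-consistent set) cannot break it. I expect verifying this self-consistency at the cap boundary, with ties, to be the main (and mild) obstacle; the rest is routine.
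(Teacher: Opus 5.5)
Your proposal is correct and follows essentially the same route as the paper: both rely on self-consistency at the realised capped count $\hat{\jmath}^{*}$, then the Markov-type bound summed over nulls and closed with~\eqref{eq:eval_condition}. Your termwise inequality $1/\abs{\hat{\mathcal{S}}^{*}_{\textrm{ebh}}} \leqslant q_{\textrm{ebh}}\, e^{\textrm{avg}}_{j}/p$ for rejected $j$ is a slightly more direct form of the paper's indicator step $\mathbbm{1}\{e^{\textrm{avg}}_{j} \geqslant p/(q_{\textrm{ebh}}\hat{\jmath}^{*})\} \leqslant q_{\textrm{ebh}}\hat{\jmath}^{*} e^{\textrm{avg}}_{j}/p$, and your extra justification of~\eqref{eq:eval_condition}, which the theorem simply assumes, is harmless.
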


\begin{proof}
	\textit{Step~1: $\hat{\mathcal{S}}^{*}_{\textrm{ebh}} \subseteq \hat{\mathcal{S}}_{\textrm{ebh}}$.}
	The search space for $\hat{\jmath}^{*}$ is a subset of that for $\hat{\jmath}$ (the additional constraint $j \leqslant s_{\max}$ can only restrict), so $\hat{\jmath}^{*} \leqslant \hat{\jmath}$. The constrained procedure therefore rejects the top $\hat{\jmath}^{*}$ averaged e-values, a subset of the top $\hat{\jmath}$ rejected by the unconstrained procedure (under the same tie-breaking), giving $\hat{\mathcal{S}}^{*}_{\textrm{ebh}} \subseteq \hat{\mathcal{S}}_{\textrm{ebh}}$. This inclusion is recorded for later reference; it is not used in establishing FDR control below.

	\textit{Step~2: $\abs{\hat{\mathcal{S}}^{*}_{\textrm{ebh}}} = \hat{\jmath}^{*} \leqslant s_{\max}$ and self-consistency.}
	By construction the procedure rejects exactly the $\hat{\jmath}^{*}$ candidates with the largest averaged e-values, so $\abs{\hat{\mathcal{S}}^{*}_{\textrm{ebh}}} = \hat{\jmath}^{*}$, and the constraint $\hat{\jmath}^{*} \leqslant s_{\max}$ yields the size bound. Moreover, every rejected candidate satisfies
	\begin{equation*}
		e^{\textrm{avg}}_{j} \;\geqslant\; e^{\textrm{avg}}_{(\hat{\jmath}^{*})} \;\geqslant\; \frac{p}{q_{\textrm{ebh}}\,\hat{\jmath}^{*}},
	\end{equation*}
	the second inequality being the e-BH self-consistency condition that holds at the realised rejection count $\hat{\jmath}^{*}$ by its very definition. This is the only property of $\hat{\jmath}^{*}$ required below; in particular, the proof does not require $\hat{\jmath}^{*}$ to be the unrestricted maximum.
	
	\textit{Step~3: FDR control.}
	If $\hat{\jmath}^{*}=0$, then $\hat{\mathcal{S}}^{*}_{\textrm{ebh}}=\emptyset$ and the FDP is zero by convention. Hence suppose $\hat{\jmath}^{*}>0$. Using $\abs{\hat{\mathcal{S}}^{*}_{\textrm{ebh}}} = \hat{\jmath}^{*}$ from Step~2, for each $j \in \mathcal{H}_{0}$ we apply Markov's inequality in the form $\mathbbm{1}\{a \geqslant b\} \leqslant a/b$ (valid for any $a \geqslant 0$ and $b > 0$) with $a=e^{\textrm{avg}}_{j}$ and $b = p/(q_{\textrm{ebh}}\,\abs{\hat{\mathcal{S}}^{*}_{\textrm{ebh}}})$:
	\begin{equation*}
		\mathbbm{1}\bigl\{e^{\textrm{avg}}_{j} \geqslant \tfrac{p}{q_{\textrm{ebh}}\,\abs{\hat{\mathcal{S}}^{*}_{\textrm{ebh}}}}\bigr\} \leqslant \frac{q_{\textrm{ebh}}\,\abs{\hat{\mathcal{S}}^{*}_{\textrm{ebh}}}}{p}\, e^{\textrm{avg}}_{j}.
	\end{equation*}
	Every rejected candidate satisfies $e^{\textrm{avg}}_{j} \geqslant p/(q_{\textrm{ebh}}\,\abs{\hat{\mathcal{S}}^{*}_{\textrm{ebh}}})$ by the self-consistency established in Step~2, so the rejected nulls form a subset of the nulls exceeding the threshold, $\{j \in \mathcal{H}_{0} : j \in \hat{\mathcal{S}}^{*}_{\textrm{ebh}}\} \subseteq \{j \in \mathcal{H}_{0} : e^{\textrm{avg}}_{j} \geqslant p/(q_{\textrm{ebh}}\,\abs{\hat{\mathcal{S}}^{*}_{\textrm{ebh}}})\}$; the inclusion may be strict in the boundary event of ties, where a null attaining the threshold value is excluded by the tie-breaking rule. Summing over $j \in \mathcal{H}_{0}$ and taking expectations therefore yields
	\begin{align*}
		\textrm{FDR}(\hat{\mathcal{S}}^{*}_{\textrm{ebh}}) &= \mathbb{E}\!\left[\frac{\#\{j \in \hat{\mathcal{S}}^{*}_{\textrm{ebh}} :\, \xi_{j} = 0\}}{\abs{\hat{\mathcal{S}}^{*}_{\textrm{ebh}}}}\right] \leqslant \mathbb{E}\!\left[\sum_{j \in \mathcal{H}_{0}} \frac{\mathbbm{1}\bigl\{e^{\textrm{avg}}_{j} \geqslant \tfrac{p}{q_{\textrm{ebh}}\,\abs{\hat{\mathcal{S}}^{*}_{\textrm{ebh}}}}\bigr\}}{\abs{\hat{\mathcal{S}}^{*}_{\textrm{ebh}}}}\right]\\[4pt]
		&\leqslant \frac{q_{\textrm{ebh}}}{p}\sum_{j \in \mathcal{H}_{0}}\mathbb{E}\!\left[e^{\textrm{avg}}_{j}\right] \;\leqslant\; \frac{q_{\textrm{ebh}}}{p} \cdot p \;=\; q_{\textrm{ebh}},
	\end{align*}
	where the first inequality uses the set inclusion just established, the second applies Markov's inequality, and the final one applies condition~\eqref{eq:eval_condition}.
\end{proof}

\paragraph{Remark (Why the size constraint preserves FDR control).}
It is tempting to argue that, because $\hat{\mathcal{S}}^{*}_{\textrm{ebh}} \subseteq \hat{\mathcal{S}}_{\textrm{ebh}}$, the constrained procedure is automatically more conservative. This reasoning is unsound: an arbitrary subset of an FDR-controlled rejection set need not itself be FDR-controlled, since discarding true positives shrinks the denominator $\abs{\hat{\mathcal{S}}}$ and can raise the realised FDP. The set inclusion in Step~1 is a by-product, not the justification. What actually preserves control is that the constrained procedure is a \emph{bona fide} e-BH procedure in its own right: capping the search at $j \leqslant s_{\max}$ replaces the unrestricted maximiser $\hat{\jmath}$ by the largest self-consistent index $\hat{\jmath}^{*} \leqslant s_{\max}$, and the self-consistency $e^{\textrm{avg}}_{(\hat{\jmath}^{*})} \geqslant p/(q_{\textrm{ebh}}\,\hat{\jmath}^{*})$ at the realised count---the only property used in Step~3---continues to hold at the (weakly higher) effective threshold. The realised eFDR of the smaller set may be larger or smaller than that of the unconstrained set; what the theorem guarantees is that its \emph{expectation} remains bounded by $q_{\textrm{ebh}}$.

\subsection*{Extension to the single-run knockoff+ filter}

When $K = 1$ and the e-values are constructed at the same level used for selection (i.e.\ $q_{\textrm{ebh}}^{\textrm{base}} = q_{\textrm{ebh}} = q$), the e-BH procedure reduces to the standard knockoff+ filter. The support set is $\hat{\mathcal{S}} = \{j \in [p] : W_{j} \geqslant \hat{\tau}\}$, where $\hat{\tau}$ is the data-driven threshold defined in~\eqref{eq:ko_threshold} with offset~$= 1$. (Under the decoupled construction of the next subsection, with $q_{\textrm{ebh}}^{\textrm{base}} \neq q_{\textrm{ebh}}$, the single-run procedure is no longer literally the knockoff+ filter at level $q_{\textrm{ebh}}$, but it remains a valid e-BH procedure on the knockoff-derived e-values, and the statement below applies verbatim.)

\begin{theorem}[Size-constrained knockoff+ filter controls FDR]\label{thm:smax_knockoff}
	When $K = 1$, the knockoff e-values take the common value $e_{j} = p/(1 + m)$ on $\hat{\mathcal{S}} = \{j \in [p] : W_{j} \geqslant \hat{\tau}\}$ and $0$ elsewhere, where $m = \#\{j \in [p] : W_{j} \leqslant -\hat{\tau}\}$. Let $\hat{\mathcal{S}}^{*}$ be the size-constrained e-BH selection of Theorem~\ref{thm:smax_ebh} applied to these e-values, with $q_{\textrm{ebh}} = q$. Then $\abs{\hat{\mathcal{S}}^{*}} \leqslant s_{\max}$ and $\textrm{FDR}(\hat{\mathcal{S}}^{*}) \leqslant q$. Explicitly,
	\begin{equation*}
		\hat{\mathcal{S}}^{*} =
		\begin{cases}
			\text{the $\min(\abs{\hat{\mathcal{S}}},\, s_{\max})$ candidates of $\hat{\mathcal{S}}$ with the largest $W_{j}$}, & \text{if } \min(\abs{\hat{\mathcal{S}}},\, s_{\max}) \geqslant (1 + m)/q,\\[4pt]
			\emptyset, & \text{otherwise.}
		\end{cases}
	\end{equation*}
\end{theorem}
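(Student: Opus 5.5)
The plan is to treat Theorem~\ref{thm:smax_knockoff} as a direct corollary of Theorem~\ref{thm:smax_ebh}. I would first check the e-value condition~\eqref{eq:eval_condition} for $K=1$, then compute the size-constrained e-BH selection explicitly for the two-valued e-vectors.

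\textbf{Step 1: the e-value condition.} With $K=1$, $e^{\textrm{avg}}_{j}=e_{j}=p\,\mathbbm{1}\{W_{j}\geqslant\hat{\tau}\}/(1+m)$. This holds since $\hat{\tau}$ is a single common threshold, so the denominator does not depend on $j$. I would bound $\sum_{j\in\mathcal{H}_{0}}\mathbb{E}[e_{j}]\leqslant p$ exactly as in the displayed chain of Section~\ref{sec:ko}:
\begin{itemize}
\item replace $m$ by the null count $M^{-}(\hat{\tau})\leqslant m$;
\item collect the indicators into $M^{+}(\hat{\tau})$;
\item apply the supermartingale/optional-stopping bound $\mathbb{E}[M^{+}(\hat{\tau})/(1+M^{-}(\hat{\tau}))]\leqslant 1$ from Appendix~\ref{app:ko_proof}.
\end{itemize}
Theorem~\ref{thm:smax_ebh} then gives $\abs{\hat{\mathcal{S}}^{*}}\leqslant s_{\max}$ and $\textrm{FDR}(\hat{\mathcal{S}}^{*})\leqslant q$ at once.

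\textbf{Step 2: the explicit form of $\hat{\mathcal{S}}^{*}$.} The sorted e-values are $e_{(j)}=p/(1+m)$ for $j\leqslant\abs{\hat{\mathcal{S}}}$ and $0$ afterwards. For $j>\abs{\hat{\mathcal{S}}}$ the condition $e_{(j)}\geqslant p/(qj)$ fails, because the right side is positive. For $j\leqslant\abs{\hat{\mathcal{S}}}$ the condition reads $p/(1+m)\geqslant p/(qj)$, i.e.\ $j\geqslant(1+m)/q$, which is monotone increasing in $j$. So the admissible set $\{j\leqslant\min(\abs{\hat{\mathcal{S}}},s_{\max}) : j\geqslant(1+m)/q\}$ is nonempty exactly when $\min(\abs{\hat{\mathcal{S}}},s_{\max})\geqslant(1+m)/q$. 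In that case its maximum is $\hat{\jmath}^{*}=\min(\abs{\hat{\mathcal{S}}},s_{\max})$; otherwise $\hat{\jmath}^{*}=0$ and $\hat{\mathcal{S}}^{*}=\emptyset$.

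Since all selected candidates share the same e-value, the choice of which $\hat{\jmath}^{*}$ elements to take is governed entirely by the tie-breaking rule stated in Theorem~\ref{thm:smax_ebh}, namely descending $\bar{W}_{j}=W_{j}$. This yields ``the candidates of $\hat{\mathcal{S}}$ with the largest $W_{j}$''. Theorem~\ref{thm:smax_ebh}'s proof is insensitive to the tie-break, so FDR control is unaffected.

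\textbf{Main obstacle and consistency check.} The delicate point is the optional-stopping bound in Step 1, since $\hat{\tau}$ is data dependent and the indicator and denominator are coupled. I would not reprove it but invoke Appendix~\ref{app:ko_proof}, noting that it needs only the coin-flip property of null signs, which is supplied by the flip-sign statistic under Assumptions~\ref{ass:mx} and~\ref{ass:linear}.

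I would also check the uncapped case $s_{\max}\geqslant\abs{\hat{\mathcal{S}}}$. Here the knockoff+ definition of $\hat{\tau}$ gives $(1+m)/\abs{\hat{\mathcal{S}}}\leqslant q$, so $\hat{\mathcal{S}}^{*}=\hat{\mathcal{S}}$, confirming the reduction to the ordinary knockoff+ filter. When the cap binds, the case split in the statement can produce $\emptyset$ even though $\hat{\mathcal{S}}\neq\emptyset$; this is expected and consistent with the FDR bound.
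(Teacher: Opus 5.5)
Your proposal is correct and follows essentially the paper's route: you treat the result as the $K=1$ instance of Theorem~\ref{thm:smax_ebh} and reduce the self-consistency condition $e_{(j)} \geqslant p/(q\,j)$ to $j \geqslant (1+m)/q$, which gives the largest admissible count $\min(\abs{\hat{\mathcal{S}}}, s_{\max})$ or no admissible count at all. You are slightly more explicit than the paper in three places, where the paper instead cites the Ren--Barber equivalence between knockoff+ and e-BH: you verify condition~\eqref{eq:eval_condition} directly, you note that the ``largest $W_{j}$'' description depends on the $\bar{W}_{j}$ tie-breaking rule, and you check the uncapped case against the knockoff+ threshold.
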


\begin{proof}
	The single-run knockoff+ filter is equivalent to the e-BH procedure applied to the knockoff e-values $e_{j}$~\cite{ren2024derandomised}, so $\hat{\mathcal{S}}$ coincides with the unconstrained e-BH set and the claim is the $K = 1$ instance of Theorem~\ref{thm:smax_ebh}. The explicit form follows because the nonzero e-values are all equal to $p/(1+m)$, so the self-consistency requirement $e_{(j)} \geqslant p/(q\,j)$ at count $j$ reduces to $j \geqslant (1+m)/q$; the largest admissible count not exceeding $s_{\max}$ is thus $\min(\abs{\hat{\mathcal{S}}}, s_{\max})$ when this value is at least $(1+m)/q$, and there is no self-consistent count $\leqslant s_{\max}$ otherwise.
\end{proof}

\noindent The second branch is the crucial point of precision: when the cap $s_{\max}$ is smaller than $(1+m)/q$, no rejection set of size at most $s_{\max}$ is e-BH self-consistent, and the size-constrained filter selects \emph{nothing} rather than returning a top-$s_{\max}$ set that would violate the guarantee. Naively truncating $\hat{\mathcal{S}}$ to its $s_{\max}$ largest $W_{j}$ in this regime would retain candidates whose e-value $p/(1+m)$ falls below the level $p/(q\,s_{\max})$ required for a size-$s_{\max}$ rejection, breaking self-consistency; see the remark on the practical implementation below.

\subsection*{Extension to BH on aggregated p-values}

When p-values $\hat{p}_{1}, \dots, \hat{p}_{p}$ are obtained via quantile aggregation of empirical knockoff p-values~\cite{nguyen2020aggregation}, the Benjamini--Hochberg procedure selects
\begin{equation*}
	\hat{\mathcal{S}}_{\textrm{bh}} = \left\{j \in [p] : \hat{p}_{j} \leqslant \hat{t}\right\}, \quad \hat{t} = \max\left\{\frac{q\,j}{p} : \hat{p}_{(j)} \leqslant \frac{q\,j}{p}\right\}.
\end{equation*}

\begin{theorem}[Size-constrained BH controls FDR]\label{thm:smax_bh}
	Define the constrained BH procedure through the rejection count
	\begin{equation*}
		\hat{\jmath}^{*}_{\textrm{bh}} = \max\left\{j \in [p] : j \leqslant s_{\max} \;\text{and}\; \hat{p}_{(j)} \leqslant \frac{q\,j}{p}\right\},
	\end{equation*}
	with $\hat{\jmath}^{*}_{\textrm{bh}} = 0$ if this set is empty, rejecting exactly the $\hat{\jmath}^{*}_{\textrm{bh}}$ smallest p-values (ties broken by a fixed rule). Assume the null p-values are super-uniform and satisfy a dependence condition under which the usual BH step-up procedure controls FDR at level $q$ (for example, independence or PRDS). Then $\abs{\hat{\mathcal{S}}^{*}_{\textrm{bh}}} \leqslant s_{\max}$ and $\textrm{FDR}(\hat{\mathcal{S}}^{*}_{\textrm{bh}}) \leqslant q$.
\end{theorem}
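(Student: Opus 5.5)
The approach is to treat the constrained rule as a self-consistent step-up-type procedure in the sense of Blanchard and Roquain~\cite{blanchard2008two}, and to verify their two sufficient conditions. The first is \emph{self-consistency}: every rejected index satisfies $\hat{p}_{j} \leqslant q\,\hat{\jmath}^{*}_{\textrm{bh}}/p$. The second is \emph{monotonicity}: the rejection count is a nonincreasing function of each individual p-value. This mirrors Step~2--Step~3 of Theorem~\ref{thm:smax_ebh}, with Markov's inequality replaced by super-uniformity. One caution shapes the plan. Self-consistency alone only gives FDR control up to the harmonic factor $\sum_{k \leqslant p} 1/k$ under arbitrary dependence. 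So, unlike the e-value case, the dependence assumption (independence or PRDS) and the monotonicity of $\hat{\jmath}^{*}_{\textrm{bh}}$ must both be used genuinely.

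\emph{Step 1 (size bound and self-consistency).} Write $R = \hat{\jmath}^{*}_{\textrm{bh}}$. The procedure rejects exactly the $R$ smallest p-values, so $\abs{\hat{\mathcal{S}}^{*}_{\textrm{bh}}} = R \leqslant s_{\max}$. Every rejected $j$ satisfies $\hat{p}_{j} \leqslant \hat{p}_{(R)} \leqslant qR/p$ by the definition of $R$. Therefore, on $\{R>0\}$,
\begin{equation*}
\textrm{FDP} \leqslant \sum_{j \in \mathcal{H}_{0}} \frac{\mathbbm{1}\{\hat{p}_{j} \leqslant qR/p\}}{R}.
\end{equation*}

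\emph{Step 2 (monotonicity).} Fix $j$ and decrease $\hat{p}_{j}$ while holding the other p-values fixed. Every order statistic $\hat{p}_{(k)}$ weakly decreases. Hence the admissible set $\{k \leqslant s_{\max} : \hat{p}_{(k)} \leqslant qk/p\}$ can only grow, because the cap $s_{\max}$ is fixed and does not depend on the data. Its maximum $R$ is therefore coordinatewise nonincreasing in $\hat{p}_{j}$. This is the point where the cap must be deterministic. A data-dependent cap, such as the adaptive $q$ search in Algorithm~\ref{alg:adaptive_fdr}, would break the argument, so I will state that restriction explicitly.

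\emph{Step 3 (per-null bound).} For each $j \in \mathcal{H}_{0}$, I will show $\mathbb{E}[\mathbbm{1}\{\hat{p}_{j} \leqslant qR/p\}/R] \leqslant q/p$; summing over at most $p$ nulls then gives $\textrm{FDR} \leqslant q\abs{\mathcal{H}_{0}}/p \leqslant q$.

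In the independent case, condition on $\hat{p}_{-j}$. Let $R_{j}$ denote the count obtained by setting $\hat{p}_{j}=0$. By the usual BH leave-one-out argument, on the event $\{\hat{p}_{j} \leqslant qR/p\}$ index $j$ is rejected and $R = R_{j}$. This uses that the rule is a maximum over a deterministic index range, so replacing a rejected p-value by $0$ leaves the admissible set unchanged. Then
\begin{equation*}
\mathbb{E}\bigl[\mathbbm{1}\{\hat{p}_{j} \leqslant qR/p\}/R \mid \hat{p}_{-j}\bigr] \leqslant \frac{\mathbb{P}(\hat{p}_{j} \leqslant qR_{j}/p \mid \hat{p}_{-j})}{R_{j}} \leqslant q/p,
\end{equation*}
by super-uniformity.

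In the PRDS case, I will decompose over $k$ as $\sum_{k} \frac{1}{k}\mathbb{P}(\hat{p}_{j} \leqslant qk/p,\, R = k)$ and telescope. The key input is that $\{R < k\}$ is a nondecreasing set in the p-values by Step~2, so PRDS makes $\mathbb{P}(R<k \mid \hat{p}_{j} \leqslant t)$ nondecreasing in $t$. This is exactly Blanchard--Roquain's condition, and it yields the same bound $q/p$.

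The main obstacle is verifying the leave-one-out identity $R = R_{j}$ (equivalently, the monotonicity needed in the PRDS case) for the \emph{capped} count. I will check it directly. If $j$ is rejected, then moving $\hat{p}_{j}$ to $0$ leaves all order statistics at positions $\geqslant R$ unchanged and only shifts earlier ones downward, so the largest admissible $k \leqslant s_{\max}$ is unchanged. Once that is settled, the remaining steps are routine and parallel the proof of Theorem~\ref{thm:smax_ebh}.
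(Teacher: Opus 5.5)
Your route is the paper's route: self-consistency at the realised count, then the textbook BH argument under independence and the Blanchard--Roquain conditions under PRDS. The paper only cites these arguments, whereas you carry them out. Your Step~2 (the capped count is coordinatewise nonincreasing because $s_{\max}$ is fixed) is precisely the hypothesis Blanchard--Roquain require, and the paper's phrase ``more conservative than the unconstrained BH rule'' glosses over it. Your remark that self-consistency alone gives only the harmonic-factor bound is also well placed, as is your reading of the vague dependence hypothesis as independence or PRDS.

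There is one local error, and it sits exactly where the cap matters. In the independent case you assert that on $\{\hat{p}_{j} \leqslant qR/p\}$ the index $j$ is rejected. For uncapped BH this holds because maximality forces $\hat{p}_{(R+1)} > q(R+1)/p$. When $R = s_{\max}$, however, nothing constrains $\hat{p}_{(R+1)}$, so more than $R$ p-values can lie below $qR/p$; for example, take all $p$ of them tiny with $s_{\max}=2$. Your obstacle check (``if $j$ is rejected \dots'') therefore establishes $R = R_{j}$ only on a sub-event of the event you use in Step~1.

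There are two easy repairs.
\begin{enumerate}
\item Bound the FDP by the rejected nulls only. Since $j \in \hat{\mathcal{S}}^{*}_{\textrm{bh}}$ implies $\hat{p}_{j} \leqslant qR/p = qR_{j}/p$, your conditional super-uniformity step goes through unchanged.
\item Prove $R_{j} = R$ on the whole event. Write $\hat{p}'$ for the vector with $\hat{p}_{j}$ set to $0$. For $k \geqslant 2$ one has $\hat{p}_{(k)} \leqslant \max(\hat{p}'_{(k)}, \hat{p}_{j})$. Hence if some $k \in (R, s_{\max}]$ became admissible after zeroing, then $\hat{p}_{j} \leqslant qR/p < qk/p$ would make $k$ admissible before zeroing, contradicting the maximality of $R$.
\end{enumerate}
A smaller point in the same check: the order statistics left unchanged are those at positions $> R$, not $\geqslant R$. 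Position $R$ itself drops if $j$ had rank exactly $R$, which is harmless. Your PRDS branch needs no repair, since it uses only that $\{R<k\}$ is an increasing set together with self-consistency.
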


\begin{proof}
	The size bound is immediate from $\hat{\jmath}^{*}_{\textrm{bh}} \leqslant s_{\max}$. If $\hat{\jmath}^{*}_{\textrm{bh}}=0$, the FDP is zero by convention. Otherwise, at the realised count $\hat{\jmath}^{*}_{\textrm{bh}}$ the BH self-consistency condition $\hat{p}_{(\hat{\jmath}^{*}_{\textrm{bh}})} \leqslant q\,\hat{\jmath}^{*}_{\textrm{bh}}/p$ holds by definition, so every rejected candidate satisfies $\hat{p}_{j} \leqslant q\,\abs{\hat{\mathcal{S}}^{*}_{\textrm{bh}}}/p$. Thus the constrained rule is a BH-type self-consistent step-up rule whose admissible rejection counts are restricted to $j \leqslant s_{\max}$. Under the stated BH-valid dependence assumptions, the self-consistency argument for step-up procedures applies to this restricted rule and gives $\textrm{FDR}(\hat{\mathcal{S}}^{*}_{\textrm{bh}}) \leqslant q$: under independence this is the textbook BH self-consistency proof, while under PRDS it follows from the self-consistency conditions of Blanchard and Roquain~\cite{blanchard2008two}, which cover step-up procedures more conservative than the unconstrained BH rule. Restricting the search to $j \leqslant s_{\max}$ can only make the p-value cutoff more stringent while preserving self-consistency at the realised count.
\end{proof}

\subsection*{Unified theorem on size-constrained FDR control}

\begin{theorem}[Size-constrained FDR control]\label{thm:smax_unified}
	Let $\hat{\mathcal{S}}$ be the support set produced by any of the following procedures under its corresponding validity assumptions: the knockoff+ filter~\eqref{eq:ko_threshold}, the e-BH procedure on averaged e-values (Algorithm~\ref{alg:adaptive_fdr}), or the BH procedure on aggregated p-values. Define the size-constrained counterpart $\hat{\mathcal{S}}^{*}$ by restricting the step-up search to admissible counts $j \leqslant s_{\max}$, so that it rejects the $\hat{\jmath}^{*} \leqslant s_{\max}$ candidates carrying the strongest evidence against the null at the largest self-consistent count. This theorem applies to this count-constrained step-up rule, not to arbitrary post-hoc truncation of an already selected set. Then:
	\begin{enumerate}
		\item $\hat{\mathcal{S}}^{*} \subseteq \hat{\mathcal{S}}$,
		\item $\abs{\hat{\mathcal{S}}^{*}} \leqslant s_{\max}$,
		\item $\textrm{FDR}(\hat{\mathcal{S}}^{*}) \leqslant q$.
	\end{enumerate}
\end{theorem}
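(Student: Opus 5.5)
The unified statement is a case-by-case assembly of Theorems~\ref{thm:smax_ebh}, \ref{thm:smax_knockoff} and~\ref{thm:smax_bh}. I would first rewrite the three procedures in one generic form. Each assigns to every candidate a scalar evidence score: $e^{\textrm{avg}}_{j}$ for e-BH, the single-run knockoff e-value $e_{j}=p\,\mathbbm{1}\{W_{j}\geqslant\hat{\tau}\}/(1+m)$ for knockoff+, and $-\hat{p}_{j}$ for BH. Each also has a monotone admissibility condition at count $j$: $e_{(j)}\geqslant p/(qj)$, or $\hat{p}_{(j)}\leqslant qj/p$. The unconstrained set then rejects the top $\hat{\jmath}$ scores at the largest admissible count. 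The constrained set rejects the top $\hat{\jmath}^{*}$ scores at the largest admissible count not exceeding $s_{\max}$. In all three cases ties are broken by one fixed deterministic rule. Knockoff+ is placed in this form by the Ren--Barber equivalence invoked in the proof of Theorem~\ref{thm:smax_knockoff}.

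Items~1 and~2 are then purely combinatorial and I would prove them once for the generic form. The constrained search ranges over a subset of the unconstrained search set, so $\hat{\jmath}^{*}\leqslant\hat{\jmath}$. Both sets are initial segments of the same total order on candidates, so the top $\hat{\jmath}^{*}$ candidates lie inside the top $\hat{\jmath}$, giving $\hat{\mathcal{S}}^{*}\subseteq\hat{\mathcal{S}}$. Item~2 holds by construction, since $\abs{\hat{\mathcal{S}}^{*}}=\hat{\jmath}^{*}\leqslant s_{\max}$. For knockoff+, I would also check that the top-$\hat{\jmath}$ e-BH set coincides with $\{W_{j}\geqslant\hat{\tau}\}$. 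This holds because all nonzero e-values are equal and the zero ones are never admissible.

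Item~3 I would dispatch case by case, citing the earlier theorems rather than redoing the argument. For e-BH, Theorem~\ref{thm:smax_ebh} applies once condition~\eqref{eq:eval_condition} holds for the averaged e-values. That condition follows from linearity of expectation over $k$ together with the per-realisation bound derived in Section~\ref{sec:ko} from the supermartingale lemma, which rests on the flip-sign property of SHAP-DS and on Assumptions~\ref{ass:mx}--\ref{ass:linear}. For knockoff+, Theorem~\ref{thm:smax_knockoff} applies; its empty-set branch is exactly what makes the rule a count-constrained step-up rule rather than a post-hoc truncation. For BH on aggregated p-values, Theorem~\ref{thm:smax_bh} applies under its super-uniformity and dependence assumptions. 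In every case the level $q$ is fixed in advance; if the adaptive search of Algorithm~\ref{alg:adaptive_fdr} is in play, I would state the result at $q_{\max}$, using the self-consistency remark from the main text.

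The main obstacle is the BH case. Markov's inequality used for e-values does not transfer to p-values, and under PRDS one must check that a step-up rule with a capped count set still satisfies the Blanchard--Roquain self-consistency and dependence-control conditions. I would try to write the capped rule as a step-up procedure with critical values $qj/p$ for $j\leqslant s_{\max}$ and $0$ beyond. Since the rule's threshold function is nondecreasing in the count, the Blanchard--Roquain lemma should then apply directly. A secondary point to watch is the knockoff+ tie structure in Item~1: with equal nonzero e-values, the tie-breaking rule must rank by $W_{j}$ consistently in both the constrained and the unconstrained procedures.
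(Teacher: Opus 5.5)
Your proposal is correct and follows the paper's route: the unified theorem is obtained by assembling Theorems~\ref{thm:smax_ebh}, \ref{thm:smax_knockoff} and~\ref{thm:smax_bh}, with item~3 resting on two facts, namely self-consistency at the realised capped count and the per-null bound (Markov's inequality for e-values, super-uniformity for p-values). One small correction to your optional BH aside: critical values $qj/p$ followed by $0$ are not nondecreasing, so the Blanchard--Roquain argument does not go through for that reason; it goes through because of self-consistency with respect to $qr/p$ together with the fact that the capped rejection count is a coordinatewise nonincreasing function of the p-values.
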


\noindent The mechanism is uniform across all three procedures and rests on two ingredients, not on the set inclusion $\hat{\mathcal{S}}^{*} \subseteq \hat{\mathcal{S}}$. First, a per-candidate bound (Markov's inequality for e-values, or the super-uniformity property for p-values) controls the expected contribution of each null and holds regardless of the cardinality of the support set. Second, the realised rejection count remains \emph{self-consistent}: the candidate at the constrained cutoff still meets the step-up condition at that count, which is the only structural property the FDR argument invokes. Capping the search at $s_{\max}$ replaces the unrestricted maximiser by the largest self-consistent count not exceeding $s_{\max}$, making the rejection cutoff more stringent while preserving self-consistency; the guarantee $\textrm{FDR}(\hat{\mathcal{S}}^{*}) \leqslant q$ then follows under the corresponding validity assumptions above. We stress that this does not follow from set inclusion alone: an arbitrary subset of an FDR-controlled set need not be FDR-controlled, because removing true positives can inflate the realised proportion. This justifies the $s_{\max}$ constraint employed in Algorithm~\ref{alg:adaptive_fdr}.

\paragraph{Remark (Practical implementation: the size cap and post-hoc truncation).}
The theorems above analyse the cap as a \emph{restriction of the rejection rule}: the search for the rejection count is confined to $j \leqslant s_{\max}$, so that the realised count $\hat{\jmath}^{*}$ is the largest self-consistent index not exceeding $s_{\max}$. Our practical implementation realises the cap slightly differently: it first computes the unconstrained e-BH selection $\hat{\mathcal{S}}_{\textrm{ebh}}$ at level $q_{\textrm{ebh}}$ and, only if $\abs{\hat{\mathcal{S}}_{\textrm{ebh}}} > s_{\max}$, truncates it to the $s_{\max}$ candidates with the largest averaged e-values (ties broken by $\bar{W}_{j}$, then by index). The two coincide whenever the cap does not bind, i.e.\ whenever $\abs{\hat{\mathcal{S}}_{\textrm{ebh}}} \leqslant s_{\max}$, in which case no truncation occurs and both return $\hat{\mathcal{S}}_{\textrm{ebh}}$. In the regime in which the pipeline operates---a generous initialisation $q_{0}$ together with $s_{\max}$ set above the anticipated support size, so that the adaptive search of Algorithm~\ref{alg:adaptive_fdr} keeps $s_{\min} \leqslant \abs{\hat{\mathcal{S}}_{\textrm{ebh}}} \leqslant s_{\max}$---the cap is slack, the truncation does not bind, and the deployed per-estimator selection is exactly the FDR-controlled set of Theorem~\ref{thm:smax_ebh}. The two procedures can differ only when the truncation binds and the retained cutoff fails self-consistency, that is when $e^{\textrm{avg}}_{(s_{\max})} < p/(q_{\textrm{ebh}}\,s_{\max})$; the truncated set of size $s_{\max}$ is then \emph{not} a self-consistent e-BH rejection and is not covered by Theorem~\ref{thm:smax_ebh}. To retain the finite-sample guarantee unconditionally---including in this boundary regime---it suffices to apply the cap within the rejection rule as analysed here, equivalently to discard from any truncated set those candidates whose averaged e-value falls below $p/(q_{\textrm{ebh}}\,\abs{\hat{\mathcal{S}}^{*}})$ evaluated at the final count, applied iteratively until the count stabilises; this reduces to the count-based rule of Theorem~\ref{thm:smax_ebh} and never weakens the guarantee. Our implementation exposes this count-constrained variant directly, so that the guarantee of Theorem~\ref{thm:smax_ebh} holds unconditionally when it is used.

\subsection*{Decoupling e-value construction from the selection FDR}\label{app:ebh_fdr}

In Algorithm~\ref{alg:adaptive_fdr}, the per-run knockoff threshold $\hat{\tau}^{k}$ is computed at the current adaptive FDR level $q_{\textrm{ebh}}$, and the same $q_{\textrm{ebh}}$ is subsequently used in the e-BH procedure. Because the e-values
\begin{equation*}
	e^{k}_{j} = p \cdot \frac{\mathbbm{1}\{W^{k}_{j} \geqslant \hat{\tau}^{k}(q_{\textrm{ebh}})\}}{1 + \#\{j^{\prime} \in [p] : W^{k}_{j^{\prime}} \leqslant -\hat{\tau}^{k}(q_{\textrm{ebh}})\}}
\end{equation*}
depend on $q_{\textrm{ebh}}$ through $\hat{\tau}^{k}$, varying $q_{\textrm{ebh}}$ changes both the e-values and the selection threshold simultaneously. This coupling can break the monotonicity of the rejection set: a stricter (smaller) $q_{\textrm{ebh}}$ raises $\hat{\tau}^{k}$, which reduces the denominator $1 + \#\{j^{\prime}: W^{k}_{j^{\prime}} \leqslant -\hat{\tau}^{k}\}$, thereby inflating the e-values of surviving candidates. The inflated e-values may then exceed the e-BH selection threshold for candidates that were not selected at a more liberal level, producing the counter-intuitive outcome $\hat{\mathcal{S}}_{\textrm{ebh}}(q') \not\subseteq \hat{\mathcal{S}}_{\textrm{ebh}}(q)$ for $q' < q$.

To restore monotonicity, we decouple the two roles by introducing a base FDR level $q_{\textrm{ebh}}^{\textrm{base}}$ that governs e-value construction, while the selection FDR $q_{\textrm{ebh}}$ controls only the step-up threshold. Concretely, we fix
\begin{equation}
	\hat{\tau}^{k} \;=\; \hat{\tau}^{k}\!\left(q_{\textrm{ebh}}^{\textrm{base}}\right), \qquad e^{k}_{j}\!\left(q_{\textrm{ebh}}^{\textrm{base}}\right) = p \cdot \frac{\mathbbm{1}\{W^{k}_{j} \geqslant \hat{\tau}^{k}(q_{\textrm{ebh}}^{\textrm{base}})\}}{1 + \#\{j^{\prime} \in [p] : W^{k}_{j^{\prime}} \leqslant -\hat{\tau}^{k}(q_{\textrm{ebh}}^{\textrm{base}})\}},
	\label{eq:decoupled_eval}
\end{equation}
so that $e^{k}_{j}$ no longer depends on the selection level, and apply the e-BH procedure at target level $q_{\textrm{ebh}}$ to the averaged e-values $e^{\textrm{avg}}_{j}(q_{\textrm{ebh}}^{\textrm{base}}) = \frac{1}{K}\sum_{k} e^{k}_{j}(q_{\textrm{ebh}}^{\textrm{base}})$.

\begin{theorem}[Decoupled e-BH preserves FDR control and yields monotone rejections]\label{thm:decoupled_ebh}
	Let $q_{\textrm{ebh}}^{\textrm{base}} \in (0,1]$ be fixed and let the e-values be constructed as in~\eqref{eq:decoupled_eval}. Then:
	\begin{enumerate}
		\item \textbf{Validity.} The condition $\sum_{j \in \mathcal{H}_{0}} \mathbb{E}\!\left[e^{\textrm{avg}}_{j}(q_{\textrm{ebh}}^{\textrm{base}})\right] \leqslant p$ holds for every $q_{\textrm{ebh}}^{\textrm{base}} \in (0,1]$, so the e-BH procedure at any fixed selection level $q_{\textrm{ebh}} \in (0,1]$ satisfies $\textrm{FDR}\!\left(\hat{\mathcal{S}}_{\textrm{ebh}}\right) \leqslant q_{\textrm{ebh}}$.
		\item \textbf{Monotonicity.} For $q' < q$, the decoupled procedure satisfies $\hat{\mathcal{S}}_{\textrm{ebh}}(q') \subseteq \hat{\mathcal{S}}_{\textrm{ebh}}(q)$; that is, a stricter selection FDR can only shrink the support set.
	\end{enumerate}
\end{theorem}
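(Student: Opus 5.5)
The plan is to prove the two parts separately, since they rest on different ingredients. Part~1 (validity) is essentially the calculation already displayed in Section~\ref{sec:ko} for the coupled e-values, now run at a fixed base level. Part~2 (monotonicity) is a purely deterministic statement about the e-BH step-up rule once the e-values no longer depend on the selection level.

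\emph{Validity.} Fix $q_{\textrm{ebh}}^{\textrm{base}} \in (0,1]$. For each realisation $k$, the threshold $\hat{\tau}^{k}(q_{\textrm{ebh}}^{\textrm{base}})$ is the knockoff+ threshold~\eqref{eq:ko_threshold} at a level that does not depend on the data-driven selection. It is therefore a stopping time in reverse time for the filtration used in Appendix~\ref{app:ko_proof}.

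First, lower-bound the denominator $1 + \#\{j' \in [p] : W^{k}_{j'} \leqslant -\hat{\tau}^{k}\}$ by its restriction to nulls. Summing over $j \in \mathcal{H}_{0}$ then gives
\[
\sum_{j\in\mathcal{H}_0}\mathbb{E}[e^k_j] \leqslant p\,\mathbb{E}\!\left[M^{+}(\hat{\tau}^{k})/(1+M^{-}(\hat{\tau}^{k}))\right] \leqslant p,
\]
by the optional-stopping bound established in Appendix~\ref{app:ko_proof}. Averaging over $k$ and using linearity of expectation yields~\eqref{eq:eval_condition} for $e^{\textrm{avg}}_{j}(q_{\textrm{ebh}}^{\textrm{base}})$.

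Since these e-values are now fixed random variables, independent of $q_{\textrm{ebh}}$, Theorem~\ref{thm:smax_ebh} applies at any fixed selection level $q_{\textrm{ebh}}$. Taking $s_{\max} = p$ gives the unconstrained case, and $\textrm{FDR} \leqslant q_{\textrm{ebh}}$ follows.

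\emph{Monotonicity.} With the e-values held fixed, write $\hat{\jmath}(q) = \max\{j : e^{\textrm{avg}}_{(j)} \geqslant p/(qj)\}$. For $q' < q$ we have $p/(q'j) > p/(qj)$ for every $j$. Hence any index admissible at level $q'$ is also admissible at level $q$, so the admissible set at $q'$ is contained in that at $q$, and $\hat{\jmath}(q') \leqslant \hat{\jmath}(q)$.

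Both rejection sets consist of the top-$\hat{\jmath}$ candidates under the same fixed order on the averaged e-values, using the same deterministic tie-breaking rule as in Theorem~\ref{thm:smax_ebh}. Top-$m$ sets under a common order are nested in $m$, which gives $\hat{\mathcal{S}}_{\textrm{ebh}}(q') \subseteq \hat{\mathcal{S}}_{\textrm{ebh}}(q)$. The same argument carries over verbatim if the cap $j \leqslant s_{\max}$ is included.

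\emph{Main obstacle.} The delicate step is the validity part: we must justify that $\hat{\tau}^{k}(q_{\textrm{ebh}}^{\textrm{base}})$ is a legitimate stopping time with respect to the reverse filtration, so that the supermartingale bound applies. This holds because the base level is deterministic and the threshold is determined by $\{|W_j|\}$ together with the running sign counts.

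We also need to state the caveat about adaptive tuning explicitly. If $q_{\textrm{ebh}}$ is chosen from the data, as in Algorithm~\ref{alg:adaptive_fdr}, the bound is stated at the fixed ceiling $q_{\max}$. The justification is that a set which is self-consistent at the realised level $\hat{q}_{\textrm{ebh}} \leqslant q_{\max}$ is also self-consistent at $q_{\max}$, so the Markov step of Theorem~\ref{thm:smax_ebh} applies with $q_{\max}$.
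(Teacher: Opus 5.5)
Your proposal is correct and follows essentially the same route as the paper. For validity, you restrict the denominator to nulls and apply the optional-stopping bound at the fixed-base-level knockoff+ threshold, then invoke Theorem~\ref{thm:smax_ebh} with $s_{\max}=p$; for monotonicity, you use nestedness of the admissible index sets to get $\hat{\jmath}(q')\leqslant\hat{\jmath}(q)$. Your explicit top-$m$ nesting step and your caveat on self-consistency at $q_{\max}$ under adaptive tuning only spell out what the paper states more tersely; the latter appears in the paper's remark immediately after the theorem.
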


\begin{proof}
	\textit{Part~1 (Validity).}
	The coin-flip property of null knockoff statistics ensures that, for any threshold $\hat{\tau}^{k}$ that is a valid stopping time with respect to the ordered null-sign filtration, the ratio $\#\{j \in \mathcal{H}_{0}: W^{k}_{j} \geqslant \hat{\tau}^{k}\}\big/\bigl(1 + \#\{j \in \mathcal{H}_{0}: W^{k}_{j} \leqslant -\hat{\tau}^{k}\}\bigr)$ has expectation at most~$1$. Since $\hat{\tau}^{k}(q_{\textrm{ebh}}^{\textrm{base}})$ is such a stopping time for any fixed $q_{\textrm{ebh}}^{\textrm{base}}$, the derivation in Section~2 carries through unchanged:
	\begin{equation*}
		\sum_{j \in \mathcal{H}_{0}} \mathbb{E}\!\left[e^{k}_{j}(q_{\textrm{ebh}}^{\textrm{base}})\right] \;\leqslant\; p \quad \text{for each } k,
	\end{equation*}
	and averaging over $K$ runs preserves the bound. The FDR guarantee then follows from the standard e-BH proof (Theorem~\ref{thm:smax_ebh} with $s_{\max} = p$).
	
	\textit{Part~2 (Monotonicity).}
	Because the e-values $e^{\textrm{avg}}_{j}(q_{\textrm{ebh}}^{\textrm{base}})$ are now fixed with respect to $q_{\textrm{ebh}}$, the e-BH selection threshold $p/(q_{\textrm{ebh}}\,\hat{\jmath})$ is monotonically non-increasing in $q_{\textrm{ebh}}$: decreasing $q_{\textrm{ebh}}$ raises the threshold $p/(q_{\textrm{ebh}}\,j)$ for every $j$, which can only reduce $\hat{\jmath}$ (further raising the threshold) and hence shrink $\hat{\mathcal{S}}_{\textrm{ebh}}$. Formally, for $q' < q$,
	\begin{equation*}
		\frac{p}{q'\,j} \;>\; \frac{p}{q\,j} \quad \text{for all } j \geqslant 1,
	\end{equation*}
	so $\hat{\jmath}(q') \leqslant \hat{\jmath}(q)$, and consequently $\hat{\mathcal{S}}_{\textrm{ebh}}(q') \subseteq \hat{\mathcal{S}}_{\textrm{ebh}}(q)$.
\end{proof}

\paragraph{Remark (Practical implementation in Algorithm~\ref{alg:adaptive_fdr}).}
In our implementation, we set $q_{\textrm{ebh}}^{\textrm{base}} = q_{0}$ (the initial FDR level supplied to Algorithm~\ref{alg:adaptive_fdr}), so that the e-values are computed once and held fixed throughout the adaptive search. As Algorithm~\ref{alg:adaptive_fdr} increments $q_{\textrm{ebh}}$ from $q_{0}$ upward (towards $q_{\max}$), only the e-BH selection threshold changes. This decoupling guarantees that, for any two target levels $q' < q$ evaluated during the search, the support set at $q'$ is a subset of that at $q$, thereby ensuring the intuitive property that a more stringent target FDR yields a smaller or equal support set. FDR control is preserved at every fixed $q_{\textrm{ebh}}$ by Part~1 of Theorem~\ref{thm:decoupled_ebh}. If the final level is chosen adaptively from the data during the search, the realised $\hat{q}_{\textrm{ebh}} \leqslant q_{\max}$ is data-dependent and the clean finite-sample bound is stated at the pre-specified ceiling $q_{\max}$, because the selected set is self-consistent at some $\hat{q}_{\textrm{ebh}} \leqslant q_{\max}$ and hence also satisfies the e-BH self-consistency inequality at the fixed level $q_{\max}$. The size constraint $\abs{\hat{\mathcal{S}}} \leqslant s_{\max}$ from Theorem~\ref{thm:smax_ebh} applies unchanged, since the e-values still satisfy condition~\eqref{eq:eval_condition} irrespective of $q_{\textrm{ebh}}^{\textrm{base}}$.

\paragraph{Remark (Non-monotonicity of the coupled procedure).}
When $q_{\textrm{ebh}}^{\textrm{base}} = q_{\textrm{ebh}}$ (the original coupled formulation), the knockoff threshold $\hat{\tau}^{k}$ varies with $q_{\textrm{ebh}}$, and a stricter $q_{\textrm{ebh}}$ raises $\hat{\tau}^{k}$. This simultaneously (i)~zeros out e-values of borderline candidates and (ii)~reduces the denominator $1 + \#\{j': W^{k}_{j'} \leqslant -\hat{\tau}^{k}\}$, inflating the e-values of surviving candidates. The net effect is that a stricter $q_{\textrm{ebh}}$ can select more candidates than a liberal one. This does not violate FDR control when $q_{\textrm{ebh}}$ is fixed in advance---condition~\eqref{eq:eval_condition} holds for the corresponding valid stopping-time threshold $\hat{\tau}^{k}$---but it can be undesirable in practice when one expects monotonically decreasing support sizes. The decoupled formulation eliminates this artefact while retaining the same finite-sample FDR guarantee.


\section{Computational complexity of knockoff feature statistics}\label{app:complexity}

Section~\ref{sec:exp_fdr} compares three knockoff feature statistics---SHAP-DS, SWAP, and SWAP-INT---empirically. Here we derive their per-realisation computational costs and verify the analysis against the wall times in Table~\ref{tab:stat_comparison}.

\subsection*{Definitions}

Recall that the augmented matrix is $\bs{X} = \begin{pmatrix} \bs{\Phi} & \bs{\tilde{\Phi}} \end{pmatrix} \in \mathbb{R}^{n \times 2p}$ and the abess estimator produces coefficients $\hat{\bs{\beta}} \in \mathbb{R}^{2p}$ with $\norm{\hat{\bs{\beta}}}_{0} \leqslant s_{\max}$. All three statistics share the same knockoff generation cost $T_{\mathrm{ko}}$ and estimator fitting cost $T_{\mathrm{abess}}$; only the statistic-computation cost $T_{\mathrm{stat}}$ differs.

\paragraph{SHAP-DS (this work).}
Recall from Section~\ref{sec:ko} that, for the linear \texttt{abess} model, the interventional SHAP value of feature~$j$ on sample~$i$ is $\varphi_{j}(i) = \hat{\beta}_{j}(X_{ij} - \bar{X}_{j})$, where $\bar{X}_{j} = \frac{1}{n}\sum_{i} X_{ij}$. The same arithmetic is performed by the SHAP library's \texttt{LinearExplainer}; for fair, reproducible parallel benchmarking against the SWAP and SWAP-INT statistics evaluated below, our default implementation reproduces it directly via three vectorised NumPy operations on the $(n \times 2p)$ augmented matrix,
\begin{equation*}
	\bar{\bs{X}} \gets \mathrm{mean}(\bs{X}, \mathrm{axis}{=}0),\qquad \bs{\Psi} \gets \hat{\bs{\beta}} \odot (\bs{X} - \bar{\bs{X}}),\qquad Z_{j} \gets \mathrm{mean}(\abs{\bs{\Psi}_{:,j}}),
\end{equation*}
yielding numerically identical $Z_{j}$ values to the library implementation while bypassing the per-call object-construction overhead (background-data summarisation, masker initialisation and input validation) that would otherwise dominate the wall-time measurement when $K = 100$ realisations are evaluated in parallel.

\paragraph{SWAP~\cite{gimenez2019knockoffs}.}
Each feature $j \in [2p]$ is evaluated by replacing its column in $\bs{X}$ with the corresponding knockoff partner and measuring the increase in prediction loss. Concretely, for each $j$, the implementation constructs $\bs{X}^{(j)}$ by setting $\bs{X}^{(j)}_{:,j} \gets \bs{X}_{:,\sigma(j)}$ (where $\sigma$ maps $j$ to its knockoff partner) and computes $Z^{\,\mathrm{SWAP}}_{j} = \mathrm{MSE}(g(\bs{X}^{(j)}), \bs{y})$, where $g$ is the fitted linear predictor. Each per-feature evaluation requires (i) a copy of the $n \times 2p$ matrix, (ii) a single column overwrite, and (iii) one prediction $g(\bs{X}^{(j)}) = \bs{X}^{(j)}\hat{\bs{\beta}}$ followed by a residual norm---each step costs $O(np)$.

\paragraph{SWAP-INT~\cite{gimenez2019knockoffs}.}
The swap integral generalises SWAP by replacing the binary substitution with a path of interpolated columns and integrating the resulting loss via trapezoidal quadrature. For each $\lambda$ in a fixed grid $\{\lambda_{1}, \dots, \lambda_{L}\}$ with step size $\Delta\lambda$, the implementation forms $\bs{X}^{(j,\lambda)}$ via the column update $\bs{X}^{(j,\lambda)}_{:,j} \gets \bs{X}_{:,j} + \lambda\bigl(\bs{X}_{:,\sigma(j)} - \bs{X}_{:,j}\bigr)$, evaluates the loss $\ell_{j,\lambda} = \mathrm{MSE}(g(\bs{X}^{(j,\lambda)}), \bs{y})$, and accumulates $Z^{\,\mathrm{SWAP\text{-}INT}}_{j} = \sum_{\ell=1}^{L} \tfrac{\Delta\lambda}{2}(\ell_{j,\lambda_{\ell-1}} + \ell_{j,\lambda_{\ell}})$. We use $L = 10$ evaluation points $\lambda \in \{0.5, 1.0, \dots, 5.0\}$ with $\Delta\lambda = 0.5$, paired with a baseline evaluation at $\lambda = 0$ to seed the recursion. The implementation nests the same $2p$-feature loop inside an outer loop over the $L$ quadrature points.

\subsection*{Per-realisation costs}

\paragraph{SHAP-DS.}
The three vectorised operations each visit the full $(n \times 2p)$ matrix once, giving
\begin{equation*}
	T_{\mathrm{stat}}^{\,\mathrm{SHAP\text{-}DS}} = O(np).
\end{equation*}

\paragraph{SWAP.}
The Python-level loop iterates $2p$ times. Each iteration performs (i) a matrix copy at cost $O(np)$, (ii) a single column overwrite at cost $O(n)$, and (iii) a prediction $\bs{X}^{(j)}\hat{\bs{\beta}}$ plus residual computation at cost $O(np)$. The loop does not skip features with $\hat{\beta}_{j} = 0$. The total is
\begin{equation*}
	T_{\mathrm{stat}}^{\,\mathrm{SWAP}} = 2p \cdot O(np) = O(np^{2}).
\end{equation*}

\paragraph{SWAP-INT.}
The outer loop over $L$ quadrature points multiplies the per-feature work by $L$, giving
\begin{equation*}
	T_{\mathrm{stat}}^{\,\mathrm{SWAP\text{-}INT}} = L \cdot 2p \cdot O(np) = O(L \cdot np^{2}).
\end{equation*}

\paragraph{Estimator fitting cost $T_{\mathrm{abess}}$.}
By Theorem 2 of Zhu et al.~\cite{zhu2020polynomial}, the abess splicing algorithm at a fixed support size has per-iteration cost $O(n p\,s_{\max})$, and the algorithm terminates in $O(s_{\max} \log p \log\log n)$ outer iterations (Theorem 3 ibid.). The dominant cost is therefore
\begin{equation*}
	T_{\mathrm{abess}} = O\!\bigl(n p\,s_{\max}^{2}\,\log p\,\log\log n\bigr).
\end{equation*}
Since $s_{\max} \ll p$ for the overcomplete libraries considered in this work (i.e., $s_{\max} = O(\log p)$ per Theorem 4 ibid.), $T_{\mathrm{abess}}$ scales near-linearly in $p$ for fixed $n$. Consequently, the abess fit and the SHAP-DS statistic together cost $O(np\,s_{\max}^{2}\,\log p\,\log\log n)$ per realisation, asymptotically smaller than the $O(np^{2})$ statistic-computation cost of SWAP.

\subsection*{Parallelisation}

Algorithm~\ref{alg:adaptive_fdr} aggregates $K$ knockoff realisations per covariance estimator using $J$ parallel workers (\texttt{n\_jobs}). All three statistics are parallelised identically: each worker processes one realisation (estimator fit plus statistic computation). The total wall time is therefore
\begin{equation*}
	T_{\mathrm{total}} \;\approx\; \frac{K}{J}\,(T_{\mathrm{ko}} + T_{\mathrm{abess}} + T_{\mathrm{stat}}).
\end{equation*}

Table~\ref{tab:complexity_summary} collects the per-realisation costs.

\begin{table}[htbp]
	\centering
	\caption{Per-realisation computational cost of each knockoff feature statistic. $T_{\mathrm{ko}}$ and $T_{\mathrm{abess}}$ are shared across all three. $L$: number of quadrature points ($L = 10$ by default).}\label{tab:complexity_summary}
	\small
	\begin{tabular}{lcc}
		\hline
		& $T_{\mathrm{stat}}$ (per realisation) & Total wall time \\
		\hline
		SHAP-DS  & $O(np)$       & $\tfrac{K}{J}\,(T_{\mathrm{ko}} + T_{\mathrm{abess}} + O(np))$ \\[3pt]
		SWAP     & $O(np^{2})$   & $\tfrac{K}{J}\,(T_{\mathrm{ko}} + T_{\mathrm{abess}} + O(np^{2}))$ \\[3pt]
		SWAP-INT & $O(Lnp^{2})$  & $\tfrac{K}{J}\,(T_{\mathrm{ko}} + T_{\mathrm{abess}} + O(Lnp^{2}))$ \\
		\hline
	\end{tabular}
\end{table}


\section{Comparison with sparse regression baselines}\label{app:baselines}

We compare KO-PDE-IDENT against three sparse regression methods available in PySINDy~\cite{deSilva2020pysindy, kaptanoglu2022pysindy}: STLSQ, SR3, and SSR. Every method receives the identical denoised weak-form candidate library $\bs{\Phi}$ and response $\bs{y}$ described in Section~\ref{sec:exp_setup}; the only difference is the sparse regression procedure used to recover the governing terms.

\subsection*{Baseline methods}

The three methods and their hyperparameter grids are as follows.
\begin{enumerate}
	\item \textit{STLSQ} (Sequential Thresholded Least Squares)~\cite{brunton2016sindy}: the default SINDy optimiser, alternating between least-squares fitting and hard thresholding. Threshold $\tau \in \{0.01, 0.05, 0.1, 0.5, 1.0\}$; ridge penalty $\lambda_{\textrm{ridge}} \in \{0, 10^{-5}, 10^{-3}\}$.
	\item \textit{SR3} (Sparse Relaxed Regularised Regression)~\cite{zheng2019sr3}: decouples thresholding from regression through a relaxation variable, with $\ell_{0}$ regularisation. Regularisation weight $\lambda \in \{0.001, 0.005, 0.01, 0.05, 0.1, 0.5, 1.0\}$; relaxation coefficient $\nu \in \{1, 10\}$.
	\item \textit{SSR} (Stepwise Sparse Regression)~\cite{boninsegna2018sparse}: backward elimination removing the least important term at each step. Ridge penalty $\lambda_{\textrm{ridge}} \in \{0, 10^{-5}\}$; elimination criterion: coefficient value or model residual.
\end{enumerate}
Coefficients are debiased via OLS on the selected support for all three methods.

\subsection*{Hyperparameter selection}

Hyperparameters are selected without access to the ground-truth governing equation. Every method--hyperparameter combination is evaluated via $3$-fold cross-validated mean squared error (CV-MSE) on the full weak-form library. The configuration with the lowest CV-MSE is selected per method, and the method is then refit on the full dataset. The eFDR, ePOWER and support size are computed against the known ground truth. Coefficient errors ($\%$CE) are omitted from the table because every method can achieve the same $\%$CE through OLS re-estimation on the recovered support; the key distinguishing factor is therefore the structural recovery captured by eFDR and ePOWER.

\subsection*{Empirical results}

Table~\ref{tab:baseline_denoised} presents the results. KO-PDE-IDENT is the only method that achieves eFDR $= 0$ and ePOWER $= 1$ on every benchmark. No baseline attains this jointly across all seven equations.

\begin{table}[htbp]
	\centering
	\caption{Baseline comparison. The candidate library used for each baseline method is computed via the weak formulation after denoising the noisy observations, making it identical to the library used for KO-PDE-IDENT. Hyperparameters were selected via $3$-fold CV-MSE without ground-truth access.}\label{tab:baseline_denoised}
	\small
	\setlength{\tabcolsep}{4pt}
	\begin{tabular}{llccc}
		\hline
		PDE & Method & $\abs{\hat{\mathcal{S}}}$ & eFDR & ePOWER \\
		\hline
		\multirow{4}{*}{Burgers}
		& STLSQ     & 28 & 0.964 & 0.500 \\
		& SR3       & 28 & 0.964 & 0.500 \\
		& SSR       & 33 & 0.939 & 1.000 \\
		& KO-PDE-IDENT & 2  & 0.000 & 1.000 \\
		\hline
		\multirow{4}{*}{KdV}
		& STLSQ     & 42 & 0.952 & 1.000 \\
		& SR3       & 43 & 0.954 & 1.000 \\
		& SSR       & 31 & 0.935 & 1.000 \\
		& KO-PDE-IDENT & 2  & 0.000 & 1.000 \\
		\hline
		\multirow{4}{*}{KS}
		& STLSQ     & 3  & 0.000 & 1.000 \\
		& SR3       & 3  & 0.000 & 1.000 \\
		& SSR       & 33 & 0.909 & 1.000 \\
		& KO-PDE-IDENT & 3  & 0.000 & 1.000 \\
		\hline
		\multirow{4}{*}{RD ($u$)}
		& STLSQ     & 8  & 0.125 & 1.000 \\
		& SR3       & 7  & 0.000 & 1.000 \\
		& SSR       & 7  & 0.000 & 1.000 \\
		& KO-PDE-IDENT & 7  & 0.000 & 1.000 \\
		\hline
		\multirow{4}{*}{RD ($v$)}
		& STLSQ     & 8  & 0.125 & 1.000 \\
		& SR3       & 7  & 0.000 & 1.000 \\
		& SSR       & 7  & 0.000 & 1.000 \\
		& KO-PDE-IDENT & 7  & 0.000 & 1.000 \\
		\hline
		\multirow{4}{*}{GS ($u$)}
		& STLSQ     & 6  & 0.000 & 1.000 \\
		& SR3       & 1  & 0.000 & 0.167 \\
		& SSR       & 15 & 0.600 & 1.000 \\
		& KO-PDE-IDENT & 6  & 0.000 & 1.000 \\
		\hline
		\multirow{4}{*}{GS ($v$)}
		& STLSQ     & 2  & 0.000 & 0.400 \\
		& SR3       & 2  & 0.000 & 0.400 \\
		& SSR       & 9  & 0.444 & 1.000 \\
		& KO-PDE-IDENT & 5  & 0.000 & 1.000 \\
		\hline
	\end{tabular}
\end{table}

The severity of multicollinearity determines the difficulty of the discovery task. On the 1D equations under $50\%$ noise, the candidate libraries are large ($p = 49$ for Burgers and KdV), and CV-MSE systematically selects permissive hyperparameters that admit large numbers of spurious terms: all three baselines select $28$--$43$ terms against a true support of $2$, with eFDR exceeding $0.93$. The KS equation is an exception---STLSQ and SR3 achieve exact recovery---because its three-term governing equation is well-separated from the spurious features in the thresholded coefficient path. On the multi-component systems, the results are mixed. SR3 and SSR achieve exact recovery on both RD equations, reflecting the moderate library dimension ($p = 19$) and lower noise ($10\%$). On GS, however, the baselines diverge sharply: STLSQ recovers the $u$-equation exactly but retains only $2$ of $5$ true terms for $v$ (ePOWER $= 0.40$); SR3 under-selects on both equations (ePOWER $\leqslant 0.40$); and SSR over-selects on both (eFDR $\geqslant 0.44$). The small diffusion coefficients in GS ($D_{u} = 0.02$, $D_{v} = 0.01$) produce weak signals that are easily eliminated by hard thresholding or overwhelmed by correlated spurious terms during backward elimination. KO-PDE-IDENT avoids both pitfalls: the knockoff filter retains all true terms with FDR control, and the MCDM stage resolves the remaining ambiguity.

\section{Supplementary empirical results}\label{app:more_results}

To complement the main empirical study, we report three sets of supplementary experiments that probe the design choices underlying \textsc{KO-PDE-IDENT}. Section~\ref{app:lasso_vs_abess} replaces the best-subset solver with Lasso to examine how the choice of sparse regression estimator influences eFDR and ePOWER. Section~\ref{app:selective_inference} contrasts \textsc{KO-PDE-IDENT} with selective inference, an alternative FDR-controlled framework. Section~\ref{app:conventional_ic} demonstrates that conventional information criteria (AIC and EBIC) tend to select overly complex PDE alternatives, motivating the multi-criteria formulation of Section~\ref{sec:mcdm}.

\subsection{Comparison between \texttt{abess} and Lasso for KO-PDE-IDENT}\label{app:lasso_vs_abess}

To assess the influence of the underlying sparse-regression estimator on the knockoff-filtering stage, we replaced the $\ell_{0}$-constrained \texttt{abess} solver in Algorithm~\ref{alg:adaptive_fdr} with Lasso~\cite{tibshirani1996regression} and repeated the FDR-controlled selection under the same equicorrelated knockoff construction, the same SHAP-DS feature statistic, and the same e-BH aggregation across $K = 100$ realisations. All other settings (covariance estimators, weak-form library, sample size, noise level, target FDR) were held fixed. Table~\ref{tab:lasso_vs_abess} reports the resulting support size $\abs{\hat{\mathcal{S}}}$, eFDR and ePOWER for each PDE.

\begin{table}[ht]
	\centering
	\caption{KO-PDE-IDENT screening results with Lasso replacing \texttt{abess} as the sparse-regression estimator (target FDR $q = 0.50$). The corresponding \texttt{abess} results are reproduced from the equicorrelated--e-BH row of Table~\ref{tab:gen_comparison} for direct comparison.}
	\label{tab:lasso_vs_abess}
	\begin{tabular}{lccc|ccc}
		\toprule
		& \multicolumn{3}{c|}{Lasso} & \multicolumn{3}{c}{\texttt{abess}} \\
		PDE & $\abs{\hat{\mathcal{S}}}$ & eFDR & ePOWER & $\abs{\hat{\mathcal{S}}}$ & eFDR & ePOWER \\
		\midrule
		Burgers     & 7  & 0.71 & 1.00 & 8  & 0.75 & 1.00 \\
		KdV         & 8  & 0.75 & 1.00 & 7  & 0.71 & 1.00 \\
		KS          & 3  & 0.00 & 1.00 & 3  & 0.00 & 1.00 \\
		RD ($u$)    & 10 & 0.30 & 1.00 & 11 & 0.36 & 1.00 \\
		RD ($v$)    & 10 & 0.30 & 1.00 & 11 & 0.36 & 1.00 \\
		GS ($u$)    & 10 & 0.40 & 1.00 & 6  & 0.00 & 1.00 \\
		GS ($v$)    & 10 & 0.50 & 1.00 & 5  & 0.00 & 1.00 \\
		\bottomrule
	\end{tabular}
\end{table}

Both estimators maintain ePOWER $= 1$ across every PDE, indicating that overcompleteness (Definition~\ref{def:overcomplete}) holds under either choice of sparse regressor. The eFDR values, however, differ markedly. On the high-noise 1D Burgers and KdV equations and the moderate-noise RD system, the two estimators behave comparably: support sizes differ by at most one term and eFDRs by at most $0.06$, with Lasso producing a marginally smaller support on Burgers and RD ($\abs{\hat{\mathcal{S}}} = 7$ versus $8$ for Burgers, $10$ versus $11$ for each RD equation) and a marginally larger one on KdV. On KS, both estimators agree on the exact support ($\abs{\hat{\mathcal{S}}} = 3$, eFDR $= 0$). The clearest contrast emerges on the multi-component GS system: Lasso selects $10$ terms in both equations with eFDR $\geqslant 0.40$, whereas \texttt{abess} achieves exact recovery (eFDR $= 0$, $\abs{\hat{\mathcal{S}}} = 6$ and $5$ for the $u$- and $v$-equations). This gap reflects the role of the $\ell_{0}$ constraint: best-subset selection enforces a hard sparsity budget that prevents the inclusion of small but spurious coefficients, whereas $\ell_{1}$ shrinkage retains many such coefficients in the candidate library. The overall pattern is consistent with our methodological choice: Lasso preserves the favourable power properties associated with overcompleteness, and---in cases where the two diverge---\texttt{abess} produces an equal or stricter eFDR, yielding a more parsimonious candidate set entering the downstream RFE and MCDM stages.

\subsection{Comparison with selective inference}\label{app:selective_inference}

Selective inference~\cite{lee2016exact,taylor2015statistical,tibshirani2016exact} is the principal alternative to knockoff filtering for FDR-controlled variable selection: it computes valid post-selection $p$-values and confidence intervals (CIs) for terms entering a forward-stepwise path, conditional on the selection event. For each benchmark PDE, the candidate library $\bs{\Phi}$ and response vector $\bs{y}$ are constructed under the same setup as in Section~\ref{sec:exp_setup}. Columns of $\bs{\Phi}$ are scaled to unit norm and $\bs{y}$ is centred. Forward-stepwise regression is then run on $(\bs{\Phi}, \bs{y})$, and exact post-selection inference is computed at significance level $\alpha = 0.10$ (the target miscoverage rate, split as $\alpha/2$ in each tail of the constructed CIs)~\cite{tibshirani2016exact}. We compared two truncation rules for the forward-stepwise path, with results reported in Table~\ref{tab:selective_inference}.
\begin{itemize}
	\item \textit{Active-set rule}: The path is run to completion---i.e., until every candidate has entered---and the support $\hat{\mathcal{S}}$ is set to the active set of selected terms whose post-selection $p$-values fall below $\alpha$.
	\item \textit{BIC stopping rule}: The path is truncated at the step $k^{*}$ that minimises the Bayesian information criterion (BIC) $\textrm{BIC}_{k} = n\ln\hat{\sigma}^{2}_{k} + k\ln n$~\cite{schwarz1978bic}, where $\hat{\sigma}^{2}_{k}$ is the residual variance after $k$ forward steps~\cite{taylor2015statistical}. The support is the corresponding active set at step $k^{*}$.
\end{itemize}

\begin{table}[ht]
	\centering
	\caption{Selective inference results for the seven benchmark PDEs at $\alpha = 0.10$, under the two forward-stepwise truncation rules of Section~\ref{app:selective_inference} (the active-set rule and the BIC stopping rule). eFDR and ePOWER are the realised false-discovery proportion and true-positive rate on a single representative run per PDE.}
	\label{tab:selective_inference}
	\begin{tabular}{lccc|ccc}
		\toprule
		& \multicolumn{3}{c|}{Active-set rule} & \multicolumn{3}{c}{BIC stopping rule} \\
		PDE & $\abs{\hat{\mathcal{S}}}$ & eFDR & ePOWER & $\abs{\hat{\mathcal{S}}}$ & eFDR & ePOWER \\
		\midrule
		Burgers     & 12 & 0.83 & 1.00 & 25 & 0.92 & 1.00 \\
		KdV         & 10 & 0.90 & 0.50 & 14 & 0.86 & 1.00 \\
		KS          & 7  & 0.57 & 1.00 & 8  & 0.63 & 1.00 \\
		RD ($u$)    & 12 & 0.42 & 1.00 & 11 & 0.36 & 1.00 \\
		RD ($v$)    & 11 & 0.36 & 1.00 & 10 & 0.30 & 1.00 \\
		GS ($u$)    & 9  & 0.33 & 1.00 & 7  & 0.14 & 1.00 \\
		GS ($v$)    & 6  & 0.17 & 1.00 & 5  & 0.00 & 1.00 \\
		\bottomrule
	\end{tabular}
\end{table}

Selective inference fails to achieve exact recovery on every benchmark except GS ($v$) under the BIC stopping rule, whereas \textsc{KO-PDE-IDENT} ultimately recovers the exact support of all seven equations (Table~\ref{tab:coefficient_accuracy}); on most benchmarks its eFDR substantially exceeds even that of our screening stage alone (compare with Table~\ref{tab:gen_comparison}). On KdV under the active-set rule, selective inference additionally fails to retain all true governing terms (ePOWER $< 1$). The persistent over-selection reflects the well-documented limitation of post-selection inference when the candidate library is highly correlated: the conditional $p$-values become weakly informative, and the stepwise path admits many spurious terms before the truncation criterion intervenes. By contrast, the MX knockoff filter exploits the joint distribution of the candidate features (estimated through Graphical Lasso or Ledoit--Wolf) to construct null counterparts directly, avoiding the conditioning event that weakens selective inference under multicollinearity. The systematic advantage of \textsc{KO-PDE-IDENT} over selective inference across the benchmark suite confirms the appropriateness of our methodological choice for PDE discovery.

\subsection{Model selection using conventional information criteria}\label{app:conventional_ic}

As argued in the introduction, conventional information criteria are insufficient for selecting the parsimonious governing equation in data-driven PDE discovery~\cite{thanasutives2024ubic}. Here, we provide empirical evidence: when applied to the same PDE alternatives enumerated by best-subset selection over $\hat{\mathcal{S}}_{\mathrm{RFE}}$, both the Akaike information criterion (AIC) and EBIC~\cite{chen2008extended} systematically favour overly complex models over the true parsimonious one, despite EBIC's design for high-dimensional consistency and the existence of robust variants for sparse high-dimensional regression~\cite{gohain2023ebicr}.

\subsection*{Definitions}
For a candidate model with support $\mathcal{S} \subseteq [\abs{\hat{\mathcal{S}}_{\mathrm{RFE}}}]$ and cardinality $s = \abs{\mathcal{S}}$, the maximum-likelihood estimate of the noise variance under the Gaussian linear model is $\hat{\sigma}^{2}_{\mathcal{S}} = \norm{\bs{\Pi}^{\perp}_{\mathcal{S}}\bs{y}}_{2}^{2}/n$, where $\bs{\Pi}^{\perp}_{\mathcal{S}} = \bs{I}_{n} - \bs{\Phi}_{\mathcal{S}}(\bs{\Phi}_{\mathcal{S}}^{\intercal}\bs{\Phi}_{\mathcal{S}})^{-1}\bs{\Phi}_{\mathcal{S}}^{\intercal}$ is the orthogonal projector onto the orthogonal complement of $\mathrm{span}(\bs{\Phi}_{\mathcal{S}})$. AIC~\cite{akaike1974new} and EBIC~\cite{chen2008extended} for this support are
\begin{equation*}
	\mathrm{AIC}(\mathcal{S}) = n \ln \hat{\sigma}^{2}_{\mathcal{S}} + 2s, \qquad
	\mathrm{EBIC}(\mathcal{S}) = n \ln \hat{\sigma}^{2}_{\mathcal{S}} + s \ln n + 2\gamma \ln\!\binom{\abs{\hat{\mathcal{S}}_{\mathrm{RFE}}}}{s},
\end{equation*}
with $\gamma \in [0, 1]$ controlling the additional model-complexity penalty (we set $\gamma = 1$ for high-dimensional consistency). Each PDE alternative is scored, and the model minimising the criterion is selected.

\subsection*{Empirical results}
Figure~\ref{fig:aic_ebic} reports AIC and EBIC scores against support size for the three 1D benchmarks. For Burgers and KdV (both with true support size $c = 2$), AIC and EBIC decrease monotonically with support size and select the largest alternative considered ($\abs{\hat{\mathcal{S}}} = 5$ for Burgers, $\abs{\hat{\mathcal{S}}} = 4$ for KdV), neglecting the parsimonious true models. For KS (true $c = 3$), both criteria correctly identify the exact support, since adding further terms does not yield a sufficient reduction in residual variance to offset the complexity penalty.

\begin{figure}[ht]
	\centering
	\begin{subfigure}[b]{0.32\textwidth}
		\centering
		\includegraphics[width=\textwidth]{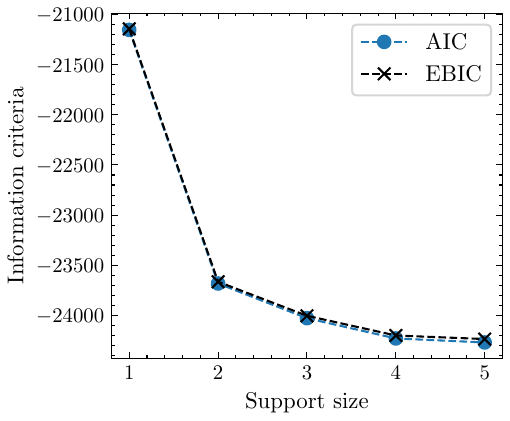}
		\caption{Burgers (true $c = 2$)}
	\end{subfigure}\hfill
	\begin{subfigure}[b]{0.32\textwidth}
		\centering
		\includegraphics[width=\textwidth]{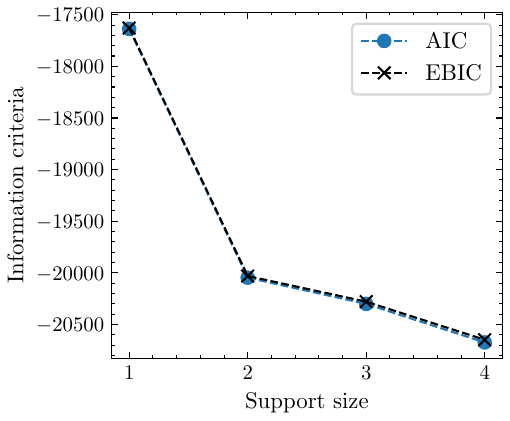}
		\caption{KdV (true $c = 2$)}
	\end{subfigure}\hfill
	\begin{subfigure}[b]{0.32\textwidth}
		\centering
		\includegraphics[width=\textwidth]{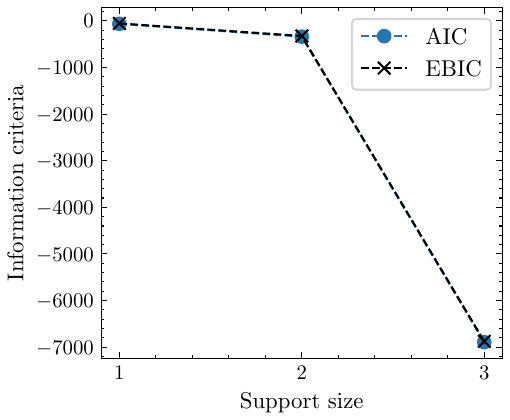}
		\caption{KS (true $c = 3$)}
	\end{subfigure}
	\caption{AIC and EBIC scores against support size on the three 1D benchmarks. Lower is better. On Burgers and KdV, both criteria decrease monotonically and select an overly complex alternative; on KS, they correctly select the parsimonious true support.}
	\label{fig:aic_ebic}
\end{figure}

This behaviour illustrates a known limitation of likelihood-based information criteria under heavy noise and overcomplete candidate libraries: their penalty terms are calibrated for ``true model'' selection under correctly specified linear models, but they treat the gain in fit linearly in the log-likelihood, which can be readily inflated by spurious terms that are mildly correlated with the response under finite samples. The MCDM-based framework introduced in Section~\ref{sec:mcdm} mitigates this pitfall by weighing predictive accuracy against multiple complexity and uncertainty criteria, rather than relying on any single information-theoretic score.


\section{Notation summary}\label{app:notation}

Table~\ref{tab:notation} summarises the principal notation used throughout the paper, grouped by the stage of the pipeline in which each symbol is introduced.

\begin{table}[!ht]
	\centering
	\caption{Summary of the principal notation.}\label{tab:notation}
	\small
	\begin{tabular}{@{}l p{0.60\textwidth}@{}}
		\toprule
		Symbol & Description \\
		\midrule
		\multicolumn{2}{l}{\textbf{Data and regression model}} \\
		$u$, $\bs{\mathrm{U}}$ & state variable and its discretised (noisy) observations \\
		$\bs{\Phi} \in \mathbb{R}^{n \times p}$, $\bs{\phi}_{j}$ & weak-form candidate library; its $j$-th feature \\
		$\bs{\tilde{\Phi}}$, $\tilde{\phi}_{j}$ & knockoff copy of the library; of feature $j$ \\
		$\bs{X} = \begin{pmatrix} \bs{\Phi} & \bs{\tilde{\Phi}} \end{pmatrix}$ & augmented feature matrix \\
		$\bs{y}$ & response (weak-form integral of $u_{t}$) \\
		$n$, $p$, $[p]$ & number of subdomain samples; number of candidates; index set $\{1, \dots, p\}$ \\
		$\bs{\xi}$, $\bs{\eta}$ & true coefficient vector; observation noise \\
		$\mathcal{H}_{0}$, $\mathcal{H}^{*}$ & null set $\{j: \xi_{j} = 0\}$; true support $[p] \setminus \mathcal{H}_{0}$ \\
		$c$ & true support size, $c = \abs{\mathcal{H}^{*}}$ \\
		\midrule
		\multicolumn{2}{l}{\textbf{Knockoff filters}} \\
		$\bs{\Sigma}$, $\mathrm{diag}(\bs{d})$, $\bs{G}$ & feature covariance; knockoff diagonal matrix; joint covariance \\
		$Z_{j}$, $\tilde{Z}_{j}$ & SHAP importance of feature $j$; of its knockoff \\
		$W_{j}$ & knockoff feature statistic (SHAP-DS: $Z_{j} - \tilde{Z}_{j}$) \\
		$\hat{\tau}$, offset & data-driven threshold~\eqref{eq:ko_threshold}; $0$ for knockoff, $1$ for knockoff+ \\
		$q$ & target FDR level \\
		\midrule
		\multicolumn{2}{l}{\textbf{Aggregation via e-values}} \\
		$K$ & number of knockoff realisations \\
		$e^{k}_{j}$, $e^{\textrm{avg}}_{j}$ & per-run e-value; average over $K$ runs \\
		$\hat{\jmath}$, $\hat{\jmath}^{*}$ & e-BH rejection count; size-constrained count \\
		$q_{\textrm{ebh}}$, $q_{\textrm{ebh}}^{\textrm{base}}$, $\hat{q}_{\textrm{ebh}}$ & e-BH selection level; base level for e-value construction; realised tuned level \\
		$q_{0}$, $\Delta q$, $q_{\max}$ & initial level, increment and ceiling of the adaptive search \\
		$s_{\min}$, $s_{\max}$ & minimum and maximum support size \\
		$\mathcal{C}$ & collection of covariance estimators \\
		\midrule
		\multicolumn{2}{l}{\textbf{Support sets}} \\
		(generic) $\hat{\mathcal{S}}$ & selected support set (of the stage under discussion) \\
		$\hat{\mathcal{S}}_{\textrm{ebh}}$, $\hat{\mathcal{S}}^{*}_{\textrm{ebh}}$ & unconstrained and size-constrained e-BH selections \\
		$\hat{\mathcal{S}}_{\textrm{RFE}}$, $\mathcal{S}^{*}$ & support after the RFE stage; final MCDM-selected equation \\
		\midrule
		\multicolumn{2}{l}{\textbf{Evaluation metrics and criteria}} \\
		FDR, FDP, eFDR & expected and realised false discovery proportion; realised value on one instance \\
		POWER, ePOWER & expected and realised true-positive rate \\
		mFDR & modified FDR (for knockoff filters with offset $= 0$) \\
		SC, RICOMP-MM & structural complexity; robust information complexity criterion \\
		CWC, PICP, NMPIL & coverage-width criterion; interval coverage probability; normalised mean interval length \\
		$\%$CE & percentage relative coefficient error \\
		$\alpha$ & significance level (knockoff-perturbed Wilcoxon test; selective inference) \\
		\bottomrule
	\end{tabular}
\end{table}

\end{document}